\newtheorem{claim}{Claim}[section]
\newtheorem{theorem}[claim]{Theorem}
\newtheorem{lemma}[claim]{Lemma}
\newtheorem{remark}[claim]{Remark}
\newtheorem{definition}[claim]{Definition}
\newtheorem{corollary}[claim]{Corollary}
\DeclareMathOperator{\probab}{\mathbb{P}}
\DeclareMathOperator{\expected}{\mathbb{E}}
\DeclareMathOperator{\pa}{\mathsf{PA}}
\DeclareMathOperator{\penergy}{\mathsf{pa}}
\DeclareMathOperator{\pt}{\mathsf{PT}}
\DeclareMathOperator{\heavy}{\mathsf{H}}
\DeclareMathOperator{\light}{\mathsf{L}}
\DeclareMathOperator{\reallight}{\mathsf{A}}
\DeclareMathOperator{\mst}{\mathsf{MST}}
\DeclareMathOperator{\functional}{\mathsf{F}}
\DeclareMathOperator{\diam}{diam}
\newcommand{\anglealpha}{30}
\newcommand{\meven}{M_{\operatorname{even}}}
\newcommand{\modd}{M_{\operatorname{odd}}}
\newcommand{\peven}{P_{\operatorname{even}}}
\newcommand{\podd}{P_{\operatorname{odd}}}
\newcommand{\eps}{\varepsilon}
\newcommand{\cgood}{c_{\textup{good}}}
\newcommand{\cbad}{c_{\textup{bad}}}
\newcommand{\stdexp}{\frac{d-p}d}
\newcommand{\nat}{\mathbb{N}}
\newcommand{\real}{\mathbb{R}}
\newcommand{\cball}{\ensuremath{c_{\textup{ball}}}}
\newcommand{\rball}{\ensuremath{r_{\textup{ball}}}}
\newcommand{\cedge}{\ensuremath{c_{\textup{edge}}}}
\newcommand{\redge}{\ensuremath{r_{\textup{edge}}}}
\newcommand{\rbig}{\ensuremath{r_{\textup{big}}}}
\title{Probabilistic Analysis of Power Assignments}
\author[1,2]{Maurits de Graaf}
\author[1]{Bodo Manthey}
\affil[1]{University of Twente, Department of Applied Mathematics, Enschede, Netherlands \authorcr \texttt{m.degraaf/b.manthey@utwente.nl}}
\affil[2]{Thales Nederland B.\ V., Huizen, Netherlands}
\begin{document}
\maketitle

\begin{abstract}
A fundamental problem for wireless ad hoc networks is the assignment of suitable transmission powers to
the wireless devices such that the resulting communication graph is connected. The goal is to minimize the total transmit power in order
to maximize the life-time of the network.
Our aim is a probabilistic analysis of this power assignment problem. We prove
complete convergence for arbitrary combinations of the dimension $d$ and the
distance-power gradient $p$. Furthermore, we prove that the expected approximation ratio of the
simple spanning tree heuristic is strictly less than its worst-case ratio of $2$.

Our main technical novelties are two-fold: First, we find a way to deal with the unbounded degree that
the communication network induced by the optimal power assignment can have.
Minimum spanning trees and traveling salesman tours, for which strong
concentration results are known in Euclidean space, have bounded degree, which is heavily exploited in their analysis.
Second, we apply a recent generalization of Azuma-Hoeff\-ding's inequality to prove complete convergence for the case $p \geq d$
for both power assignments and minimum spanning trees (MSTs).
As far as we are aware, complete convergence for $p > d$ has not been proved yet for any Euclidean functional.
\end{abstract}

\section{Introduction}

Wireless ad hoc networks have received significant attention
due to their many applications in, for instance, environmental monitoring or emergency disaster relief, where wiring
is difficult. Unlike wired networks, wireless ad hoc networks lack a backbone infrastructure. Communication takes place either through
single-hop transmission or by relaying through intermediate
nodes. We consider the
case that each node can
adjust its transmit power for the purpose of power conservation. In the assignment of transmit powers,
two conflicting effects have to be taken into account: if the
transmit powers are too low, the resulting network may be disconnected.
If the transmit powers are too high, the nodes run out of energy
quickly. The goal of the power assignment problem is to assign
transmit powers to the transceivers such that the resulting
network is connected and the sum of transmit powers is minimized~\cite{LloydEA}.

\subsection{Problem Statement and Previous Results}

We consider a set of vertices $X \subseteq [0,1]^d$, which represent the sensors, $|X|=n$, and assume
that $\|u-v\|^p$, for some $p \in \real$ (called the \emph{distance-power gradient} or \emph{path loss exponent}),
is the power required to successfully transmit a
signal from $u$ to $v$.
This is called the power-attenuation model, where the strength of the signal decreases with $1/r^p$ for distance $r$, and is a simple
yet very common model for power assignments in wireless networks~\cite{Rappaport:Wireless:2002}.
In practice, we typically have $1 \leq p \leq
6$~\cite{Pahlavan:OneSix:1995}.

A power assignment $\penergy: X \to [0, \infty)$ is an assignment of transmit powers
to the nodes in $X$.
Given $\penergy$, we have an edge between two nodes $u$ and $v$ if both $\penergy(x), \penergy(y)
\geq \|x-y\|^p$. If the resulting graph is connected, we call it a \emph{PA graph}.
Our goal is to find a PA graph and a corresponding power assignment $\penergy$ that
minimizes $\sum_{v \in X} \penergy(v)$. Note that any PA graph $G = (X,E)$ induces a power assignment
by $\penergy(v) = \max_{u \in X: \{u,v\} \in E} \|u-v\|^p$.

PA graphs can in many aspects be
regarded as a tree as we are only interested in connectedness, but it can contain more edges in general.
However, we can simply ignore edges and restrict ourselves to a spanning tree of the PA graph.

The minimal connected power assignment problem is NP-hard for $d
\geq 2$ and APX-hard for $d \geq
3$~\cite{ClementiEA:PowerRadio:2004}. For $d=1$, i.e., when the
sensors are located on a line, the problem can be solved by
dynamic programming~\cite{KirousisEA}. A simple approximation
algorithm for minimum power assignments is the minimum spanning
tree heuristic (MST heuristic), which achieves a tight worst-case
approximation ratio of $2$~\cite{KirousisEA}. This has been
improved by Althaus et al.~\cite{AlthausEA:RangeAssignment:2006},
who devised an approximation algorithm that achieves an
approximation ratio of $5/3$. A first average-case analysis of
the MST heuristic was presented
by de Graaf et al.~\cite{AveragePA}: First, they analyzed the expected approximation ratio
of the MST heuristic for the (non-geometric, non-metric) case of independent edge lengths.
Second, they proved convergence of the total power consumption of the assignment
computed by the MST heuristic for the special case of $p = d$, but not of the optimal power assignment.
They left as open problems, first, an average-case analysis of the MST heuristic
for random geometric instances and, second, the convergence of the value of the optimal power assignment.

Other power assignment problems studied include the $k$-station network coverage
problem of Funke et al.~\cite{FunkeEA:PowerTSP:2011}, where transmit powers are
assigned to at most $k$ stations such that $X$ can be reached from
at least one sender, or power assignments in the SINR model~\cite{HalldorssonEA,Kesselheim:SINR:2011}.

\subsection{Our Contribution}

In this paper, we conduct an average-case analysis of the optimal power assignment problem
for Euclidean instances. The points are drawn independently and uniformly
from the $d$-dimensional unit cube $[0,1]^d$. We believe that probabilistic analysis is a better-suited
measure for performance evaluation in wireless ad hoc networks, as the positions of the sensors -- in particular if deployed
in areas that are difficult to access -- are naturally random.

Roughly speaking, our contributions are as follows:
\begin{enumerate}
\setlength{\itemsep}{0mm}
\item We show that the power assignment functional has
sufficiently nice properties in order to apply Yukich's general framework
for Euclidean functionals~\cite{Yukich:ProbEuclidean:1998} to
obtain concentration results (Section~\ref{sec:properties}).

\item Combining these insights with
a recent generalization of the Azuma-Hoeff\-ding
bound~\cite{Warnke}, we obtain concentration of measure and complete convergence for all combinations of $d$ and $p \geq 1$,
 even for the case $p \geq d$ (Section~\ref{sec:convergence}).
 In addition, we obtain complete convergence for $p \geq d$ for minimum-weight spanning trees.
 As far as we are aware,
complete convergence for $p \geq d$ has not been proved yet for
such functionals. The only exception we are aware of are minimum spanning trees for the case $p=d$~\cite[Sect.~6.4]{Yukich:ProbEuclidean:1998}.

\item We provide a probabilistic analysis of the MST heuristic for the geometric case. We show that
  its expected approximation ratio is strictly smaller than its worst-case approximation ratio of $2$~\cite{KirousisEA} for any $d$ and $p$ (Section~\ref{sec:MST}).
\end{enumerate}
Our main technical contributions are two-fold: First, we introduce
a transmit power redistribution argument to deal with the
unbounded degree that graphs induced by the optimal transmit power
assignment can have. The unboundedness of the degree makes the
analysis of the power assignment functional $\pa$
challenging.
The reason is that removing a vertex can
cause the graph to fall into a large number of components and it might be costly to connect these
components without the removed vertex. In contrast, the degree of any minimum spanning tree,
for which strong concentration results are known in
Euclidean space for $p \leq d$, is bounded for every fixed $d$, and this is heavily
exploited in the analysis.
(The concentration result by de Graaf et al.~\cite{AveragePA}
for the power assignment obtained from the MST heuristic also exploits 
that MSTs have bounded degree.)

Second, we apply a recent
generalization of Azuma-Hoeffding's inequality by Warnke~\cite{Warnke} to
prove complete convergence for the case $p \geq d$ for both power
assignments and minimum spanning trees. We introduce the notion
of \emph{typically smooth} Euclidean functionals,
prove convergence of such functionals, and show that minimum spanning trees
and power assignments are typically smooth.
In this sense, our proof of complete convergence provides an alternative and generic way to prove complete convergence, whereas Yukich's proof
for minimum spanning trees is tailored to the case $p=d$. In order to prove complete convergence with our approach,
one only needs to prove convergence in mean, which is often much simpler than complete convergence, and typically smoothness.
Thus, we provide a simple method to prove complete convergence of Euclidean functionals along the lines of Yukich's result
that, in the presence of concentration of measure,
convergence in mean implies complete convergence~\cite[Cor.~6.4]{Yukich:ProbEuclidean:1998}.

\section{Definitions and  Notation}
\label{sec:def}

Throughout the paper, $d$ (the dimension) and $p$ (the distance-power gradient) are fixed constants.
For three points $x, y, v$, we by $\overline{xv}$ the line through
$x$ and $v$, and we denote by $\angle(x,v,y)$ the angle between $\overline{xv}$ and $\overline{yv}$.

A \emph{Euclidean functional} is a function $\functional^p$ for $p > 0$ that maps finite sets of points
in $[0,1]^d$ to some non-negative real number and is translation invariant and homogeneous of order $p$~\cite[page 18]{Yukich:ProbEuclidean:1998}.
From now on, we omit the superscript $p$ of Euclidean functionals, as $p$ is always fixed and clear from the context.

$\pa_B$ is the canonical boundary functional of $\pa$ (we refer to Yukich~\cite{Yukich:ProbEuclidean:1998} for
boundary functionals of other optimization problems):
given a
hyperrectangle $R \subseteq \real^d$ with $X \subseteq R$, this means that a solution is
an assignment $\penergy(x)$ of power to the nodes $x \in X$ such
that
\begin{itemize}
\setlength{\itemsep}{0mm}
\item $x$ and $y$ are connected if $\penergy(x), \penergy(y) \geq \|x-y\|^p$,
\item $x$ is connected to the boundary of $R$ if the distance of $x$ to the boundary of $R$ is at
most $\penergy(x)^{1/p}$, and
\item the resulting graph, called a \emph{boundary PA graph}, is either connected or consists of connected components
that are all connected to the boundary.
\end{itemize}
Then $\pa_B(X, R)$ is the minimum value for $\sum_{x \in X} \penergy(x)$ that can be achieved by a boundary PA graph.
Note that in the boundary
functional, no power is assigned to the boundary. 
It is straight-forward to see that $\pa$ and $\pa_B$ are Euclidean functionals for all $p> 0$ according to Yukich~\cite[page 18]{Yukich:ProbEuclidean:1998}.

For a hyperrectangle $R \subseteq \real^d$, let $\diam R = \max_{x,y \in R} \|x-y\|$ denote the diameter of $R$.
For a Euclidean functional $\functional$, let $\functional(n) =
\functional(\{U_1,\ldots, U_n\})$, where $U_1, \ldots, U_n$ are drawn uniformly and independently from $[0,1]^d$.
Let
\[
\gamma_{\functional}^{d,p} = \lim_{n \to \infty} \frac{\expected\bigl(\functional(n)\bigr)}{n^\stdexp}.
\]
(In principle, $\gamma_{\functional}^{d,p}$ need not exist, but it does exist for all functionals
considered in this paper.)

A sequence $(R_n)_{n \in \nat}$
of random variables \emph{converges in mean} to a constant $\gamma$
if $\lim_{n \to \infty} \expected(|R_n - \gamma|)= 0$.
The sequence $(R_n)_{n \in \nat}$ \emph{converges completely to a constant $\gamma$} if
we have
\[
\sum_{n=1}^\infty \probab\bigl(|R_n-\gamma| > \eps\bigr) < \infty
\]
for all $\eps > 0$.

Besides $\pa$, we consider two other Euclidean functions:
$\mst(X)$ denotes the length of the minimum spanning tree with lengths raised to the power $p$.
$\pt(X)$ denotes the total power consumption of the assignment obtained from the MST heuristic, again with lengths raised
to the power $p$.
The MST heuristic proceeds as follows: First, we compute a minimum spanning tree of $X$. 
The let $\penergy(x) = \max\{\| x-y\|^p \mid \text{$\{x,y\}$ is an edge of the MST}\}$.
By construction and a simple analysis, we have $\mst(X) \leq \pa(X) \leq \pt(X) \leq 2 \cdot \mst(X)$~\cite{KirousisEA}.

For $n \in \nat$, let $[n] = \{1, \ldots, n\}$.

\section{Properties of the Power Assignment Functional}
\label{sec:properties}

After showing that optimal PA graphs can have unbounded degree and providing a lemma that helps solving this problem,
we show that the power assignment functional fits into Yukich's framework for Euclidean functionals~\cite{Yukich:ProbEuclidean:1998}.

\subsection{Degrees and Cones}
\label{ssec:cones}
As opposed to minimum spanning trees, whose maximum degree is bounded from above by a constant that depends only on the dimension $d$,
a technical challenge is that the maximum degree in an optimal PA graphs cannot be bounded by a constant in the dimension. This holds even for the simplest case of $d=1$ and $p > 1$. We conjecture that the same holds also for $p=1$, but proving
this seems to be more difficult and not to add much.

\begin{lemma}
\label{lem:unbounded} For all $p > 1$, all integers $d \geq 1$,
and for infinitely many $n$, there exists instances of $n$ points
in $[0,1]^d$ such that the unique optimal PA graph is a tree with
a maximum degree of $n-1$.
\end{lemma}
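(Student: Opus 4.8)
The plan is to realize the star as a regular-simplex configuration and to certify its optimality by a nearest-neighbour lower bound, rather than by comparing against the individual competing trees one at a time. Fix any dimension $d\ge 1$ and set $n=d+1$. Place the hub at the origin $h=\mathbf 0$ and the $n-1=d$ leaves at the standard basis vectors $e_1,\dots,e_d$, so that the whole instance lies in $[0,1]^d$. Every hub--leaf distance is then exactly $1$, while every leaf--leaf distance is $\|e_i-e_j\|=\sqrt2>1$. (If one wants several values of $n$ inside a single cube, one may instead centre the hub at $(\tfrac12,\dots,\tfrac12)$ and take any subset of the $2d$ points $(\tfrac12,\dots,\tfrac12)\pm\tfrac12 e_i$ as leaves; these have hub distance $\tfrac12$ and pairwise distance at least $\tfrac1{\sqrt2}>\tfrac12$, so the same analysis applies to every $n$ with $2\le n\le 2d+1$.)

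The key tool is an elementary nearest-neighbour bound: for any connected graph on a point set $X$, the induced assignment satisfies $\penergy(x)\ge \nn(x)^p$ for every $x$, where $\nn(x)$ is the distance from $x$ to its nearest other point, simply because $x$ has at least one incident edge. Hence $\sum_{x}\penergy(x)\ge\sum_{x}\nn(x)^p$ for every PA graph. In our instance each point has its nearest neighbour at distance exactly $1$ (the hub for every leaf, and any leaf for the hub), so this lower bound equals $n\cdot 1^p=n$. The star centred at the hub assigns power $1$ to the hub (its farthest neighbour sits at distance $1$) and power $1$ to each leaf, for a total of $n$; it therefore attains the lower bound and is optimal. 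Since the minimum over connected graphs equals the minimum over spanning trees, no cheaper solution of any shape can exist.

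For uniqueness I would analyse the equality case. Any optimal PA graph must meet the bound with equality, which forces $\penergy(x)=\nn(x)^p$ for every vertex $x$; in particular each leaf $e_i$ must receive power exactly $1$, so all of its incident edges have length at most $1$. But the only point within distance $1$ of $e_i$ is the hub, since the other leaves lie at distance $\sqrt2$; thus each leaf is adjacent to the hub and to nothing else, and connectivity forces that adjacency to be present. The resulting graph is exactly the star: it is a tree, and the hub has degree $n-1$, the maximum possible. This proves the statement for the value $n=d+1$ in dimension $d$; letting $d$ range over all positive integers (or using the $\pm$ variant above) yields such instances for infinitely many $n$.

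The only real obstacle is the uniqueness claim, i.e.\ ruling out every competing connected graph simultaneously; the nearest-neighbour bound is precisely what makes this clean, because it reduces both global optimality and uniqueness to a purely local, per-vertex condition. Note that the argument uses only that $t\mapsto t^p$ is increasing, so it in fact holds for all $p>0$, and the hypothesis $p>1$ is not needed for this particular construction. The essential geometric feature is the separation factor $\sqrt2$ between the leaf--leaf and hub--leaf distances: it is exactly this gap that forces each leaf onto the hub and thereby breaks every relaying alternative, which would otherwise undercut a star.
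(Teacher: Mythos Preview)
Your nearest-neighbour argument is clean and your uniqueness analysis for the simplex instance is correct, but the construction does not prove the lemma as stated. The quantifiers read: for every fixed $p>1$ and every fixed dimension $d\ge 1$, there are infinitely many $n$ admitting such an instance in $[0,1]^d$. Your simplex construction produces only $n=d+1$, and the $\pm$-variant still caps out at $n\le 2d+1$; either way you obtain only finitely many values of $n$ once $d$ is held fixed. The sentence ``letting $d$ range over all positive integers \dots\ yields such instances for infinitely many $n$'' varies the wrong quantifier. In fact your approach cannot be rescued by a different placement: any configuration in which every leaf has the hub as its strict nearest neighbour requires the leaves to subtend pairwise angles greater than $\pi/3$ at the hub, and the number of such directions is bounded by a function of $d$ alone. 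So the geometric idea you rely on---uniform hub--leaf distance with strictly larger leaf--leaf distances---is intrinsically dimension-limited.

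The paper sidesteps this by working on a line (which then embeds into $[0,1]^d$ for every $d$) and, crucially, by \emph{not} making the hub--leaf distances equal. It takes $n=2m+1$ points $a_{-m},\dots,a_0,\dots,a_m$ with $a_0=0$, $a_{-i}=-a_i$, and $a_{i+1}\gg a_i$, so the leaves sit at wildly different scales. Optimality and uniqueness are then argued not via a global nearest-neighbour bound (which would be far from tight here, since $\nn(a_i)\approx a_i-a_{i-1}\ll a_i$) but by a dominance argument: because $a_m$ dwarfs every other coordinate, any alternative to connecting both $a_m$ and $a_{-m}$ directly to $0$ incurs a cost penalty of order $a_m^{p-1}a_i$, which beats all lower-order terms precisely when $p>1$. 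One then peels off the outermost pair and recurses. This is why the hypothesis $p>1$ is genuinely used in the paper, whereas your argument (as you note) does not need it---another sign that the two constructions are addressing different regimes.
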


\begin{proof}
Let $n$ be odd, and let $2m+1=n$. Consider the instance
\[
  X_m = \{a_{-m}, a_{-m+1}, \ldots, a_0, \ldots, a_{m-1}, a_m\}
\]
that consists of $m$ positive integers $a_1, \ldots, a_m$, $m$ negative integers
$a_{-i} = -a_{i}$ for $1 \leq i \leq m$, and $a_0 = 0$.
We assume that $a_{i+1} \gg a_{i}$ for all $i$.
By scaling and shifting, we can achieve that $X$ fits into the unit interval.

A possible solution $\penergy: X_m \rightarrow \real^{+}$ is
assigning power $a_i^p$ to $a_i$ and $a_{-i}$ for $1 \leq i \leq
m$ and power $a_m^p$ to $0$. In this way, all points are connected
to $0$. We claim that this power assignment is the unique optimum.
As $a_m = -a_{-m} \gg |a_{i}|$ for $|i| < m$,
the dominant term in the power consumption $\Psi_m$ is $3 a_m^p$
(the power of $a_m$, $a_{-m}$, and $a_0 = 0$). Note that no other
term in the total power consumption involves $a_m$.

We show that $a_m$ and $a_{-m}$ must be connected to $0$ in an optimal PA graph.
First, assume that $a_m$ and $a_{-m}$ are connected to different vertices. Then the total power consumption increases
to about $4 a_m^p$ because $a_{\pm m}$ is very large compared to $a_i$ for all $|i| < m$
(we say that $a_m$ is dominant).
Second, assume that $a_m$ and $a_{-m}$ are connected to $a_i$ with $i \neq 0$. Without loss of generality, we assume that $i>0$ and, thus,
$a_i > 0$.
Then the total power consumption is at least
$2 \cdot (a_m + a_i)^p + (a_m - a_i)^p \geq 3a_m^p + 2a_m^{p-1}a_i$. Because $a_m$ is dominant, this is strictly
more than $\Psi_m$ because it contains the term $2a_m^{p-1}a_i$, which contains the very large $a_m$ because $p>1$.

From now on, we can assume that $0 = a_0$ is connected to $a_{\pm
m}$. Assume that there is some point $a_i$ that is connected to
some $a_j$ with $i, j \neq 0$. Assume without loss of generality
that $i > 0$ and $|i| \geq |j|$. Assume further that $i$ is
maximal in the sense that there is no $|k| > i$ such that $a_k$ is
connected to some vertex other than $0$. We set $a_i$'s power to
$a_i^p$ and $a_j$'s power to $|a_j|^p$. Then both are connected to
$0$ as $0$ has already sufficient power to send to both.
Furthermore, the PA graph is still connected: All vertices $a_k$
with $|k| > i$ are connected to $0$ by the choice of $i$. If some
$a_k$ with $|k| \leq i$ and $k \neq i, j$ was connected to $a_i$
before, then it has also sufficient power to send to $0$.

The power balance remains to be considered: If $j = -i$, then the
energy of both $a_i$ and $a_j$ has been strictly decreased.
Otherwise, $|j| < i$. The power of $a_i$ was at least
$(a_i-a_j)^p$ before and is now $a_i^p$. The power of $a_j$ was at
least $(a_i-a_j)^p$ before and is now $a_j^p$. Since $a_i$
dominates all $a_j$ for $|j| < i$, this decreases the power.
\end{proof}

The unboundedness of the degree of PA graphs make the analysis of the functional $\pa$ challenging. The technical reason is
that removing a vertex can cause the PA graph to fall into a non-constant number of components.
The following lemma is the crucial ingredient to get over this ``degree hurdle''.

\begin{lemma}
\label{lem:conefactor}
Let $x, y \in X$, let $v \in [0,1]^d$, and assume that $x$ and $y$ have power $\penergy(x) \geq \|x-v\|^p$ and $\penergy(y) \geq \|y-v\|^p$,
respectively. Assume further that $\|x - v\| \leq \|y-v\|$
and that $\angle(x,v,y) \leq \alpha$ with $\alpha \leq \pi/3$.
Then the following holds:
\begin{enumerate}[(a)]
\setlength{\itemsep}{0mm}
\item $\penergy(y) \geq \|x-y\|^p$, i.e., $y$ has sufficient power to reach $x$.
\label{reachcloser}
\item If $x$ and $y$ are not connected (i.e., $\penergy(x)< \|x - y \|^p$), then $\|y - v\| > \frac{\sin(2\alpha)}{\sin(\alpha)} \cdot \|x-v\|$.
\label{sqrtthree}
\end{enumerate}
\end{lemma}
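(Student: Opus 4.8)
The entire statement reduces to the law of cosines applied to the triangle with vertices $x$, $v$, $y$. Write $a = \|x-v\|$, $b = \|y-v\|$, $c = \|x-y\|$, and let $\theta = \angle(x,v,y)$, so that $c^2 = a^2 + b^2 - 2ab\cos\theta$. The hypotheses give $a \leq b$ and $\theta \leq \alpha \leq \pi/3$, and the latter yields $\cos\theta \geq \cos\alpha \geq \tfrac12$. I will also use the identity $\frac{\sin(2\alpha)}{\sin\alpha} = 2\cos\alpha$, so that the bound in part~(\ref{sqrtthree}) is exactly $b > 2\cos\alpha \cdot a$.

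For part~(\ref{reachcloser}) the plan is to show $b \geq c$; since $\penergy(y) \geq \|y-v\|^p = b^p$, this immediately gives $\penergy(y) \geq c^p = \|x-y\|^p$. Comparing $c^2$ with $b^2$ via the law of cosines, the inequality $b \geq c$ is equivalent to $a \leq 2b\cos\theta$ (after cancelling a factor $a > 0$). This in turn follows from the chain $a \leq b \leq 2b\cos\theta$, where the second inequality is exactly where the assumption $\cos\theta \geq \tfrac12$ enters.

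For part~(\ref{sqrtthree}), I would first observe that by part~(\ref{reachcloser}) we already have $\penergy(y) \geq \|x-y\|^p$, so if $x$ and $y$ are not connected the deficiency must lie on $x$'s side, i.e.\ $\penergy(x) < \|x-y\|^p = c^p$. Combined with the hypothesis $\penergy(x) \geq \|x-v\|^p = a^p$, this gives $a < c$. Feeding $a^2 < c^2 = a^2 + b^2 - 2ab\cos\theta$ into the law of cosines and cancelling a factor $b > 0$ yields $2a\cos\theta < b$, and then $\cos\theta \geq \cos\alpha$ gives the desired $b > 2a\cos\alpha = \frac{\sin(2\alpha)}{\sin\alpha}\,a$.

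There is no genuinely hard step here: both parts are one-line manipulations of the law of cosines. The only points requiring care are that $\alpha \leq \pi/3$ is precisely what part~(\ref{reachcloser}) needs (it guarantees $2\cos\theta \geq 1$, so the shorter side $a$ cannot force $c$ to exceed $b$), and that part~(\ref{reachcloser}) must be invoked first within part~(\ref{sqrtthree}) in order to pin the non-connectivity on $x$ rather than $y$. The degenerate case $a = 0$ (that is, $x = v$) is either handled trivially, since then $c = b$, or excluded outright because the angle at $v$ is then undefined.
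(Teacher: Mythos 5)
Your proof is correct and matches the paper's in substance: both parts reduce to elementary trigonometry in the triangle $xvy$, and the identity $\sin(2\alpha)/\sin(\alpha)=2\cos\alpha$ shows that your law-of-cosines bound is exactly the bound the paper obtains via the law of sines in the auxiliary isosceles triangle $vxz$. Your direct computation even avoids the paper's projection onto the plane spanned by $x-v$ and $y-v$, and it makes the strictness of the inequality in part~(b) and the degenerate case $x=v$ explicit, which the paper leaves implicit.
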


\begin{proof}
Because $\alpha \leq \pi/3$, we have $\|y-v\| \geq \|y-x\|$.
This implies~\eqref{reachcloser}.

The point $x$ has sufficient power to reach any point within a
radius of $\|x-v\|$ of itself. By~\eqref{reachcloser}, point $y$
has sufficient power to send to $x$. Thus, if $y$ is within a
distance of $\|x-v\|$ of $x$, then also $x$ can send to $y$ and,
thus, $x$ and $y$ are connected. We project $x$, $y$, and $v$ into
the two-dimensional subspace spanned by the vectors $x-v$ and
$y-v$. This yields a situation as depicted in
Figure~\ref{fig:conefactor}. Since $\penergy(x) \geq \|x-v\|^p$, point
$x$ can send to all points in the light-gray region, thus in
particular to all dark-gray points in the cone rooted at $v$. In
particular, $x$ can send to all points that are no further away
from $v$ than the point $z$. The triangle $vxz$ is isosceles.
Thus, also the angle at $z$ is $\alpha$ and the angle at $x$ is
$\beta = \pi - 2\alpha$. Using the law of sines together
with $\sin(\beta) = \sin(2\alpha)$ yields that $\|z-v\| =
\frac{\sin(2\alpha)}{\sin(\alpha)} \cdot \|x-v\|$, which completes
the proof of \eqref{sqrtthree}.
\end{proof}

\begin{figure}[t]
\centering
\begin{tikzpicture}[scale=2]
\tikzstyle{Thickness}=[line width=0.8pt]
\coordinate (v) at (0,0);
\coordinate (x) at (0:1);
\path[name path=xcircle] (x) circle (1cm);
\fill[white!85!black] (x) circle (1cm);

\path[name path=rightborder] (-\anglealpha:0.1) -- (-\anglealpha:2);
\path[name path=leftborder] (\anglealpha:0.1) -- (\anglealpha:2);
\path[-, name intersections={of=rightborder and xcircle, by=rightfarthest}];
\path[-, name intersections={of=leftborder and xcircle, by=leftfarthest}];

\fill[white!70!black] (v) -- (rightfarthest) arc (-2*\anglealpha:2*\anglealpha:1) -- cycle;

\draw[dotted, Thickness] (rightfarthest) arc (-\anglealpha:\anglealpha:{sin(2*\anglealpha)/sin(\anglealpha)});

\draw[Thickness] (-\anglealpha:2) -- (v) -- (\anglealpha:2);
\draw[Thickness] (v) -- (0:2.1);
\draw[Thickness] (-\anglealpha:0.4) arc (-\anglealpha:\anglealpha:0.4);
\draw[Thickness] ($(rightfarthest)+(180-2*\anglealpha:0.4)$) arc (180-2*\anglealpha:180-\anglealpha:0.4);
\draw[Thickness] ($(x)+(180:0.28)$) arc (180:360-2*\anglealpha:0.28);

\node at (-0.5*\anglealpha:0.3) {$\alpha$};
\node at (0.5*\anglealpha:0.3) {$\alpha$};
\node at ($(rightfarthest)+(180-1.5*\anglealpha:0.3)$) {$\alpha$};
\node at ($(x)+(270-\anglealpha:0.16)$) {$\beta$};

\draw[Thickness, dashed] (leftfarthest) -- (x) -- (rightfarthest);

\tikzstyle{Node}=[circle, fill=black, inner sep=0pt, minimum size=2mm]
\node[Node] (nv) at (v) {};
\node[left] at (nv) {$v$};
\node[Node] (nx) at (x) {};
\node[above] at (nx) {$x$};
\node[Node] (nz) at (rightfarthest) {};
\node[below] at (rightfarthest) {$z$};
\end{tikzpicture}
\caption{Point $x$ can send to all points in the gray area as it can send to $v$. In particular, $x$ can send to all points that are no further away from $v$ than $z$.
This includes all points to the left of the dotted line. The dotted line consists of points at a distance
of $\frac{\sin(2\alpha)}{\sin(\alpha)} \cdot \|x-v\|$ of $v$.}
\label{fig:conefactor}
\end{figure}
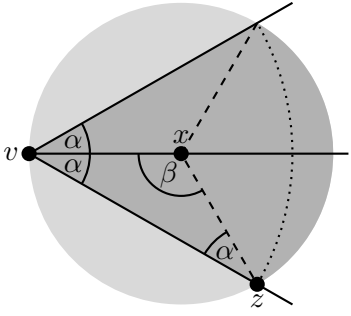

For instance, $\alpha = \pi/6$ results in a factor of
$\sqrt{3} = \sin(\pi/3)/\sin(\pi/6)$. In the following, we
invoke this lemma always with $\alpha = \pi/6$, but this
choice is arbitrary as long as $\alpha < \pi/3$, which causes
$\sin(2\alpha)/\sin(\alpha)$ to be strictly larger than $1$.

\subsection{Deterministic Properties}
\label{ssec:deterministic}

In this section, we state properties of the power assignment functional.
Subadditivity (Lem\-ma~\ref{lem:subadd}), superadditivity (Lemma~\ref{lem:supadd}),
and growth bound (Lem\-ma~\ref{lem:growth}) are straightforward.

\begin{lemma}[subadditivity]
\label{lem:subadd}
$\pa$ is subadditive~\cite[(2.2)]{Yukich:ProbEuclidean:1998} for all $p>0$ and all $d \geq 1$,
i.e., for any point sets $X$ and $Y$ and any hyperrectangle $R\subseteq \real^d$ with $X, Y \subseteq R$, we have
\[
\pa(X \cup Y) \leq \pa(X) + \pa(Y) + O\bigl((\diam R)^p\bigr).
\]
\end{lemma}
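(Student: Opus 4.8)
The plan is to prove subadditivity by exhibiting a valid power assignment on $X \cup Y$ whose total cost is bounded by $\pa(X) + \pa(Y)$ plus a term of order $(\diam R)^p$. First I would take optimal power assignments $\penergy_X$ for $X$ and $\penergy_Y$ for $Y$ separately, so that the induced PA graph on $X$ is connected with cost $\pa(X)$ and likewise for $Y$. These two assignments, used together, give a graph on $X \cup Y$ that is connected \emph{within} each of $X$ and $Y$, but the two connected components need not be joined to each other. The task is therefore to bridge the two components at a cost that is only $O((\diam R)^p)$.

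To build the bridge, I would pick any vertex $x_0 \in X$ and any vertex $y_0 \in Y$ and raise their powers so that they can communicate directly: set $\penergy(x_0) \geq \|x_0 - y_0\|^p$ and $\penergy(y_0) \geq \|x_0 - y_0\|^p$, keeping the larger of the old and new values at each of these two vertices. Since $x_0, y_0 \in R$, we have $\|x_0 - y_0\| \leq \diam R$, so the additional power spent at each endpoint is at most $(\diam R)^p$, contributing a total of at most $2 (\diam R)^p = O((\diam R)^p)$. All other vertices keep their original power from $\penergy_X$ or $\penergy_Y$. The resulting assignment on $X \cup Y$ has total cost at most $\pa(X) + \pa(Y) + 2(\diam R)^p$.

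It remains to check that this assignment yields a connected PA graph on $X \cup Y$. The internal connectivity of $X$ and of $Y$ is preserved because we only ever \emph{increase} powers, and increasing powers can only add edges, never remove them. The new edge $\{x_0, y_0\}$ is present because both endpoints now have power at least $\|x_0 - y_0\|^p$, so it links the $X$-component to the $Y$-component. Hence every vertex of $X \cup Y$ lies in a single connected component, so the graph is a valid PA graph and $\pa(X \cup Y)$ is at most its cost.

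I expect no serious obstacle here; the argument is a direct construction and the only point requiring a moment's care is the monotonicity observation that raising individual powers cannot destroy existing edges (an edge $\{u,v\}$ requires $\penergy(u), \penergy(v) \geq \|u-v\|^p$, a condition that is preserved under increasing either power). The diameter bound $\|x_0 - y_0\| \leq \diam R$ then immediately gives the claimed $O((\diam R)^p)$ overhead, completing the proof.
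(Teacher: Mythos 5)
Your proposal is correct and follows essentially the same route as the paper: take optimal PA graphs for $X$ and $Y$ separately and join them by a single bridging edge of length at most $\diam R$, at an extra power cost of $O\bigl((\diam R)^p\bigr)$. Your write-up just spells out the monotonicity observation (raising powers never destroys edges) that the paper leaves implicit.
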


\begin{proof}
Let $T_X$ and $T_Y$ be optimal PA graphs for $X$ and $Y$, respectively.
We connect these graphs by an edge of length at most $\diam R$.
This yields a solution for $X \cup Y$, i.e., a PA graph, and the additional costs are
bounded from above by the length of this edge to the power $p$, which is
bounded by $(\diam R)^p$.
\end{proof}

\begin{lemma}[superadditivity]
\label{lem:supadd}
$\pa_B$ is superadditive for all $p \geq 1$ and $d \geq 1$~\cite[(3.3)]{Yukich:ProbEuclidean:1998}, i.e.,
for any $X$, hyperrectangle $R\subseteq \real^d$ with $X \subseteq R$ and partition of
$R$ into hyperrectangles $R_1$ and $R_2$, we have
\[
   \pa_B^p(X, R) \geq \pa^p_B(X \cap R_1, R_1) + \pa^p_B(X \cap R_2, R_2) .
\]
\end{lemma}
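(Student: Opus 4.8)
The plan is to prove superadditivity of the boundary functional $\pa_B$ by showing that any optimal boundary PA graph for $(X,R)$ can be split into feasible boundary PA graphs for $(X\cap R_1, R_1)$ and $(X\cap R_2, R_2)$ without increasing the total power. The key observation is that the boundary functional only requires each connected component to reach \emph{some} boundary of its enclosing rectangle, so cutting along the hyperplane separating $R_1$ and $R_2$ creates new boundary to which points can connect ``for free.''

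First I would fix an optimal boundary PA graph $G$ for $(X,R)$ with power assignment $\penergy$, achieving value $\pa_B(X,R)$. Let $H$ be the hyperplane that separates $R$ into $R_1$ and $R_2$. I would then restrict $G$ to each side: for $R_1$, keep exactly the power values $\penergy(x)$ for $x \in X\cap R_1$, and analogously for $R_2$. The edges of $G$ within $X\cap R_i$ are retained, while edges crossing $H$ are dropped. The crucial point is feasibility of these restricted assignments as \emph{boundary} PA graphs: I must argue that every component on the $R_1$ side is either connected within $R_1$ or reaches the boundary of $R_1$.

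The main step is verifying that dropping the crossing edges does not destroy feasibility, and here the new boundary $H$ does the work. Consider any edge $\{x,y\}$ of $G$ with $x \in R_1$ and $y \in R_2$; it crosses $H$, so the distance from $x$ to $H$ is at most $\|x-y\| \leq \penergy(x)^{1/p}$, meaning $x$ already has enough power to reach the boundary $H$ of $R_1$. The same holds for $y$ with respect to $R_2$. Similarly, any $x \in R_1$ that was connected to the \emph{outer} boundary of $R$ is still within $\penergy(x)^{1/p}$ of the boundary of $R_1$, since the relevant boundary face of $R$ is also a face of $R_1$. Thus every point that loses its route to a boundary-connected component through a crossing edge acquires a direct connection to the newly exposed boundary $H$. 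This shows that the restricted assignments are feasible boundary PA graphs for $(X\cap R_1, R_1)$ and $(X\cap R_2, R_2)$, so each has value at least $\pa_B(X\cap R_i, R_i)$. Since $\penergy$ restricted to the two sides partitions the original powers, we get
\[
\pa_B(X,R) = \sum_{x \in X}\penergy(x) = \sum_{x \in X\cap R_1}\penergy(x) + \sum_{x \in X\cap R_2}\penergy(x) \geq \pa_B(X\cap R_1, R_1) + \pa_B(X\cap R_2, R_2),
\]
which is the claim.

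The delicate point I would watch carefully is the direction of the inequality together with the boundary semantics: superadditivity holds precisely because splitting can only \emph{help} (adding boundary never costs power), so the optimal cost on the whole cannot be smaller than the sum of the optimal costs on the parts. I would make sure the feasibility argument does not accidentally require reassigning or lowering powers in a way that breaks the exact bookkeeping $\sum_{x} \penergy(x)$; the cleanest route is to keep each $\penergy(x)$ unchanged and argue purely that feasibility is preserved, so that the power sum splits exactly and the two restricted values lower-bound the two boundary functionals. The role of $p \geq 1$ here is mild but worth noting: it is needed elsewhere for the overall framework, though the combinatorial splitting argument itself goes through for the stated range.
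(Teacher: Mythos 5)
Your proof is correct and follows essentially the same route as the paper: restrict the optimal boundary PA graph to $R_1$ and $R_2$ and observe that any vertex incident to a crossing edge already has enough power to reach the newly exposed separating hyperplane, while vertices attached to the outer boundary of $R$ remain attached to the boundary of their sub-rectangle. Your bookkeeping via unchanged vertex powers $\penergy(x)$ (so the sum splits exactly) is in fact slightly cleaner than the paper's edge-splitting phrasing, which invokes $p \geq 1$ through $a^p + b^p \leq (a+b)^p$ --- a step your version does not need.
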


\begin{proof}
Let $T$ be an optimal boundary PA graph for $(X, R)$.
This graph restricted to $R_1$ and $R_2$ yields
boundary graphs $T_1$ and $T_2$ for $(X \cap R_1, R_1)$ and $(X \cap R_2, R_2)$, respectively.
The sum of the costs of $T_1$ and $T_2$ is upper bounded by
the costs of $T$ because $p \geq 1$ (splitting an edge at the border between
$R_1$ and $R_2$ results in two edges whose sum of lengths to the power $p$
is at most the length of the original edge to the power $p$).
\end{proof}

\begin{lemma}[growth bound]
\label{lem:growth}
For any $X \subseteq [0,1]^d$ and $0 < p$ and $d \geq 1$, we have
\[
\pa_B(X) \leq \pa(X) \leq O\left(\max\left\{n^{\stdexp}, 1\right\}\right).
\]
\end{lemma}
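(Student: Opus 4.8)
The left inequality is a pure relaxation: an optimal PA graph for $X$, together with its induced power assignment, is already a feasible boundary PA graph for $(X,[0,1]^d)$, because a connected graph vacuously satisfies the ``connected, or all components touch the boundary'' condition. Taking the minimum over boundary PA graphs can therefore only decrease the value, so $\pa_B(X) \le \pa(X)$, and it remains to bound $\pa(X)$ from above.

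For the upper bound I would exhibit a single cheap PA graph by a grid construction. Fix $m = \lceil n^{1/d}\rceil$ and partition $[0,1]^d$ into $m^d$ axis-parallel subcubes $Q_i$ of side $1/m$, each of diameter $\sqrt d/m$. Inside a subcube containing $n_i = |X \cap Q_i|$ points, connect all points in a star to one representative; this is a connected PA graph on $X\cap Q_i$ in which every node needs power at most $(\sqrt d/m)^p$, so $\pa(X \cap Q_i) \le n_i\,(\sqrt d/m)^p$. Summing over subcubes, the total intra-cube power is at most $(\sqrt d)^p\, m^{-p}\sum_i n_i = (\sqrt d)^p\, n\, m^{-p} = O(n^{\stdexp})$, using $m \ge n^{1/d}$.

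It remains to merge the per-cube stars into one connected PA graph, and this is where the real work sits. My plan is to apply subadditivity (Lemma~\ref{lem:subadd}) along a balanced bisection of the cube: repeatedly halve the current box $B$ into two sub-boxes and charge $O\bigl((\diam B)^p\bigr)$ for the edge reconnecting the two parts, recursing until, after $d\log_2 m$ bisections, I reach the subcubes above (used as leaves via the star bound). A box at bisection depth $k$ has diameter $O(2^{-k/d})$ and there are $O(2^{k})$ merges at that depth, so the merges contribute $\sum_{k} O(2^{k})\cdot O(2^{-kp/d}) = O\!\left(\sum_k 2^{k(1-p/d)}\right)$, a geometric series with ratio $2^{\,1-p/d}$. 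Its sum is dominated by the coarsest scale when $p<d$, giving $O(m^{d-p}) = O(n^{\stdexp})$, and by the finest (constant-diameter) scale when $p>d$, giving $O(1)$; the borderline $p=d$ is the delicate case. Combining the two contributions and the choice $m=\lceil n^{1/d}\rceil$, which balances the intra-cube term $O(n\,m^{-p})$ against the merge term $O(m^{d-p})$, both equal to $O(n^{\stdexp})$, yields $\pa(X) = O\bigl(\max\{n^{\stdexp},1\}\bigr)$.

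The main obstacle is exactly this merging step, and it is the theme of the paper: since optimal PA graphs can have unbounded degree (Lemma~\ref{lem:unbounded}), deleting a point can split the graph into many pieces, and when subcubes are empty one cannot simply reconnect along grid neighbours — a naive linear traversal of the nonempty cubes can incur long edges that are no longer controlled by $m^{d-p}$ once $p>1$. The balanced-bisection charging sidesteps this because it only ever pays for the diameter of the \emph{current} box (and nothing at all when a half is empty), so the reconnection cost telescopes into the geometric series regardless of how the points cluster. Alternatively, one can short-circuit the whole estimate by invoking $\pa(X)\le 2\,\mst(X)$ and the standard growth bound for the minimum spanning tree functional, which already packages the reconnection cost. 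Everything apart from this telescoping estimate I expect to be routine.
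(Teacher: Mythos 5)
Your left inequality and your closing remark together reproduce the paper's entire proof: the paper simply observes $\pa_B(X)\le\pa(X)$ and then cites $\pa(X)\le 2\,\mst(X)$ together with Yukich's growth bound (3.7) for the MST functional, which already holds for all $p>0$. So the ``short-circuit'' you mention in one sentence at the end is the intended argument, and it is complete.

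Your primary route --- the grid of $m^d$ subcubes with intra-cube stars plus a balanced-bisection merge --- is a genuinely different, self-contained construction, and it is correct for $p\neq d$. But it has a real gap exactly at the borderline you flag: at $p=d$ the geometric series $\sum_k 2^{k(1-p/d)}$ has ratio $1$, so the merge cost is $\Theta(\log n)$, whereas the lemma demands $O\bigl(\max\{n^{\stdexp},1\}\bigr)=O(1)$ there. This is not a removable artifact of your charging scheme that a more careful telescoping fixes; eliminating the logarithm at $p=d$ requires the classical (and nontrivial) fact that the sum of $d$-th powers of the edge lengths of a spanning structure in $[0,1]^d$ can be kept $O(1)$ --- which is precisely the content of the MST growth bound the paper imports. (A minor side note: your ``coarsest''/``finest'' labels for which end of the series dominates are swapped, though the resulting bounds $O(m^{d-p})$ for $p<d$ and $O(1)$ for $p>d$ are right.) So either present the two-line reduction via $\pa\le 2\,\mst$ as the proof, or, if you want the constructive version to stand alone, you must supply a separate argument for $p=d$.
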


\begin{proof}
This follows from the growth bound for the MST~\cite[(3.7)]{Yukich:ProbEuclidean:1998}, because
$\mst(X) \leq \pa(X) \leq 2\mst(X)$ for all $X$~\cite{KirousisEA}.
The inequality $\pa_B(X) \leq \pa(X)$ holds obviously.
\end{proof}

The following lemma shows that $\pa$ is smooth, which roughly means that adding or removing a few points
does not have a huge impact on the function value. Its proof requires
Lemma~\ref{lem:conefactor} to deal with the fact that optimal PA graphs can have unbounded degree.

\begin{lemma}
\label{lem:smooth}
The power assignment functional
$\pa$ is smooth for all $0 < p \leq d$~\cite[(3.8)]{Yukich:ProbEuclidean:1998},
i.e.,
\[
\bigl| \pa^p(X \cup Y) - \pa^p(X)\bigr|  = O\left(|Y|^\stdexp\right)
\]
for all point sets $X, Y \subseteq [0,1]^d$.
\end{lemma}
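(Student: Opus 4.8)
The plan is to prove smoothness by establishing both inequalities
$\pa^p(X \cup Y) \leq \pa^p(X) + O(|Y|^{\stdexp})$ and
$\pa^p(X) \leq \pa^p(X \cup Y) + O(|Y|^{\stdexp})$ separately.
Throughout, write $m = |Y|$. The first inequality is the easier direction:
starting from an optimal PA graph for $X$, I would insert the points of $Y$
and connect them cheaply. Since $Y \subseteq [0,1]^d$, I can partition the cube
into roughly $m$ subcubes of side length $\Theta(m^{-1/d})$ and use a
space-filling-curve or grid-traversal argument to wire up the $Y$-points (together with
enough anchoring to $X$) at total additional cost $O(m \cdot (m^{-1/d})^p) =
O(m^{\stdexp})$; this is exactly where the hypothesis $p \leq d$ is needed to
keep the exponent $\stdexp$ nonnegative and the bound meaningful. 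This direction
does not require the degree machinery at all.

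The hard direction, and the main obstacle, is
$\pa^p(X) \leq \pa^p(X \cup Y) + O(m^{\stdexp})$: here I start from an optimal
PA graph $G$ for $X \cup Y$ and must delete the $m$ points of $Y$ while repairing
connectivity among the surviving $X$-points without blowing up the power. The
difficulty flagged in the text is that a single deleted vertex $v \in Y$ may have
unbounded degree in $G$, so removing it can shatter the graph into
non-constantly many components, and naively reconnecting them could be expensive.
This is precisely the ``degree hurdle,'' and the tool to clear it is
Lemma~\ref{lem:conefactor}. The plan is to cover the directions around each
deleted vertex $v$ by a constant number $C = C(d)$ of cones of half-angle
$\alpha = \pi/6$ (a constant depending only on $d$). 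Within one cone, order the
neighbors $x_1, x_2, \ldots$ of $v$ by increasing distance from $v$. By
Lemma~\ref{lem:conefactor}\eqref{reachcloser}, the farther of any two neighbors in a
common cone already has enough power to reach the nearer one; and by
part~\eqref{sqrtthree}, if two consecutive neighbors in the sorted cone are
\emph{not} already connected, their distances to $v$ differ by a factor at least
$\sin(2\alpha)/\sin(\alpha) = \sqrt{3} > 1$.

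Using this, within each cone I would redistribute power so that the neighbors of
$v$ form a connected chain routed through the closest neighbor, rather than
through $v$. Concretely, the key accounting observation is that the distances to
$v$ of the neighbors that actually require a power increase grow geometrically
by the factor $\sqrt 3$, so their $p$-th powers form a geometric series dominated
by its largest term. Hence the total extra power charged inside one cone is at
most a constant times the power $v$ was already spending on its farthest
neighbor in that cone, i.e.\ $O(\penergy(v))$, and summing over the $C$ cones
gives $O(\penergy(v))$ per deleted vertex. Summing over all $v \in Y$, the total
repair cost is $O\bigl(\sum_{v \in Y} \penergy(v)\bigr)$. The final step is to
bound this sum: since the deleted vertices all lie in $[0,1]^d$ and their powers
come from an optimal assignment, I would invoke the growth bound
(Lemma~\ref{lem:growth}) applied to the $m$ points of $Y$ — more precisely, bound
the sum of the relevant powers by $O(\mst(Y)) = O(m^{\stdexp})$ via the standard
MST growth estimate, again using $p \le d$. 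I expect the delicate points to be
twofold: first, handling neighbors of $v$ that are themselves in $Y$ (so that
several deletions interact), which I would manage by processing the $Y$-points in
a fixed order and only ever reconnecting through surviving vertices or through
already-processed cheap chains; and second, making the geometric-series charging
argument rigorous when a neighbor needs to raise its power to reach the cone's
anchor rather than $v$, where Lemma~\ref{lem:conefactor}\eqref{reachcloser}
guarantees the needed reachability so that only a constant-factor overhead is
incurred.
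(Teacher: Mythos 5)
Your easy direction and your use of the cone decomposition with Lemma~\ref{lem:conefactor} match the paper, but the final accounting step in your hard direction has a genuine gap. You repair each deleted vertex $v$ \emph{completely} inside its cones, at a cost you correctly estimate as $O(\penergy(v))$ --- but with a constant factor larger than $1$ (the geometric series over the chain plus the anchor boosts sums to at least $\sum_i \ell_i^p$, which can exceed $\ell^p \leq \penergy(v)$), so this cost is \emph{not} absorbed by the power freed when $v$ is removed, and you are left having to show $\sum_{v \in Y}\penergy(v) = O\bigl(|Y|^{\stdexp}\bigr)$. Your justification for that --- bounding the sum by $O(\mst(Y))$ via the growth estimate applied to $Y$ --- does not work: $\penergy(v)$ is the power of $v$ in the \emph{optimal assignment for $X \cup Y$}, which is governed by the geometry of $X \cup Y$ around $v$ and not by the internal geometry of $Y$. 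For instance, $Y$ could be a tight cluster (so $\mst(Y) \approx 0$) sitting in a large empty region of $X$, in which case some $v \in Y$ must carry power $\Theta(1)$ to bridge the gap. The inequality $\sum_{v\in Y}\penergy(v)=O\bigl(|Y|^{\stdexp}\bigr)$ may well be true, but it requires a separate packing or dyadic argument about optimal power assignments that you have not supplied; Lemma~\ref{lem:growth} applied to $Y$ alone does not give it.

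The paper sidesteps this entirely by repairing only \emph{partially}: the closest neighbor $x_i$ of $v$ in each cone $C_i$ is boosted by exactly $\ell^p/m$, so the total boost over all $m$ cones is $\ell^p \leq \penergy(v)$ and is fully paid for by the freed power. This partial repair leaves --- by Lemma~\ref{lem:conefactor}\eqref{sqrtthree} and the same geometric-growth argument you identified --- only a \emph{constant} number of unrepaired components per deleted vertex. The resulting $O(|Y|)$ residual components are then reconnected \emph{globally} by choosing one representative point per component and applying the growth bound to that $O(|Y|)$-point set, which costs $O\bigl(|Y|^{\stdexp}\bigr)$ without ever needing to control $\sum_{v\in Y}\penergy(v)$. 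To fix your proof, either supply the missing bound on $\sum_{v\in Y}\penergy(v)$, or switch to this partial-repair-plus-global-reconnection accounting.
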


\begin{proof}
One direction is straightforward:
$\pa(X \cup Y) - \pa(X)$ is bounded by $\Psi = O\bigl(|Y|^\stdexp\bigr)$, because
the optimal PA graph for $Y$ has a value of at most $\Psi$ by
Lemma~\ref{lem:growth}.
Then we can take the PA graph for $Y$ and connect it to
the tree for $X$ with a single edge, which costs at most $O(1) \leq \Psi$
because $p \leq d$.

For the other direction, consider the optimal PA graph $T$ for $X
\cup Y$. The problem is that the degrees $\deg_T(v)$ of vertices
$v \in Y$ can be unbounded (Lemma~\ref{lem:unbounded}). (If the
maximum degree were bounded, then we could argue in the same way
as for the MST functional.) The idea is to exploit the fact that
removing $v \in Y$ also frees some power. Roughly speaking,
we proceed as follows: Let $v \in Y$ be a vertex of possibly
large degree. We add the power of $v$ to some vertices close to
$v$. The graph obtained from removing $v$ and distributing its
energy has only a constant number of components.

To prove this, Lemma~\ref{lem:conefactor} is crucial.
We consider cones rooted at $v$ with the following properties:
\begin{itemize}
\setlength{\itemsep}{0mm}
\item The cones have a small angle $\alpha$, meaning that for every cone $C$ and every
   $x, y \in C$, we have $\angle(x,v,y) \leq \alpha$. We choose $\alpha = \pi/6$.
\item Every point in $[0,1]^d$ is covered by some cone.
\item There is a finite number of cones. (This can be achieved because $d$ is a constant.)
\end{itemize}

Let $C_1, \ldots, C_m$ be these cones. By abusing notation, let $C_i$ also denote all points $x \in C_i \cap (X \cup Y \setminus \{v\})$
that are adjacent to $v$ in $T$.
For $C_i$, let $x_i$ be the point in $C_i$ that is closest to $v$ and adjacent to $v$ (breaking ties arbitrarily), and let $y_i$ be the point
in $C_i$ that is farthest from $v$ and adjacent to $v$ (again breaking ties arbitrarily). (For completeness,
we remark that then $C_i$ can be ignored if $C_i \cap X = \emptyset$.) Let $\ell_i = \|y_i-v\|$ be the maximum distance of
any point in $C_i$ to $v$, and let $\ell = \max_i \ell_i$.

We increase the power of $x_i$ by $\ell^p/m$. Since the power of
$v$ is at least $\ell^p$ and we have $m$ cones, we can account for
this with $v$'s power because we remove $v$. Because $\alpha = \pi/6$ and $x_i$ is
closest to $v$, any point in $C_i$ is closer to $x_i$ than to $v$.
According to Lemma~\ref{lem:conefactor}\eqref{reachcloser}, every
point in $C_i$ has sufficient power to reach $x_i$. Thus, if $x_i$
can reach a point $z \in C_i$, then there is an established
connection between them.

From this and increasing $x_i$'s power to at least $\ell^p/m$,
there is an edge between $x_i$ and every point $z \in C_i$ that
has a distance of at most $\ell/\sqrt[p]m$ from $v$. We recall
that $m$ and $p$ are constants.

Now let $z_1, \ldots, z_k \in C_i$ be the vertices in $C_i$ that
are not connected to $x_i$ because $x_i$ has too little power. We
assume that they are sorted by increasing distance from $v$. Thus,
$z_k = y_i$.
We can assume that no two $z_j$ and $z_{j'}$ are in
the same component after removal of $v$. Otherwise, we can simply
ignore one of the edges $\{v, z_j\}$ and $\{v, z_{j'}\}$ without
changing the components.

Since $z_j$ and $z_{j+1}$ were connected to $v$ and they are not connected to each other,
we can apply Lemma~\ref{lem:conefactor}\eqref{sqrtthree}, which implies that  $\|z_{j+1} - v\| \geq \sqrt 3 \cdot \|z_j  - v\|$.
Furthermore, $\|z_1-v\| \geq \ell/\sqrt[p]m$ by assumption.
Iterating this argument yields $\ell = \|z_k - v\| \geq \sqrt{3}^{k-1} \|z_1 -v\| \geq \sqrt{3}^{k-1} \cdot \ell/\sqrt[p]m$.
This implies $k \leq \log_{\sqrt 3}(\sqrt[p]m) +1$. Thus, removing $v$ and redistributing its energy as described causes the PA graph to fall into at most
a constant number of components. Removing $|Y|$ points causes the PA graph to fall into at most $O(|Y|)$ components.
These components can be connected with costs $O(|Y|^{\stdexp})$ by choosing one point per component and applying Lemma~\ref{lem:growth}.
\end{proof}

\begin{lemma}
\label{lem:smoothboundary}
$\pa_B$ is smooth for all $1 \leq p \leq d$~\cite[(3.8)]{Yukich:ProbEuclidean:1998}.
\end{lemma}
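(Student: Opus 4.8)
The plan is to follow the proof of Lemma~\ref{lem:smooth} closely, establishing the two inequalities $\pa_B(X \cup Y) - \pa_B(X) = O(|Y|^\stdexp)$ and $\pa_B(X) - \pa_B(X \cup Y) = O(|Y|^\stdexp)$ separately, and to use the boundary structure only in the inexpensive step of reconnecting components. Throughout, let $R = [0,1]^d$, so that $\diam R = O(1)$, and assume $|Y| \geq 1$, since the case $Y = \emptyset$ is trivial.

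For the first inequality I would start from an optimal boundary PA graph for $X$, build a PA graph on $Y$ of value $O(|Y|^\stdexp)$ using the growth bound (Lemma~\ref{lem:growth}), and connect this new component to the boundary with a single edge of length at most $\diam R$. Its cost is $O(1)$, which is $O(|Y|^\stdexp)$ because $p \leq d$ and $|Y| \geq 1$. The union is a valid boundary PA graph for $X \cup Y$, giving the first inequality.

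The second inequality is the main part, and it is where the unbounded degree of the vertices $v \in Y$ (Lemma~\ref{lem:unbounded}) must be handled. I would take an optimal boundary PA graph $T$ for $X \cup Y$ and apply, verbatim, the cone-decomposition and power-redistribution argument from the proof of Lemma~\ref{lem:smooth} to each $v \in Y$: partition space into a constant number $m$ of cones of angle $\pi/6$ rooted at $v$, transfer $v$'s power to the point of each cone that is nearest to $v$ and adjacent to $v$, and charge the total transferred power $\ell^p$ against the power $\penergy(v) \geq \ell^p$ that is freed by deleting $v$. By Lemma~\ref{lem:conefactor}, deleting $v$ together with this redistribution produces only a constant number of components, so deleting all of $Y$ leaves a graph on $X$ with $O(|Y|)$ components, where I regard the boundary as an auxiliary node so that components still attached to the boundary are counted as well. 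Crucially, the redistribution does not increase the cost. To finish, I would pick one representative per component, so $O(|Y|)$ points in total, connect them by a PA graph of cost $O(|Y|^\stdexp)$ via Lemma~\ref{lem:growth}, and --- if no surviving component is already attached to the boundary --- add one edge to the boundary at cost $O(1) \leq O(|Y|^\stdexp)$. This yields a valid boundary PA graph for $X$.

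The only genuine difference from Lemma~\ref{lem:smooth} is this last, cheap edge to the boundary; the main obstacle is to confirm that nothing in the cone argument breaks when the boundary is present. Concretely, I need to check that deleting $v$ destroys at most its own boundary link, which simply contributes one more of the $O(|Y|)$ components that get reconnected anyway, and that charging the transferred power against $\penergy(v)$ remains valid even though part of $v$'s power may have been reaching the boundary in $T$. I expect both checks to be routine, since the component count in Lemma~\ref{lem:conefactor} relies only on adjacencies among points, and the freed power $\penergy(v)$ dominates the transferred amount regardless of how $v$ used it in $T$.
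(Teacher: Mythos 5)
Your proposal is correct and follows essentially the same route as the paper: the easy direction via the growth bound plus one cheap boundary edge, and the hard direction by reusing the cone/power-redistribution argument of Lemma~\ref{lem:smooth}, observing that each removed vertex loses at most one boundary link (so the component count stays $O(|Y|)$) and reconnecting one representative per component at cost $O(|Y|^\stdexp)$. The only cosmetic difference is that the paper reconnects the representatives with a boundary PA graph rather than an ordinary PA graph plus an optional boundary edge, which changes nothing.
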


\begin{proof}
The idea is essentially identical to the proof of Lemma~\ref{lem:smooth},
and we use the same notation.
Again, one direction is easy.
For the other direction, note that every vertex of $G=(X,E)$, with $E$ induced by $\penergy$ is connected to at most
one point at the boundary. We use the same kind of cones as for Lemma~\ref{lem:smooth}.
Let $v \in G$ be a vertex that we want to remove.
We ignore $v$'s possible connection to the boundary and proceed with the remaining
connections. In this way, we obtain a forest with $O(|G|)$ components.
We compute a boundary PA graph for one vertex of each component and are done because
of Lemma~\ref{lem:growth} and in the same way as in the proof of Lemma~\ref{lem:smooth}.
\end{proof}

Crucial for convergence of $\pa$ is that $\pa$, which is subadditive, and $\pa_B$, which is superadditive,
are close to each other. Then both are close to being both subadditive and superadditive.
The following lemma
states that indeed $\pa$ and $\pa_B$ do not differ too much for $1 \leq p < d$.

\begin{lemma}
\label{lem:pclose}
$\pa$  is point-wise close to $\pa_B$ for $1 \leq p < d$~\cite[(3.10)]{Yukich:ProbEuclidean:1998},
i.e.,
\[
 \bigl|\pa^p(X) - \pa^p_B(X, [0,1]^d)\bigr| = o\bigl(n^\stdexp\bigr)
\]
for every set $X \subseteq [0,1]^d$ of $n$ points.
\end{lemma}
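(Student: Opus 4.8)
The inequality $\pa_B(X) \le \pa(X)$ is immediate (a connected PA graph is in particular a boundary PA graph), so it suffices to bound $\pa(X) - \pa_B(X)$ from above by $o(n^{\stdexp})$. The plan is to start from an optimal boundary PA graph $G_B$ for $(X,[0,1]^d)$, discard the power used only to reach the boundary, and reconnect the resulting pieces into a genuine PA graph at small extra cost. Discarding boundary connections can only lower powers, so the remaining internal cost is at most $\pa_B(X)$; if $G_B$ was already connected without using the boundary there is nothing to reconnect. Otherwise every resulting component $K_j$ contains a vertex $b_j$ that reached the boundary, i.e.\ with $\penergy(b_j) \ge \delta_j^p$, where $\delta_j$ is the distance from $b_j$ to $\partial[0,1]^d$. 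Overlaying any PA graph $P$ on the set $\{b_j\}$ (one representative per component) onto the internal edges yields a connected PA graph for $X$ whose cost is at most $\pa_B(X)$ plus the total power of $P$. Thus the whole lemma reduces to connecting the $b_j$ at cost $o(n^{\stdexp})$.

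The key observation — and the step that does the real work — is a separation property that is free from the definition of a PA graph: distinct components satisfy $\|b_i - b_j\| \ge \min(\delta_i,\delta_j)$. Indeed, if $\|b_i-b_j\|$ were smaller than both $\delta_i$ and $\delta_j$, then $\penergy(b_i),\penergy(b_j) \ge \min(\delta_i,\delta_j)^p > \|b_i-b_j\|^p$, so the edge $\{b_i,b_j\}$ would already be present and $K_i = K_j$. In particular, for any threshold $t$ the \emph{deep} representatives with $\delta_j > t$ are pairwise more than $t$ apart, so a volume/packing argument in $[0,1]^d$ bounds their number by $O(t^{-d})$. This is exactly the control that a pure budget argument ($\sum_j \delta_j^p \le \pa_B(X)$) fails to provide, and it is what makes the accounting below close.

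I would then fix $t = n^{-\tau}$ with an arbitrary $\tau \in (0,1/d)$ and split the components according to the depth of their representative. For the $O(t^{-d})$ deep components I connect all their $b_j$ by one PA graph and bound its cost with the growth bound (Lemma~\ref{lem:growth}) by $O\bigl((t^{-d})^{\stdexp}\bigr) = O(t^{-(d-p)}) = O(n^{\tau(d-p)})$, whose exponent is strictly below $\stdexp$ since $\tau < 1/d$. The shallow representatives ($\delta_j \le t$) all lie in the boundary slab of width $t$; I cut this slab into $O(t^{-(d-1)})$ boxes of side $t$, connect the shallow $b_j$ inside each box by a PA graph, and link the boxes by a grid of edges of length $O(t)$. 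Linking the boxes costs $O\bigl(t^{-(d-1)} \cdot t^{p}\bigr) = O(t^{p-(d-1)})$, which is $o(n^{\stdexp})$ for $\tau < 1/d$; the in-box cost, after rescaling each box to the unit cube and applying the growth bound, is $O\bigl(t^{p}\sum_c m_c^{\stdexp}\bigr)$, where $m_c$ is the number of shallow points in box $c$. Since $\sum_c m_c \le n$ over $O(t^{-(d-1)})$ boxes, concavity of $s \mapsto s^{\stdexp}$ gives $\sum_c m_c^{\stdexp} = O\bigl(t^{-(d-1)p/d}\, n^{\stdexp}\bigr)$, so the in-box cost is $O(t^{p/d} n^{\stdexp}) = O(n^{\stdexp - \tau p/d}) = o(n^{\stdexp})$. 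Adding one edge to join the deep part to the shallow part, all contributions are $o(n^{\stdexp})$, which yields $\pa(X) \le \pa_B(X) + o(n^{\stdexp})$.

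The only genuinely delicate point is the interplay between the deep and shallow estimates: without the separation property the deep components could number $\Theta(n)$ and connecting them would cost $\Theta(n^{\stdexp})$ rather than $o(n^{\stdexp})$, while shrinking $t$ to tame them would blow up the slab terms. The packing bound $O(t^{-d})$ is exactly what reconciles the two, and verifying it — together with the routine Jensen-and-rescaling accounting inside the slab — is where the argument should be carried out with care.
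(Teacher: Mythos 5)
Your argument is correct in substance but follows a genuinely different route from the paper's. The paper starts the same way (take an optimal boundary PA graph, cut the boundary connections, pick one representative per component), but then it imports Yukich's dyadic decomposition to bound the total power spent on boundary connections, projects the representatives onto the $(d-1)$-dimensional boundary, builds a \emph{minimum spanning tree} on those projections (exploiting its bounded degree and the MST growth bound in dimension $d-1$), and pulls the connections back via the triangle inequality for $p$-th powers of distances; this yields the sharper quantitative bounds $O(n^{\frac{d-p-1}{d-1}})$, $O(\log n)$, or $O(1)$ recorded in the remark following the lemma. You instead prove a self-contained separation property --- representatives of distinct components are at distance at least $\min(\delta_i,\delta_j)$, since otherwise the edge between them would already exist --- and use the resulting packing bound of $O(t^{-d})$ deep representatives together with a slab-of-width-$t$ decomposition and Jensen's inequality to get $o\bigl(n^{\stdexp}\bigr)$ directly from the growth bound (Lemma~\ref{lem:growth}). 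Your version is more elementary and avoids the appeal to Yukich's omitted dyadic argument, at the price of a weaker (but still sufficient) quantitative bound; the separation observation is a nice replacement for the paper's machinery.

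One detail to repair: the claim that the shallow boxes can be ``linked by a grid of edges of length $O(t)$'' at cost $O\bigl(t^{-(d-1)}t^p\bigr)$ silently assumes that adjacent boxes are non-empty. Since the lemma is a worst-case statement over all point sets $X$, most boxes may be empty, and two consecutive non-empty boxes along your grid may be far apart, so for $p\ge 2$ the naive tour through them can exceed your budget. The fix stays inside your own framework: take one representative per non-empty box (at most $O\bigl(t^{-(d-1)}\bigr)$ points in $[0,1]^d$) and connect them by an optimal PA graph via Lemma~\ref{lem:growth}, at cost $O\bigl(t^{-(d-1)\cdot\stdexp}\bigr)=O\bigl(n^{\tau(d-1)\cdot\stdexp}\bigr)$, which is $o\bigl(n^{\stdexp}\bigr)$ because $\tau<1/d<1/(d-1)$. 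With that substitution the proof is complete.
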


\begin{proof}
Let $T$ be an optimal boundary PA graph for $X$.
Let $Q\subseteq X$ be the set of points that have a connection to
the boundary of $T$ and let $\partial Q$ be the corresponding points on the boundary. If we remove the connections to
the boundary,
we obtain a graph $T'$. We can assume
that $Q$ contains exactly one point per connected component of the graph $T'$.

We use the same dyadic decomposition as Yukich~\cite[proof of Lemma 3.8]{Yukich:ProbEuclidean:1998}.
This yields that the sum of transmit powers used to connect to the boundary is
bounded by the maximum of $O(n^{\frac{d-p-1}{d-1}})$ and $O(\log n)$
for $p \leq d-1$
and by a constant for $p \in (d-1, d)$.
We omit the proof as it is basically identical to Yukich's proof.

Let $Q \subseteq X$ be the points connected to the boundary, and let
$\partial Q$ be the points where $Q$ connects to the boundary.
We compute a minimum-weight spanning tree $Z$ of $\partial Q$.
(Note that we indeed compute an MST and not a PA. This is because the
MST has bounded degree and $\pa$ and $\mst$ differ by at most a factor of $2$.)
This MST $Z$ has a weight of
\[
O\left(\max\left\{n^{\frac{d-1-p}{d-1}}, 1\right\}\right) = o\left(n^{\stdexp}\right)
\]
according to the
growth bound for $\mst$~\cite[(3.7)]{Yukich:ProbEuclidean:1998}.
and because $d > p$.
If two points $\tilde q, \tilde q' \in \partial Q$ are connected
in this tree, then we connect the corresponding points $q, q' \in Q$.

The question that remains is by how much the power of the vertices
in $Q$ has to be increased in order to allow the connections as
described above. If $q, q' \in Q$ are connected, then an upper
bound for their power is given by the $p$-th power of their
distances to the boundary points $\tilde q$ and $\tilde q'$ plus
the length of the edge connecting $\tilde q$ and $\tilde q'$.
Applying the triangle inequality for powers of metrics twice, the
energy needed for connecting $q$ and $q'$ is at most $4^p=O(1)$
times the sum of these distances. Since the degree of $Z$ is
bounded, every vertex in $Q$ contributes to only a constant number
of edges and, thus, only to the power consumption of a constant
number of other vertices. Thus, the total additional power needed
is bounded by a constant times the power of connecting $Q$ to the
boundary plus the power to use $Z$ as a PA graph.
Because of the triangle inequality for powers of metrics, the bounded degree of every
vertex of $\partial Q$ in $Z$, and because of the dyadic decomposition mentioned
above, the increase of power is in compliance with the statement of the lemma.
\end{proof}

\begin{remark}
Lemma~\ref{lem:pclose} is an analogue of its counterpart for MST, TSP, and matching~\cite[Lemma 3.7]{Yukich:ProbEuclidean:1998} in terms
of the bounds. Namely, we obtain
\[
\bigl|\pa(X) - \pa_B(X)\bigr| \leq \begin{cases}
O(|X|^{\frac{d-p-1}{d-1}}) & \text{if $1 \leq p < d-1$},\\
O(\log|X|) & \text{if $p=d-1 \neq 1$},\\
O(1) & \text{if $d-1<p<d$ or $p=d-1=1$.}
\end{cases}
\]
\end{remark}

\subsection{Probabilistic Properties}
\label{ssec:probabilistic}

For $p > d$, smoothed is not guaranteed to hold, and for $p \geq d$, point-wise closeness is not guaranteed to hold.
But similar properties typically hold for random point sets,
namely smoothness in mean (Definition~\ref{def:smoothinmean}) and closeness in mean (Definition~\ref{def:closeinmean}).
In the following, let $X=\{U_1, \ldots, U_n\}$. Recall that $U_1, \ldots, U_n$ are drawn uniformly and independently from $[0,1]^d$.

Before proving smoothness in mean, we need a statement about the longest
edge in an optimal PA graph and boundary PA graph. The bound is asymptotically equal to the bound for the longest edge
in an MST~\cite{Penrose:LongestMST:1997,KozmaEA:ConnectivityThreshold:2010,GuptaKumar:CriticalPower:1999}.

To prove our bound for the longest edge in optimal PA graphs (Lemma~\ref{lem:longest}), we need the following
two lemmas. Lemma~\ref{lem:emptyball} is essentially equivalent to a result by Kozma et al.~\cite{KozmaEA:ConnectivityThreshold:2010},
but they do not state the probability explicitly.
Lemma~\ref{lem:iterclose} is a straight-forward consequence of Lemma~\ref{lem:emptyball}.
Variants of both lemmas are
known~\cite{Steele:ProbabilisticClassical:1990,Penrose:StrongMST:1999,Penrose:LongestMST:1997,
GuptaKumar:CriticalPower:1999}, but, for completeness, we state and prove both lemmas in the forms that we need.

\begin{lemma}
\label{lem:emptyball}
For every $\beta > 0$, there exists a
$\cball = \cball(\beta, d)$ such that, with a probability of at least
$1-n^{-\beta}$, every hyperball of radius $\rball = \cball \cdot
(\log n/n)^{1/d}$ and with center in $[0,1]^d$ contains at least
one point of $X$ in its interior.
\end{lemma}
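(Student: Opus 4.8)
The plan is to use a union bound over a finite net of ball centers, combined with a direct estimate on the probability that a single ball is empty. The key observation is that a ball of radius $\rball$ in $[0,1]^d$ has volume proportional to $\rball^d = \cball^d \cdot (\log n)/n$, so a single fixed ball is empty with probability $(1 - \Theta(\rball^d))^n \approx \exp(-\Theta(\cball^d \log n)) = n^{-\Theta(\cball^d)}$. By choosing $\cball$ large, this empty-probability can be driven below any polynomial threshold, which is exactly what we need to absorb the loss from the union bound.

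First, I would discretize: rather than controlling every ball with center in $[0,1]^d$ (an uncountable family), I would lay down a grid of points at spacing $\delta \sim \rball$ (say $\delta = \rball/2$), giving a net $N$ of size $|N| = O((1/\delta)^d) = O((n/\log n)) = O(n)$ — in any case polynomially bounded in $n$. The standard trick is that if \emph{every} grid-centered ball of a slightly smaller radius $\rball/2$ contains a point, then \emph{every} ball of radius $\rball$ with arbitrary center in $[0,1]^d$ also contains a point, because any arbitrary center lies within distance $\delta \le \rball/2$ of some grid point, so the arbitrary ball contains the smaller grid ball. This reduces the problem to controlling the finitely many grid balls.

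Second, for a single grid-centered ball $B$ of radius $\rball/2$, I would lower-bound the volume of $B \cap [0,1]^d$. A ball of radius $r$ centered anywhere in the cube meets the cube in volume at least $c_d \cdot r^d$ for a dimensional constant $c_d > 0$ (the worst case is a corner, contributing a $2^{-d}$ fraction of the full ball volume). Hence the probability that a fixed point $U_i$ avoids $B$ is at most $1 - c_d (\rball/2)^d$, and the probability that all $n$ points avoid $B$ is at most $(1 - c_d (\rball/2)^d)^n \le \exp(-n \cdot c_d (\rball/2)^d)$. Substituting $\rball = \cball (\log n/n)^{1/d}$ gives an exponent of $-c_d' \cbad[\cball^d] \log n$; writing this cleanly, the single-ball failure probability is at most $n^{-c_d' \cball^d}$ for a constant $c_d'$ depending only on $d$.

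Finally, I would union-bound over the $O(n)$ grid points: the total failure probability is at most $|N| \cdot n^{-c_d' \cball^d} = O(n) \cdot n^{-c_d' \cball^d} = O(n^{1 - c_d' \cball^d})$. To make this at most $n^{-\beta}$, it suffices to choose $\cball = \cball(\beta, d)$ large enough that $c_d' \cball^d \ge \beta + 2$ (the extra $+1$ swallows the $O(n)$ factor and the implicit constant). This yields the claimed bound. The main obstacle — though it is more bookkeeping than difficulty — is the corner volume estimate: one must be careful that balls centered near the boundary of $[0,1]^d$ still capture a $\Theta(\rball^d)$ fraction of their volume inside the cube, since otherwise the empty-probability estimate degrades. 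This is handled cleanly by the observation that even a corner-centered ball retains at least a $2^{-d}$ fraction of its mass inside the cube, and $2^{-d}$ is an absolute constant once $d$ is fixed.
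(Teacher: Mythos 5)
Your proof is correct and follows essentially the same route as the paper's: discretize the continuum of ball centers (the paper covers $[0,1]^d$ with $O(n/\log n)$ hypercubes of side $\Omega(\rball)$ so that every ball contains a box, you use a net of grid centers so that every ball of radius $\rball$ contains a smaller grid-centered ball), bound the single-region empty probability by $n^{-\Omega(\cball^d)}$ using the $2^{-d}$ corner-volume observation, and finish with a union bound over the polynomially many regions, choosing $\cball$ large enough in terms of $\beta$ and $d$. The only blemish is the stray token in your exponent expression (``$\cbad[\cball^d]$''), evidently a typo for $\cball^d$, which does not affect the argument.
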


\begin{proof}
We sketch the simple proof.
We cover $[0,1]^d$ with hypercubes of side length $\Omega(\rball)$
such that every ball of radius $\rball$ -- even if its center is
in a corner (for a point on the boundary, still at least a $2^{-d}
= \Theta(1)$ fraction is within $[0,1]^d$) -- contains at least
one box. The probability that such a box does not contain
a point, which is necessary for a ball to be empty, is at most
$\bigl(1-\Omega(\rball)^d\bigr)^n
\leq n^{-\Omega(1)}$ by independence of the points in $X$
and the definition of $\rball$.
The rest of the proof follows by a union bound over all
$O(n/\log n)$ boxes.
\end{proof}

We also need the following lemma, which essentially
states that if $z$ and $z'$ are sufficiently far away, then there
is -- with high probability -- always a point $y$ between $z$ and
$z'$ in the following sense: the distance of $y$ to $z$ is within
a predefined upper bound $2\rball$, and $y$ is closer to $z'$ than
$z$.

\begin{lemma}
\label{lem:iterclose}
For every $\beta > 0$, with a probability of at least $1- n^{-\beta}$, the following holds:
For every choice of $z, z' \in [0,1]^d$ with $\|z-z'\| \geq 2\rball$, there exists
a point $y \in X$ with the following properties:
\begin{itemize}
\item $\|z-y\| \leq 2\rball$.
\item $\|z'-y\| < \|z' - z\|$.
\end{itemize}
\end{lemma}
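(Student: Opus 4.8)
The plan is to derive this directly from Lemma~\ref{lem:emptyball}, as the text promises. The key observation is that Lemma~\ref{lem:emptyball} already supplies a \emph{single} high-probability event that is uniform over all ball centers in $[0,1]^d$; consequently no additional union bound is needed to cope with the continuum of pairs $(z,z')$. Concretely, I would fix the same $\beta$ and the same $\cball = \cball(\beta,d)$, and condition on the event $\mathcal{E}$ that every hyperball of radius \rball\ with center in $[0,1]^d$ contains a point of $X$ in its interior, so that $\probab(\mathcal{E}) \geq 1 - n^{-\beta}$ by Lemma~\ref{lem:emptyball}. The remaining argument is then entirely deterministic given $\mathcal{E}$.

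Given $z, z' \in [0,1]^d$ with $\|z-z'\| \geq 2\rball$, I would place the center of a test ball on the segment from $z$ toward $z'$ at distance exactly \rball\ from $z$, namely
\[
   w = z + \rball \cdot \frac{z'-z}{\|z'-z\|}.
\]
Since $w$ lies on the segment $\overline{zz'}$ and $[0,1]^d$ is convex, $w \in [0,1]^d$, so $\mathcal{E}$ applies to the ball $B(w,\rball)$ and produces a point $y \in X$ in its interior. The first required property is then immediate from the triangle inequality: $\|z-y\| \leq \|z-w\| + \|w-y\| < \rball + \rball = 2\rball$, using $\|z-w\| = \rball$ and $\|w-y\| < \rball$ (interior).

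For the second property I would prefer a tangency observation over a coordinate computation. Writing $D = \|z-z'\|$, the sphere of radius $D$ around $z'$ passes through $z$, the sphere of radius \rball\ around $w$ also passes through $z$ (as $\|w-z\| = \rball$), and the centers $z'$, $w$, $z$ are collinear with $\|z'-w\| = D - \rball$. This is exactly the condition for the smaller ball $B(w,\rball)$ to be internally tangent to, and contained in, the larger ball $B(z',D)$, with contact only at $z$. Hence every interior point of $B(w,\rball)$ lies in the open ball $B(z',D)$; in particular $\|z'-y\| < D = \|z'-z\|$, as required.

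The only points that need care are the two just flagged: that the chosen center $w$ lies in $[0,1]^d$ (handled by convexity, which is what makes Lemma~\ref{lem:emptyball} applicable), and that one obtains the \emph{strict} inequality $\|z'-y\| < \|z'-z\|$ rather than a non-strict one. The strictness is precisely where it matters that Lemma~\ref{lem:emptyball} supplies a point in the \emph{interior} of the ball: the unique farthest point of the closed ball $B(w,\rball)$ from $z'$ is $z$ itself, at distance exactly $D$, and excluding the boundary removes this degenerate case. Should an elementary verification be preferred to the tangency picture, expanding $\|y-z'\|^2 = \|y-w\|^2 - 2(D-\rball)\langle y-w,\hat u\rangle + (D-\rball)^2$ with $\hat u = (z'-z)/\|z'-z\|$ and bounding $\|y-w\| < \rball$ collapses the right-hand side to strictly less than $(\rball + (D-\rball))^2 = D^2$ in one line; I would include whichever version is shorter.
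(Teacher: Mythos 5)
Your proof is correct and is essentially the paper's own argument: the paper likewise places a ball of radius $\rball$ centered at distance $\rball$ from $z$ on the segment toward $z'$ and invokes Lemma~\ref{lem:emptyball}. You merely flesh out the two verifications (the triangle inequality for $\|z-y\|\leq 2\rball$ and the internal-tangency argument for the strict inequality $\|z'-y\|<\|z'-z\|$) that the paper leaves implicit.
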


\begin{proof}
The set of candidates for $y$ contains a ball of radius $\rball$,
namely a ball of this radius whose center is at a distance
of $\rball$ from $z$ on the line between $z$ and $z'$.
This allows us to use Lemma~\ref{lem:emptyball}.
\end{proof}

\begin{lemma}[longest edge]
\label{lem:longest}
For every constant $\beta > 0$, there exists a
constant $\cedge=\cedge(\beta)$ such that, with a probability of at least
$1 - n^{-\beta}$, every edge of an optimal PA graph and an optimal
boundary PA graph $\pa_B$ is of length at most $\redge = \cedge
\cdot (\log n/n)^{1/d}$.
\end{lemma}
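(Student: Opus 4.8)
The plan is to prove the contrapositive in a deterministic way, after conditioning on the two high-probability events we have available. First I would apply Lemma~\ref{lem:emptyball} and Lemma~\ref{lem:iterclose} simultaneously: each fails with probability at most $n^{-\beta}$, so by a union bound both hold with probability at least $1-2n^{-\beta}$, which I absorb into the constant by passing from $\beta$ to $\beta+1$ (and enlarging $n$). On the resulting event I want to show, purely combinatorially, that no edge of an optimal PA graph can be longer than $\redge=\cedge\cdot(\log n/n)^{1/d}$, where $\cedge$ will be chosen as a large multiple of $\cball$. Equivalently, I assume for contradiction that the optimal graph $T$ has an edge of length exceeding $\redge$, i.e.\ some vertex carries power larger than $\redge^p$, and I construct a \emph{cheaper} PA graph, contradicting optimality. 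Since the points are in general position almost surely, I may assume all pairwise distances are distinct, so there is a unique maximum-power vertex $u$ with $\penergy(u)=r_u^p>\redge^p$.

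The basic exchange step is to reroute a long edge through the dense point cloud. Removing $u$'s longest incident edge $\{u,w\}$ splits $T$ into components $A\ni u$ and $B\ni w$. I would then use Lemma~\ref{lem:iterclose} repeatedly to build a short-hop path from $u$ towards $w$, each hop of length at most $2\rball$ and strictly decreasing the distance to $w$; since this path starts in $A$ and ends within $2\rball$ of $w\in B$, some consecutive pair $a\in A$, $b\in B$ satisfies $\|a-b\|\le 2\rball$. Reconnecting $A$ and $B$ by the short edge $\{a,b\}$ costs at most $2(2\rball)^p$ of additional power at $a$ and $b$, whereas deleting $\{u,w\}$ frees the power of $u$ (its power drops to that of its second-farthest neighbor) and, because $\{u,w\}$ is globally longest, also of $w$. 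The point of choosing $\cedge\gg\cball$ is exactly that a single freed long edge releases at least $\redge^p$ of power at each endpoint, while the short reconnection only costs $O(\rball^p)$; as $\redge^p/\rball^p=(\cedge/\cball)^p$ can be made an arbitrarily large constant, the freed power swamps the reconnection cost and the total strictly decreases.

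The hard part, and the reason Lemma~\ref{lem:conefactor} is needed, is the unbounded degree established in Lemma~\ref{lem:unbounded}: a single vertex may be incident to many long edges, so deleting its longest edge need not drop its power far, and naively rerouting every long edge separately risks paying more in reconnection cost than we save. To control this I would decompose space into a constant number $m$ of cones of angle $\pi/6$ rooted at $u$ (possible since $d$ is constant). Within one cone, any two neighbours of $u$ that are not themselves connected must, by Lemma~\ref{lem:conefactor}\eqref{sqrtthree}, have distances from $u$ differing by a factor at least $\sqrt3$; hence across the scale range $[\redge,\diam]$ each cone contributes only $O(\log n)$ long neighbours, and only $O(1)$ of them lie in any fixed $\sqrt3$-band. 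Crucially, each such long neighbour, once detached from $u$, can itself be reconnected locally via Lemma~\ref{lem:iterclose} at cost $O(\rball^p)$ while \emph{also} shedding its own power of at least $\redge^p$ whenever $u$ was its farthest neighbour; so the long edges are paid for by the power they release at both endpoints, not charged against us. Carrying out this reduction at the maximum-power vertex (and iterating, each step strictly lowering the maximum power) drives every power below $\redge^p$ and yields a strictly cheaper assignment, the desired contradiction. I expect this cone-based bound on the number and spacing of long neighbours, together with the double-counting of freed power at both endpoints, to be the delicate point of the whole argument.

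Finally, the boundary case is handled identically: a connection of a vertex to the boundary of $R$ is treated as one more incident edge, and since Lemma~\ref{lem:emptyball} is stated for centres in $[0,1]^d$ it controls proximity near the boundary as well, so an overlong connection to $\partial R$ (length exceeding $\redge$) is rerouted to a short interior connection in exactly the same manner. Choosing $\cedge=\cedge(\beta)$ large enough relative to $\cball(\beta,d)$ then makes every rerouting step net-negative, establishing that on the conditioned event every edge of both the optimal PA graph and the optimal boundary PA graph has length at most $\redge$, which is what the lemma asserts.
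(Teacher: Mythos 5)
Your setup is right — contradiction via a local improvement, union bound over Lemmas~\ref{lem:emptyball} and~\ref{lem:iterclose}, short-hop reconnections of cost $O(\rball^p)$ each, cones plus Lemma~\ref{lem:conefactor}\eqref{sqrtthree} to control unconnected neighbours, and $\cedge \gg \cball$ — but the energy accounting in your main step does not close. You propose to drive the power of the maximum-power vertex $u$ all the way below $\redge^p$ and to pay for the resulting reconnections with the power freed ``at both endpoints.'' Two things go wrong. First, the guaranteed saving at $u$ is only $r_u^p - \redge^p$, which can be arbitrarily small when $r_u$ barely exceeds $\redge$, while detaching all of $u$'s long neighbours can create $\Theta(\log n)$ components per cone (one per $\sqrt3$-band across the scale range), so the reconnection bill is $\Theta(\rball^p\log n)$ and is not covered. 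Second, the claimed saving at a detached neighbour $x_j$ is conditional: $x_j$ sheds power only if $u$ was strictly its farthest neighbour and its next-farthest neighbour is substantially closer; in the worst case $x_j$ sheds nothing, so the double-counting cannot be relied upon. The subsequent global iteration (``each step strictly lowering the maximum power'') inherits this problem at every step.

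The paper's proof avoids this with one quantitative choice you are missing: it reduces the radius of $v$ (an endpoint of the longest edge, of length $\rbig>\redge$) only by the constant factor $\sqrt3$, not all the way down. Then the only neighbours $v$ can lose lie in the annulus of radii $(\rbig/\sqrt3,\rbig]$, and within a single cone any two such points are within a factor $\sqrt3$ in distance from $v$, hence already connected to each other by (the contrapositive of) Lemma~\ref{lem:conefactor}\eqref{sqrtthree}. So at most a \emph{constant} number $k$ of new components arise, reconnected via Lemma~\ref{lem:iterclose} for at most $k2^{p+1}\rball^p$, while the saving at $v$ alone is $\rbig^p-(\rbig/\sqrt3)^p>(1-\sqrt3^{-p})\redge^p$, which dominates once $\cedge$ is a suitable constant multiple of $\cball$. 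No second endpoint, no scale iteration, and no conditional savings are needed. Your treatment of the boundary functional (one extra ``edge'' to the boundary, handled identically) does match the paper.
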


\begin{proof}
We restrict ourselves to considering PA graphs. The proof for boundary PA graphs
is almost identical.

Let $T$ be any PA graph. Let
$\cedge = 4 k^{1/p} \cball/(1-\sqrt{3}^{-p})^{1/p}$,
where $k$ is an upper bound for the number of vertices without a pairwise
connection at a distance between $r$ and $r/\sqrt{3}$ for
arbitrary $r$. It follows from Lemma~\ref{lem:conefactor} and its proof, that $k$ is a constant
that depends only on $p$ and $d$.

Note that $\cedge
> 2 \cball$. We are going to show that if $T$ contains an edge that is longer
than $\redge$, then we can find a better
PA graph with a probability of at least
$1 - n^{-\beta}$, which shows that $T$ is not optimal.

Let $v$ be a vertex incident to the
longest edge of $T$, and let $\rbig > \redge$ be the length of
this longest edge. (The longest edge is unique with a probability
of $1$. The node $v$ is not unique as the longest edge connects
two points.) We decrease the power of $v$ to $\rbig/\sqrt 3$. This
implies that $v$ loses contact to some points -- otherwise, the
power assignment was clearly not optimal.

The number $\cball$ depends on the exponent $\beta$ of the lemma.
Let $x_1, \ldots, x_{k'}$ with $k' \leq k$ be the points that were
connected to $v$ but are in different connected components than
$v$ after decreasing $v$'s power. This is because the only nodes
that might lose their connection to $v$ are within a distance
between $\rbig/\sqrt 3$ and $\rbig$, and there are at most $k$ such nodes
without a pairwise connection.

Consider $x_1$.
Let $z_0 = v$. According to Lemma~\ref{lem:iterclose}, there is a point
$z_1$ that is closer to $x_1$ and at most $2 \rball$ away from $v$.
Iteratively for $i = 1, 2, \ldots$, we distinguish three cases until this process stops:
\begin{enumerate}[(i)]
\setlength{\itemsep}{0mm}
\item $z_i$ belongs to the same component as $x_j$ for some $j$ ($z_i$ is closer to $x_1$ than $z_{i-1}$, but this does not
imply $j=1$).
  We increase $z_i$'s power such that $z_i$ is able to send to $z_{i-1}$.
If $i>1$, then we also increase $z_{i-1}$'s power accordingly.
\label{firstcase}
\item $z_i$ belongs to the same component as $v$. Then we can apply
Lemma~\ref{lem:iterclose} to $z_i$ and $x_1$
and find a point $z_{i+1}$
that is closer
to $x_1$ than $z_i$ and at most at a distance of $2\rball$ of $z_i$.
\item $z_i$ is within a distance of at most $2\rball$ of some $x_j$.
In this case, we increase the energy of $z_i$ such that $z_i$ and $x_j$ are connected. (The energy of $x_j$ is sufficiently large anyhow.)
\end{enumerate}
Running this process once decreases the number of connected
components by one and costs at most $2(2\rball)^p = 2^{p+1}
\rball^p$ additional power. We run this process $k' \leq k$ times,
thus spending at most $k2^{p+1} \rball^p$ of additional power. In
this way, we obtain a valid PA graph.

We have to show that the new PA graph indeed saves power. To do this,
we consider the power saved by decreasing $v$'s energy. By decreasing $v$'s
power, we save an amount of $\rbig^p - (\rbig/\sqrt 3)^p >
(1-\sqrt{3}^{-p}) \cdot \redge^p$. By the choice of $\cedge$,
the saved amount of energy exceeds the additional amount of $k2^{p+1}\rball^p$. This contradicts the optimality of the PA graph
with the edge of length $\rbig > \redge$.
\end{proof}

\begin{remark}
Since the longest edge has a length of at most $\redge$ with high
probability, i.e., with a probability of $1-n^{-\Omega(1)}$, and any ball of radius $\redge$ contains roughly
$O(\log n)$ points due to Chernoff's bound~\cite[Chapter
4]{MitzenmacherUpfal:ProbComp:2005}, the maximum degree of an
optimum PA graph of a random point set is $O(\log n)$ with high probability --
contrasting Lemma~\ref{lem:unbounded}.
\end{remark}

Yukich gave two different notions of smoothness in mean~\cite[(4.13) and (4.20) \& (4.21)]{Yukich:ProbEuclidean:1998}.
We use the stronger notion, which implies the other.

\begin{definition}[\mbox{smooth in mean~\cite[(4.20), (4.21)]{Yukich:ProbEuclidean:1998}}]
\label{def:smoothinmean}
A Euclidean functional $\functional$ is called \emph{smooth in mean}
if, for every constant $\beta > 0$, there exists a constant $c = c(\beta)$ such that
the following holds with a probability of at least $1-n^{-\beta}$:
\[
 \bigl|\functional(n) - \functional(n \pm k)\bigr|
\leq c k \cdot \bigl(\frac{\log n}{n}\bigr)^{p/d}
\]
and
\[
\bigl|\functional_B(n) - \functional_B(n \pm k)\bigr|
= c k \cdot \bigl(\frac{\log n}{n}\bigr)^{p/d} .
\]
for all $0 \leq k \leq n/2$.
\end{definition}

\begin{lemma}
\label{lem:smoothinmean}
$\pa_B$ and $\pa$ are smooth in mean for all $p > 0$ and all
$d$.
\end{lemma}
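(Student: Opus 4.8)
The plan is to prove both required inequalities on a single high-probability event on which all relevant optimal graphs have only short edges and the random points are spread out. Concretely, I would apply Lemma~\ref{lem:longest} and Lemma~\ref{lem:emptyball} with the slightly larger exponent $\beta' = \beta+1$ to every point set $\{U_1,\dots,U_m\}$ with $n/2 \le m \le 3n/2$, and take a union bound over these $O(n)$ sizes; this costs a factor $O(n)$ in the failure probability and still leaves a good event of probability at least $1-n^{-\beta}$. Since $k \le n/2$, both the set realizing $\functional(n)$ and the set realizing $\functional(n\pm k)$ lie in this range, and $\redge,\rball = \Theta\bigl((\log n/n)^{1/d}\bigr)$ throughout. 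On the good event every edge of every optimal PA graph and boundary PA graph in play has length at most $\redge$, and every ball of radius $\rball$ centered in $[0,1]^d$ contains a point of the smaller set. Because $\functional(n)$ and $\functional(n\pm k)$ are computed on a nested pair of point sets, it suffices to bound separately $\functional(\text{larger})-\functional(\text{smaller})$ and $\functional(\text{smaller})-\functional(\text{larger})$.

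The direction $\functional(\text{larger}) \le \functional(\text{smaller}) + O(k\rball^p)$ is the easy one. I would start from an optimal (boundary) PA graph for the smaller set and insert the at most $k$ extra points one at a time. By the empty-ball property each inserted point $u$ has a point $x$ of the smaller set within distance $\rball$; assigning $u$ power $\|u-x\|^p \le \rball^p$ and, if necessary, raising $x$'s power by at most $\rball^p$ connects $u$ while keeping the graph a valid (boundary) PA graph. Each insertion costs $O(\rball^p)$, so the total increase is $O(k\rball^p)=O\bigl(k(\log n/n)^{p/d}\bigr)$.

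The reverse direction $\functional(\text{smaller}) \le \functional(\text{larger}) + O(\redge^p\,k)$ is where the unbounded degree bites, and this is the step I expect to be the main obstacle. I would take an optimal tree PA graph for the larger set and delete the $k$ surplus points one at a time, reusing the cone-redistribution argument from the proof of Lemma~\ref{lem:smooth}. Deleting a vertex $v$ splits the tree into $\deg_T(v)$ pieces, and $\deg_T(v)$ can be as large as $\Theta(\log n)$. Partitioning space around $v$ into $m=O(1)$ cones of angle $\pi/6$, freeing $v$'s power (at most $\redge^p$ on the good event), and handing $\redge^p/m$ to the nearest former neighbor $x_i$ in each cone, Lemma~\ref{lem:conefactor}\eqref{reachcloser} guarantees every point of a cone can reach its $x_i$, while Lemma~\ref{lem:conefactor}\eqref{sqrtthree} forces the former neighbors that fail to reconnect into a $\sqrt3$-separated chain, of which only $O(1)$ fit within the maximal distance $\redge$. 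Hence deleting $v$ leaves just $O(1)$ residual components, all within $\redge$ of $v$. The crucial difference from Lemma~\ref{lem:smooth} is that I reconnect these $O(1)$ components \emph{locally}, with $O(1)$ edges of length $O(\redge)$ at cost $O(\redge^p)$, rather than invoking the global growth bound (which would give only the far weaker $O(k^{(d-p)/d})$). The net change per deletion, counting both the redistribution to the $x_i$ and the local reconnection, is therefore $O(\redge^p)$, and summing over the $k$ deletions yields $O(\redge^p\,k)$. The point needing care is that every reconnection edge has length $O(\redge)$, so that all edge lengths stay $O(\redge)$ and the cone argument remains applicable at each successive deletion even when deleted points are adjacent.

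For the boundary functional $\pa_B$ the same scheme applies with the modification from Lemma~\ref{lem:smoothboundary}: each vertex has at most one connection to the boundary, so when deleting $v$ I discard its boundary edge and treat the $O(1)$ residual components exactly as above, reconnecting them to one another or to the boundary at cost $O(\redge^p)$ each, using the edge-length bound that Lemma~\ref{lem:longest} supplies for boundary PA graphs as well. Combining the two directions gives $|\functional(n)-\functional(n\pm k)| \le c\,k\,(\log n/n)^{p/d}$ and the analogous bound for $\functional_B$, on the good event, simultaneously for all $0 \le k \le n/2$, which is precisely smoothness in mean; the constant $c=c(\beta)$ absorbs the $\Theta(1)$ factors coming from $\cedge$, $\cball$, and the number of cones. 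Note that nowhere did the argument use $p \le d$, so the statement holds for all $p>0$ and all $d$.
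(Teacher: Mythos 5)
Your proposal is correct and follows essentially the same route as the paper's proof: the easy direction via Lemma~\ref{lem:emptyball}, and the hard direction by iterated single-point removal using the cone redistribution from the proof of Lemma~\ref{lem:smooth} together with the longest-edge bound of Lemma~\ref{lem:longest} to reconnect the $O(1)$ residual components locally at cost $O\bigl((\log n/n)^{p/d}\bigr)$ each. Your explicit union bound over intermediate set sizes and your remark that the reconnection must be local (rather than via the growth bound as in Lemma~\ref{lem:smooth}) merely spell out details the paper leaves implicit.
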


\begin{proof}
The bound
$\pa(n+k)
\leq \pa(n) + O\bigl(k \cdot \bigl(\frac{\log n}{n}\bigr)^{\frac pd}\bigr)$
follows from the fact that for all $k$ additional vertices, with a probability of
at least $1-n^{-\beta}$ for any $\beta > 0$, there is a vertex among the first $n$
within a distance of at most $O\bigl((\log n/n)^{1/d}\bigr)$ according to Lemma~\ref{lem:emptyball} ($\beta$ influences the constant hidden in the $O$).
Thus, we can connect any of the $k$ new vertices with costs
of $O\bigl((\log n/n)^{p/d}\bigr)$ to the optimal PA graph for the $n$ nodes.

Let us now show the reverse inequality
$\pa(n)
\leq \pa(n+k) + O\bigl(k \cdot \bigl(\frac{\log n}{n}\bigr)^{\frac pd}\bigr)$.
To do this, we show that with a probability of at least $1 - n^{-\beta}$, we have
\begin{equation}
\pa(n)
\leq \pa(n+1) + O\left(\left(\frac{\log n}{n}\right)^{\frac pd}\right) .
\label{equ:smoothiter}
\end{equation}
Then we iterate $k$ times to obtain the bound we aim for.

The proof of \eqref{equ:smoothiter} is similar to the analogous
inequality in Yukich's proof~\cite[Lemma
4.8]{Yukich:ProbEuclidean:1998}. The only difference is that we
first have to redistribute the power of the point $U_{n+1}$ to its
closest neighbors as in the proof of Lemma~\ref{lem:smooth}. In
this way, removing $U_{n+1}$ results in a constant number of
connected components. The longest edge incident to $U_{n+1}$ has a
length of $O\bigl((\log n/n)^{1/d}\bigr)$ with a probability
of at least $1-n^{-\beta}$ for any constant $\beta > 0$.
Thus, we can connect these constant number number of components
with extra power of at most $O\bigl((\log n/n)^{p/d}\bigr)$.

The proof of
\[
 \left|\pa(n) - \pa(n - k)\right|
= O\left(k \cdot \left(\frac{\log n}{n}\right)^{\frac pd}\right)
\]
and the statement
\[
 \left|\pa_B(n) - \pa_B(n \pm k)\right|
= O\left(k \cdot \left(\frac{\log n}{n}\right)^{\frac pd}\right)
\]
for the boundary functional are almost identical.
\end{proof}

\begin{definition}[\mbox{close in mean~\cite[(4.11)]{Yukich:ProbEuclidean:1998}}]
\label{def:closeinmean}
A Euclidean functional $\functional$ is close in mean to its boundary functional
$\functional_B$ if
\[
\expected\left(\left|\functional(n) - \functional_B(n)\right|\right)
   = o\left(n^{\stdexp}\right).
\]
\end{definition}

\begin{lemma}
\label{lem:closemean}
$\pa$ is close in mean to $\pa_B$ for all $d$ and $p \geq 1$.
\end{lemma}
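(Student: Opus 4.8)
The plan is to split the argument at $p=d$, because the regime $p \geq d$ is exactly where point-wise closeness fails and a genuinely probabilistic argument is required. For $1 \leq p < d$ I would simply invoke Lemma~\ref{lem:pclose}: it provides $\bigl|\pa(X) - \pa_B(X)\bigr| = o\bigl(n^\stdexp\bigr)$ \emph{deterministically} and uniformly over all $n$-point sets $X \subseteq [0,1]^d$, so taking expectations over $X = \{U_1, \ldots, U_n\}$ yields $\expected\bigl(|\pa(n) - \pa_B(n)|\bigr) = o\bigl(n^\stdexp\bigr)$, which is precisely closeness in mean.

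The substantial case is $p \geq d$. Here $\pa_B(n) \leq \pa(n)$ by Lemma~\ref{lem:growth}, so $|\pa(n) - \pa_B(n)| = \pa(n) - \pa_B(n)$ and it suffices to bound this difference in expectation. I would condition on the high-probability event $\mathcal{G}$ (probability at least $1 - n^{-\beta}$, for $\beta$ fixed later) that every edge of an optimal boundary PA graph, including every connection to the boundary, has length at most $\redge = \cedge (\log n/n)^{1/d}$ (Lemma~\ref{lem:longest}) \emph{and} that every hyperball of radius $\rball$ centred in $[0,1]^d$ is occupied (Lemma~\ref{lem:emptyball}). On $\mathcal{G}$ I convert an optimal boundary PA graph into an ordinary connected PA graph on $X$ as follows. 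Deleting the boundary connections leaves a forest each of whose components is attached to $\partial[0,1]^d$; pick one boundary-attached representative per component. Since boundary connections have length at most $\redge$, every representative lies in the shell of width $\redge$ around the boundary. To reconnect the components without the boundary, I build a \emph{backbone}: tile the shell with cells of side length $\Theta(\redge)$, take one occupied point per cell, and join points of adjacent cells by edges of length $O(\redge)$, giving a connected subgraph of bounded degree that spans the whole shell. Finally I attach each representative to the backbone point in its cell. Assigning every backbone point and every representative the extra power $O(\redge^p)$ needed for its $O(1)$ new incident edges makes this a valid connected PA graph, so $\pa(n) \leq \pa_B(n) + (\text{backbone cost}) + (\text{attachment cost})$.

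It remains to estimate the added power, which is where the real difficulty lies, since for $p>d$ the target $o\bigl(n^\stdexp\bigr)$ tends to $0$ and thus demands a sharp bound. Up to lower-order corrections the shell is a $(d-1)$-dimensional surface, hence covered by $O(\redge^{\,1-d})$ cells; with $O(\redge^p)$ power per backbone point the backbone cost is $O(\redge^{\,p-d+1})$. Writing $\redge = \Theta\bigl((\log n/n)^{1/d}\bigr)$ and using $p \geq d$ (so $p-d+1 \geq 1$), this equals $O\bigl((\log n)^{(p-d+1)/d}\, n^{(d-p-1)/d}\bigr)$, which beats $n^\stdexp$ by a factor $n^{-1/d}$ and is therefore $o\bigl(n^\stdexp\bigr)$. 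For the attachment cost, the number of representatives is at most the number $N_{\mathrm{shell}}$ of points in the shell, and $\expected(N_{\mathrm{shell}}) = O(n\redge)$ since the shell has volume $O(\redge)$; as each attachment costs $O(\redge^p)$, the expected attachment cost is $O\bigl(n\,\redge^{\,p+1}\bigr) = O\bigl((\log n)^{(p+1)/d}\, n^{(d-p-1)/d}\bigr)$, again $o\bigl(n^\stdexp\bigr)$. Off $\mathcal{G}$ I use the deterministic growth bound $\pa(n) = O\bigl(\max\{n^\stdexp,1\}\bigr)$ of Lemma~\ref{lem:growth}, so the bad event contributes at most $O\bigl(\max\{n^\stdexp,1\}\bigr)\cdot n^{-\beta}$ to the expectation; choosing $\beta > (p-d)/d$ renders this $o\bigl(n^\stdexp\bigr)$ as well. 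Combining the two cases proves the lemma. The main obstacle is precisely the backbone estimate for $p>d$: one must confine all boundary interaction to the thin shell via the longest-edge lemma and reconnect along the lower-dimensional boundary, so that the $O(\redge^{\,1-d})$ cell count is overcome by the $O(\redge^{\,p})$ per-cell cost and the crucial extra factor $n^{-1/d}$ is gained.
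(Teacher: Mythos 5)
Your overall strategy is sound and close to the paper's: both arguments condition on the longest-edge event of Lemma~\ref{lem:longest} to confine all boundary interaction to a shell of width $\redge = O\bigl((\log n/n)^{1/d}\bigr)$, bound the number of boundary-attached components by the $O(n\redge)$ points in that shell, pay $O(\redge^p)$ per reconnection, and absorb the bad event via the growth bound; your computations ($O(n\redge^{p+1}) = o(n^\stdexp)$, the choice of $\beta$) are correct. The difference is the reconnection mechanism: the paper reuses the walking process from the proof of Lemma~\ref{lem:longest}, which merges two components by iterating Lemma~\ref{lem:iterclose} through occupied balls \emph{anywhere} in the cube and only charges the final two points of each walk, whereas you build an explicit backbone of occupied cells inside the shell. (Your case split at $p=d$ is unnecessary---the probabilistic argument covers $1\leq p<d$ as well---but harmless.)

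There is, however, one genuine gap: your backbone yields a connected graph only if the shell itself is connected, which holds for $d\geq 2$ (the boundary of the cube is homeomorphic to a sphere) but fails for $d=1$, where the shell is the union of the two disjoint intervals $[0,\redge]$ and $[1-\redge,1]$. If some components of the boundary PA graph attach at $0$ and others at $1$, your construction leaves two super-components that are never joined. The obvious repair---extending the backbone through the whole cube---costs $\Theta(\redge^{-d})\cdot\Theta(\redge^{p}) = \Theta\bigl((\log n/n)^{(p-d)/d}\bigr)$, which is \emph{not} $o(n^\stdexp)$ for $p\geq d$, so the thin-shell restriction you rightly emphasize is destroyed. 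You need a cheaper bridge: for $d=1$, observe that some pair of consecutive points lies in different super-components and is at distance at most $2\rball$ on the good event, so a single extra edge of cost $O(\rball^p)$ suffices; alternatively, use the paper's walking process, which handles all $d$ uniformly.
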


\begin{proof}
It is clear that $\pa_B(X) \leq \pa(X)$ for all $X$. Thus, in what follows,
we prove that
$\pa(X) \leq \pa_B(X) + o\bigl(n^{\stdexp}\bigr)$ holds with a probability
of at least $1 - n^{-\beta}$, where $\beta$ influences the constant hidden
in the $o$.
This implies closeness in mean.

With a probability of at least $1-n^{-\beta}$, the longest edge in
the graph that realizes $\pa_B(X)$ has a length of $\cedge \cdot (\log
n/n)^{1/d}$ (Lemma~\ref{lem:longest}).
Thus, with a probability of at least $1 - n^{-\beta}$ for any constant $\beta > 0$,
only vertices within a distance of at
most $\cedge \cdot (\log n/n)^{1/d}$ of the boundary are connected to
the boundary. As the $d$-dimensional unit cube is bounded by $2^d$
hyperplanes, the expected number of vertices that are so close to
the boundary is bounded from above by $\cedge n 2^d \cdot (\log
n/n)^{1/d} = O\bigl((\log n)^{1/d} n^{\frac{d-1}d}\bigr)$. With
a probability of at least $1 - n^{-\beta}$ for any $\beta > 0$, this
number is exceeded by no more than a
constant factor.

Removing these vertices causes the boundary PA graph to fall into at most $O\bigl((\log n)^{1/d} n^{\frac{d-1}d}\bigr)$ components.
We choose one vertex of every component and start the process described in the proof of Lemma~\ref{lem:longest}
to connect all of them.
The costs per connection is bounded from above by $O\bigl((\log n/n)^{p/d}\bigr)$ with
a probability of $1 - n^{-\beta}$ for any constant $\beta > 0$.
Thus, the total costs are bounded from above by
\[
O\bigl((\log n/n)^{p/d}\bigr) \cdot O\bigl((\log n)^{1/d} n^{\frac{d-1}d}\bigr)
= O\left((\log n)^{\frac{p-1}d} \cdot n^{\frac{d-1-p}d}\right) = o\bigl(n^\stdexp\bigr)
\]
with a probability of at least $1 - n^{-\beta}$ for any constant $\beta > 0$.
\end{proof}

\section{Convergence}
\label{sec:convergence}

\subsection{Standard Convergence}
\label{ssec:standard}

Our findings of Sections~\ref{ssec:deterministic} yield
complete convergence of $\pa$ for $p<d$ (Theorem~\ref{thm:stdcc}).
Together with the probabilistic properties of Section~\ref{ssec:probabilistic},
we obtain convergence in mean in a straightforward way for all combinations of $d$ and $p$ (Theorem~\ref{thm:convmean}).
In Sections~\ref{ssec:warnke} and~\ref{ssec:cc}, we prove complete
convergence for $p \geq d$.

\begin{theorem}
\label{thm:stdcc}
For all $d$ and $p$ with $1 \leq p < d$, there exists a constant $\gamma_{\pa}^{d,p}$
such that
\[
 \frac{\pa^p(n)}{n^{\stdexp}}
\]
converges completely to $\gamma_{\pa}^{d,p}$.
\end{theorem}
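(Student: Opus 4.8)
The plan is to recognize that the range $1 \leq p < d$ is precisely the classical regime in which $\pa$ behaves like a textbook smooth, subadditive Euclidean functional, so that complete convergence follows by feeding the structural lemmas of Section~\ref{ssec:deterministic} into Yukich's general machinery~\cite{Yukich:ProbEuclidean:1998} rather than by any new argument. Concretely, I would assemble the hypotheses Yukich requires: $\pa$ is subadditive (Lemma~\ref{lem:subadd}) and smooth (Lemma~\ref{lem:smooth}, valid for $p \leq d$); its boundary functional $\pa_B$ is superadditive (Lemma~\ref{lem:supadd}) and smooth (Lemma~\ref{lem:smoothboundary}); $\pa$ is point-wise close to $\pa_B$ (Lemma~\ref{lem:pclose}); and the growth bound (Lemma~\ref{lem:growth}) fixes the normalization $n^{\stdexp}$. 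The single hypothesis that fails at $p=d$ is point-wise closeness, which is exactly why this theorem stops short of $p=d$ and the harder range is deferred to the later sections.

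The argument then splits into convergence of the means and concentration. For the means, subadditivity of $\pa$ gives an upper bound on $\E(\pa^p(n))/n^{\stdexp}$ while superadditivity of $\pa_B$ gives a matching lower bound on $\E(\pa_B^p(n))/n^{\stdexp}$; point-wise closeness, whose error is $o(n^{\stdexp})$, pins both normalized expectations to a common limit, which I would take as the constant $\gamma_{\pa}^{d,p}$ of the statement. This is the standard subadditive/superadditive sandwich, and I would simply verify that our normalization and the closeness rate match the hypotheses of Yukich's umbrella theorem and cite it.

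For concentration, I would write $\pa^p(n)$ as a function of the i.i.d.\ points $U_1, \ldots, U_n$, form the Doob martingale, and invoke the concentration inequality that Yukich derives for smooth functionals; combined with convergence in mean this yields complete convergence through the Borel--Cantelli step of his Corollary~6.4. I expect the only delicate point to be that a naive bounded-difference estimate is too weak: deterministic smoothness bounds a single resampling increment only by $O(1)$, and Azuma--Hoeffding with $O(1)$ increments gives tail probabilities $\exp(-\Omega(\eps^2 n^{(d-2p)/d}))$, which are summable only for $p < d/2$. The resolution, already built into Yukich's framework for $p<d$, is that the increments are typically of order $n^{-p/d}$ rather than $O(1)$, so that the effective sum of squared increments is of order $n^{(d-2p)/d}$ and the tail sharpens to $\exp(-\Omega(\eps^2 n))$, which is summable for every $p < d$. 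The analogous quantitative statement in our own setting is the combination of smoothness in mean (Lemma~\ref{lem:smoothinmean}) with the longest-edge bound (Lemma~\ref{lem:longest}), which we use directly to handle $p \geq d$; for the present range it suffices to quote Yukich's theorem with the deterministic properties of Section~\ref{ssec:deterministic} as input, and I would keep the write-up to a verification of those hypotheses and a single invocation of his complete-convergence theorem.
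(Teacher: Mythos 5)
Your proposal is correct and follows essentially the same route as the paper: the paper's proof is a one-line invocation of the deterministic properties of Section~\ref{ssec:deterministic} (subadditivity, superadditivity, growth, smoothness, and point-wise closeness, the last of which indeed breaks down at $p=d$) together with Yukich's umbrella theorem and his Corollary~6.4, which is exactly the sandwich-plus-concentration argument you spell out. Your expanded discussion of why naive bounded differences would only cover $p<d/2$ is a correct gloss on what Yukich's machinery buys, but it is not needed beyond citing his results.
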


\begin{proof}
This follows from the results in Section~\ref{ssec:deterministic} together
with results by Yukich~\cite[Theorem 4.1, Corollary 6.4]{Yukich:ProbEuclidean:1998}.
\end{proof}

\begin{theorem}
\label{thm:convmean}
For all $p \geq 1$ and $d \geq 1$, there exists a constant $\gamma_{\pa}^{d,p}$ (equal to the constant
of Theorem~\ref{thm:stdcc} for $p<d$)
such that
\[
\lim_{n \to \infty} \frac{\expected\bigl(\pa^p(n)\bigr)}{n^{\stdexp}}
=
\lim_{n \to \infty} \frac{\expected\bigl(\pa^p_B(n)\bigr)}{n^{\stdexp}}
= \gamma_{\pa}^{d,p}.
\]
\end{theorem}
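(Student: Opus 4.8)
The plan is to prove convergence in mean for $\pa$ and $\pa_B$ by invoking Yukich's general machinery, which tells us that a Euclidean functional converges in mean to a constant once it is subadditive, superadditive (for the boundary version), close in mean to its boundary functional, and smooth in mean. All four ingredients have been established in Section~\ref{ssec:probabilistic} and earlier: subadditivity of $\pa$ (Lemma~\ref{lem:subadd}), superadditivity of $\pa_B$ (Lemma~\ref{lem:supadd}), closeness in mean (Lemma~\ref{lem:closemean}), and smoothness in mean (Lemma~\ref{lem:smoothinmean}). The growth bound (Lemma~\ref{lem:growth}) guarantees that the normalized quantities are bounded, so the limits in question are finite.

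First I would establish the result for the boundary functional $\pa_B$. Since $\pa_B$ is superadditive and smooth in mean, Yukich's framework yields that $\expected(\pa_B^p(n))/n^{\stdexp}$ converges to some constant; the standard argument is that superadditivity gives a subadditive-type recursion for the expectation, and smoothness in mean controls the fluctuation well enough that the usual Euclidean-functional limit theorem (the analogue of \cite[Theorem~4.1]{Yukich:ProbEuclidean:1998}) applies. Call this limit $\gamma_{\pa}^{d,p}$. For $1 \leq p < d$ this must agree with the constant from Theorem~\ref{thm:stdcc}, because complete convergence there already pins down $\lim \expected(\pa^p(n))/n^{\stdexp}$, and closeness in mean forces $\pa$ and $\pa_B$ to share the same limit.

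Next I would transfer the limit from $\pa_B$ to $\pa$ using closeness in mean (Lemma~\ref{lem:closemean}), which asserts $\expected(|\pa(n)-\pa_B(n)|)=o(n^{\stdexp})$. Dividing by $n^{\stdexp}$ and applying the triangle inequality, one gets
\[
\left| \frac{\expected(\pa^p(n))}{n^{\stdexp}} - \frac{\expected(\pa^p_B(n))}{n^{\stdexp}} \right| \leq \frac{\expected\bigl(|\pa(n)-\pa_B(n)|\bigr)}{n^{\stdexp}} \to 0,
\]
so both normalized expectations converge to the same constant $\gamma_{\pa}^{d,p}$. Since subadditivity of $\pa$ and superadditivity of $\pa_B$ sandwich the two functionals together, this identification is exactly what makes the common limit meaningful.

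The main obstacle I anticipate is not any single estimate but rather verifying that Yukich's limit theorem applies cleanly in the regime $p \geq d$, where $\pa$ is no longer guaranteed to be smooth in the deterministic sense (Lemma~\ref{lem:smooth} only covers $p \leq d$) nor point-wise close to its boundary functional (Lemma~\ref{lem:pclose} only covers $p < d$). The resolution is precisely that the probabilistic substitutes---smoothness in mean and closeness in mean---hold for \emph{all} $p \geq 1$ and all $d$, so the argument goes through uniformly; one only has to be careful that the high-probability estimates (holding with probability $1-n^{-\beta}$) combine with the deterministic growth bound to control the expectation, bounding the contribution of the low-probability bad event by a negligible term. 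This is where the growth bound of Lemma~\ref{lem:growth} does the essential work, ensuring the rare event contributes only $o(n^{\stdexp})$ to the expected value.
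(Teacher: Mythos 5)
Your proposal is correct and follows essentially the same route as the paper, which simply invokes Yukich's umbrella theorem \cite[Theorem~4.5]{Yukich:ProbEuclidean:1998} with exactly the ingredients you list (subadditivity, superadditivity of the boundary functional, smoothness in mean, closeness in mean, and the growth bound). You merely unfold the internal structure of that citation — first pinning down the limit for $\pa_B$ and then transferring it to $\pa$ via closeness in mean — which is a faithful expansion of the paper's one-line proof.
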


\begin{proof}
This follows from the results in Sections~\ref{ssec:deterministic}
and~\ref{ssec:probabilistic} together
with results by Yukich~\cite[Theorem~4.5]{Yukich:ProbEuclidean:1998}.
\end{proof}

\subsection{Concentration with Warnke's Inequality}
\label{ssec:warnke}

McDiarmid's or Azuma-Hoeffding's inequality are powerful tools to prove concentration of measure
for a function that depends on many independent random variables, all of which have only a bounded influence
on the function value. If we consider smoothness in mean (see Lemma~\ref{lem:smoothinmean}), then we have the situation that
the influence of a single variable is typically very small (namely $O((\log n/n)^{p/d})$), but can be quite large
in the worst case (namely $O(1)$). Unfortunately, this situation is not covered by McDiarmid's or Azuma-Hoeffding's
inequality.
Fortunately, Warnke~\cite{Warnke} proved a generalization specifically for the case that the influence of single
variables is typically bounded and fulfills a weaker bound in the worst case.

The following theorem is a simplified version (personal communication with Lutz Warnke) of Warnke's concentration
inequality~\cite[Theorem 2]{Warnke}, tailored to our needs.

\begin{theorem}[Warnke]
\label{thm:TL}
Let $U_1, \ldots, U_n$ be a family of independent random variables with $U_i \in [0,1]^d$ for each $i$.
Suppose that there are numbers $\cgood \leq \cbad$ and an event $\Gamma$ such that the function $\functional:([0,1]^d)^n \to \real$ satisfies
\begin{multline}
 \max_{i \in [n]}\max_{x \in [0,1]^d}\left|\functional(U_1, \ldots, U_n)-\functional(U_1, \ldots, U_{i-1}, x, U_{i+1}, \ldots, U_k)\right| 
  \\ \leq \begin{cases}
        \cgood & \text{if $\Gamma$ holds and}\\
        \cbad & \text{otherwise.}
\end{cases}\label{eq:TL}
\end{multline}
Then, for any $t \geq 0$ and $\gamma \in (0,1]$ and $\eta = \gamma(\cbad-\cgood)$, we have
\begin{equation}\label{eq:PTL}
\textstyle
\probab\bigl(|\functional(n) - \expected(\functional(n))| \geq t\bigr) \le 2\exp\bigl(-\frac{t^2}{2n (\cgood+\eta)^2}\bigr) + \frac{n}{\gamma} \cdot
\probab(\neg \Gamma) .
\end{equation}
\end{theorem}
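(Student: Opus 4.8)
The plan is to obtain this statement as a specialization of Warnke's general inequality~\cite[Theorem~2]{Warnke}, which is stated with per-coordinate Lipschitz constants and per-coordinate ``bad'' events. First I would instantiate that theorem with the uniform constants $c_i = \cgood$ and $d_i = \cbad$ for every $i \in [n]$, and with the single event $\Gamma$ in the role of the good event, so that the bad event attached to coordinate $i$ is $\neg\Gamma$ for each $i$. With these choices the hypothesis of Warnke's theorem is precisely~\eqref{eq:TL}: changing a single argument $U_i$ to an arbitrary $x \in [0,1]^d$ alters $\functional$ by at most $\cgood$ when $\Gamma$ holds and by at most $\cbad$ in general.

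Granting the general theorem, the two terms of~\eqref{eq:PTL} then drop out by direct computation. The sub-Gaussian term in Warnke's bound has denominator $2\sum_{i\in[n]}\hat c_i^{\,2}$ with effective constants $\hat c_i = c_i + \gamma(d_i - c_i)$; under our uniform choice every $\hat c_i$ equals $\cgood + \eta$ with $\eta = \gamma(\cbad-\cgood)$, so the sum collapses to $n(\cgood+\eta)^2$ and produces the factor $\exp\bigl(-t^2/(2n(\cgood+\eta)^2)\bigr)$. The additive error term has the form $\gamma^{-1}\sum_{i\in[n]}\probab(\text{bad}_i)$; since every bad event equals $\neg\Gamma$, a union bound caps this sum by $n\,\probab(\neg\Gamma)$, giving the term $\tfrac{n}{\gamma}\probab(\neg\Gamma)$.

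It remains to pass from the one-sided tail bound provided by the general theorem to the two-sided bound in~\eqref{eq:PTL}. I would apply the one-sided estimate to both $\functional$ and $-\functional$ -- the latter satisfies the same difference bounds~\eqref{eq:TL} with the same event $\Gamma$, since $|(-\functional)(x)-(-\functional)(x')| = |\functional(x)-\functional(x')|$ -- and add the two upper-tail probabilities; this accounts for the leading factor $2$. The precise form of the additive term (a single $\tfrac n\gamma\probab(\neg\Gamma)$ rather than twice that) is a matter of tracking constants against the exact statement of~\cite[Theorem~2]{Warnke}, and I expect the main obstacle to be exactly this faithful translation: verifying that the notational framework of the cited theorem -- in particular the way its good/bad events are indexed by coordinates and the precise shape of its typical-Lipschitz hypothesis -- collapses cleanly to the single-event, two-constant formulation used here, which is the simplification recorded in the cited personal communication. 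If instead one preferred a self-contained argument, the route would be the standard Doob martingale $Z_i = \expected[\functional \mid U_1,\ldots,U_i]$ together with the typical-bounded-differences trick: one bounds the moment generating function of each increment $Z_i - Z_{i-1}$ by splitting on $\Gamma$, using $\cgood$ on the good part and trading the worst-case bound $\cbad$ on the rare part for the effective bound $\cgood+\eta$ at the cost of the additive term $\gamma^{-1}\probab(\neg\Gamma)$ per coordinate; there the delicate point is controlling the increment's exponential moment on the event where only $\cbad$ is available.
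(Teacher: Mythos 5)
Your proposal is correct and matches the paper's own proof sketch: both derive the statement by specializing Warnke's Theorem~2 with uniform constants and the single event $\Gamma$ (the paper chooses $\gamma_i = \gamma$ for all $i$ so that the bridging event $\mathcal B$ there satisfies $\probab(\mathcal B) \leq \frac n\gamma \probab(\neg\Gamma)$), and both obtain the two-sided bound by symmetry plus the additive error term. The one detail you flag --- whether the additive term appears once or twice --- resolves in favor of a single copy because the same event $\mathcal B$ serves both tails, but this is immaterial to how the theorem is used.
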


\begin{proof}[Proof sketch]
There are two differences of this simplified variant to Warnke's result~\cite[Theorem 2]{Warnke}:
First, the numbers $\cgood$ and $\cbad$ do not depend on the index $i$
but are chosen uniformly for all indices.
Second, and more importantly, the event $\mathcal B$~\cite[Theorem 2]{Warnke} is not used in Theorem~\ref{thm:TL}.
In Warnke's theorem~\cite[Theorem 2]{Warnke}, the event $\mathcal B$ plays only a bridging role:
it is required that $\probab(\mathcal B) \leq \sum_{i =1}^n \frac 1{\gamma_i} \cdot \probab(\lnot
\Gamma)$ for some $\gamma_1, \ldots, \gamma_n$ that show up in the tail bound as well.
Choosing $\gamma_i = \gamma$ for all $i$ yields
$\probab(\mathcal B) \leq \frac n{\gamma} \cdot \probab(\lnot \Gamma)$.
Then
\[
 \probab\bigl(\functional(n) \geq \expected(\functional(n) + t \text{ and } \lnot \mathcal B\bigr) \le \exp\left(-\frac{t^2}{2n (\cgood+\eta)^2}\right)
\]
yields
\[
\probab\bigl(|\functional(n) - \expected(\functional(n))| \geq t\bigr) \le 2\exp\left(-\frac{t^2}{2n (\cgood+\eta)^2}\right) + \frac{n}{\gamma} \cdot
\probab(\neg \Gamma)
\]
by observing that a two-sided tail bound can be obtained by symmetry and adding an upper bound for the probability of $\mathcal B$ to the
right-hand side.
\end{proof}

Next, we define \emph{typical smoothness}, which means that, with high probability,
a single point does not have a significant influence on the value of~$\functional$, and we apply
Theorem~\ref{thm:TL} for typically smooth functionals $\functional$.
The bound of $c \cdot (\log n/n)^{p/d}$ in Definition~\ref{def:typsmooth} below for the typical influence of a single point is somewhat arbitrary, but works
for $\pa$ and $\mst$. This bound is also essentially the smallest possible, as for there can be regions of diameter $c' \cdot (\log n/n)^{1/d}$ for some small
constant $c' > 0$ that contain no or only a single point.
It might be possible to obtain convergence results for other functionals
for weaker notions of typical smoothness.

\begin{definition}[typically smooth]
\label{def:typsmooth}
A Euclidean functional $\functional$ is \emph{typically smooth}
if, for every $\beta > 0$, there exists a constant $c = c(\beta)$ such that
\[
  \max_{x \in [0,1]^d, i \in [n]} \bigl|\functional(U_1, \ldots, U_n)
 - \functional(U_1, \ldots, U_{i-1}, x, U_{i+1}, \ldots, U_n) \bigr| \leq 
  c \cdot \left(\frac{\log n}{n}\right)^{p/d}
\]
with a probability of at least $1 - n^{-\beta}$.
\end{definition}

\begin{theorem}[concentration of typically smooth functionals]
\label{thm:concentration}
Assume that $\functional$ is typically smooth.
Then
\[
\probab\bigl(|\functional(n) - \expected(\functional(n))| \geq t\bigr)
\leq O(n^{-\beta}) + \exp\left(- \frac{t^2 n^{\frac{2p}d -1}}{C (\log n)^{2p/d}}\right)
\]
for an arbitrarily large constant $\beta> 0$ and another constant $C>0$
that depends on $\beta$.
\end{theorem}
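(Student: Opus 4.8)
The plan is to derive this directly from Warnke's inequality (Theorem~\ref{thm:TL}), using typical smoothness to supply the ``good'' bounded-differences constant and the growth bound to supply the ``bad'' one. Concretely, I would let $\Gamma$ be exactly the event appearing in Definition~\ref{def:typsmooth}, namely the event that
\[
\max_{x \in [0,1]^d,\, i \in [n]} \bigl|\functional(U_1, \ldots, U_n) - \functional(U_1, \ldots, U_{i-1}, x, U_{i+1}, \ldots, U_n)\bigr| \leq c \cdot \left(\frac{\log n}{n}\right)^{p/d},
\]
and set $\cgood = c \cdot (\log n/n)^{p/d}$. Typical smoothness guarantees $\probab(\neg \Gamma) \leq n^{-\beta}$, where $\beta$ (and hence $c = c(\beta)$) can be taken as large as we like. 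For $\cbad$ I would use a deterministic worst-case bound on the effect of relocating a single point: since $0 \leq \functional \leq O(\max\{n^{\stdexp}, 1\})$ for the functionals in question by the growth bound (Lemma~\ref{lem:growth}), replacing one coordinate alters $\functional$ by at most $\cbad = O(\max\{n^{\stdexp}, 1\})$, which is at most polynomial in $n$ (indeed $\cbad = O(1)$ whenever $p \geq d$, the regime of interest). With these choices, condition~\eqref{eq:TL} holds by construction, and $\cgood \leq \cbad$ as required.

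Next I would fix the free parameter $\gamma \in (0,1]$, choosing it of order $\cgood/\cbad$ so that $\eta = \gamma(\cbad - \cgood) = O(\cgood)$ and therefore $(\cgood + \eta)^2 = O(\cgood^2) = O\bigl((\log n)^{2p/d} n^{-2p/d}\bigr)$. Substituting into the first term of~\eqref{eq:PTL} gives
\[
2 \exp\left(-\frac{t^2}{2n(\cgood + \eta)^2}\right) \leq \exp\left(-\frac{t^2 n^{\frac{2p}{d} - 1}}{C (\log n)^{2p/d}}\right)
\]
for a suitable constant $C$ depending only on $c$, where the leading factor $2$ is absorbed into a slightly larger $C$ (the bound being vacuous for small $t$ anyway). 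This reproduces the exponential term of the claim.

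For the second term I would observe that $\gamma \asymp \cgood/\cbad$ is bounded below by a fixed power $n^{-q}$, since $\cgood = n^{-p/d + o(1)}$ and $\cbad$ is polynomially bounded. Hence
\[
\frac{n}{\gamma} \cdot \probab(\neg \Gamma) \leq n^{1+q} \cdot n^{-\beta},
\]
which is $O(n^{-\beta'})$ for an arbitrarily large $\beta'$ once $\beta$ is chosen large enough in the application of typical smoothness; relabelling then yields the $O(n^{-\beta})$ error term of the statement.

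The hard part — really the only delicate point — will be keeping the error term $\frac{n}{\gamma}\probab(\neg\Gamma)$ polynomially small. This forces $\gamma$ to be bounded below by a fixed power of $n$, which in turn requires $\cbad$ to be at most polynomial in $n$; this is exactly what the growth bound delivers for $\pa$ and $\mst$. Everything else is bookkeeping: tracking how the constant $c = c(\beta)$ from typical smoothness propagates into $C$, and ensuring that the exponent $\beta$ invoked in Definition~\ref{def:typsmooth} exceeds the target decay rate by the additive constant $1 + q$ contributed by the $n/\gamma$ prefactor.
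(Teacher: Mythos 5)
Your proposal is correct and follows essentially the same route as the paper: take $\Gamma$ to be the typical-smoothness event, set $\cgood = c(\log n/n)^{p/d}$, bound $\cbad$ by a worst-case (polynomial, in fact $O(1)$) quantity, choose $\gamma$ of order $\cgood/\cbad$ so that $\eta = O(\cgood)$, and read off both terms from Warnke's inequality. The only cosmetic difference is that you justify the polynomial bound on $\cbad$ via the growth bound, whereas the paper simply asserts $\cbad = O(1)$; your bookkeeping of how $\beta$ must exceed the target decay rate by the $n/\gamma$ prefactor is exactly the paper's implicit argument.
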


\begin{proof}
We use Theorem~\ref{thm:TL}.
The event $\Gamma$ is that any point can change the value only by
at most $O\bigl((\log n/n)^{p/d})$.
Thus, $\cgood = O\bigl((\log n/n)^{p/d})$ and $\cbad = O(1)$.
The probability that we do not have the event $\Gamma$ is bounded by $O(n^{-\beta})$
for an arbitrarily large constant $\beta$ by typical smoothness.
This only influences the constant
hidden in the $O$ of the definition of $\cgood$.

We choose $\gamma = O\bigl((\log n/n)^{p/d})$.
In the notation of Theorem~\ref{thm:TL}, we choose $\eta = O(\gamma)$, which is possible as
$\cbad - \cgood \approx \cbad = \Theta(1)$.
Using the conclusion of Theorem~\ref{thm:TL} yields
\begin{align*}
\textstyle \probab\bigl(|\functional(n)  - \expected(\functional(n)) |  \geq t\bigr)
& \textstyle\leq \frac n \gamma \cdot \probab(\lnot \Gamma) + \exp\left(- \frac{t^2 n^{2p/d}}{nC (\log n)^{2p/d}}\right) \\
& \textstyle \leq O(n^{-\beta})+ \exp\left(- \frac{t^2 n^{2p/d}}{nC (\log n)^{2p/d}}\right)
\end{align*}
for some constant $C > 0$.
Here, $\beta$ can be chosen arbitrarily large.
\end{proof}

Choosing $t = n^\stdexp/\log n$ yields a nontrivial concentration
result that suffices to prove complete convergence of typically smooth Euclidean functionals.

\begin{corollary}
\label{cor:tail}
Assume that $\functional$ is typically smooth. Then
\begin{equation}
\probab\bigl(|\functional(n) - \expected(\functional(n))| > n^\stdexp/\log n\bigr)
     \leq O\left(n^{-\beta}  + \exp\left(- \frac{n}{C (\log n)^{2 + \frac{2p}d}}\right) \right) \label{equ:explicitconcentration}
\end{equation}
for any constant $\beta$ and $C$ depending on $\beta$ as in Theorem~\ref{thm:concentration}.
\end{corollary}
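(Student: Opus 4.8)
The plan is to apply Theorem~\ref{thm:concentration} verbatim with the single substitution $t = n^{\stdexp}/\log n$ and then simplify the exponent in the Gaussian term. Since $\functional$ is assumed typically smooth, Theorem~\ref{thm:concentration} already supplies
\[
\probab\bigl(|\functional(n) - \expected(\functional(n))| \geq t\bigr)
\leq O(n^{-\beta}) + \exp\left(- \frac{t^2 n^{\frac{2p}d -1}}{C (\log n)^{2p/d}}\right)
\]
for the same $\beta$ and $C$, so nothing new about the functional itself has to be established; the corollary is just a specialization of this inequality to one convenient value of $t$.

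The only real work is evaluating the exponent for this $t$. With $t = n^{\stdexp}/\log n$ we have $t^2 = n^{2\stdexp}/(\log n)^2$, and I would multiply this by the factor $n^{\frac{2p}d-1}$ appearing in the numerator. The exponent of $n$ then collects to $2\cdot\frac{d-p}{d} + \frac{2p}{d} - 1 = 1$, so that $t^2 n^{\frac{2p}d-1} = n/(\log n)^2$. Dividing by the denominator $C(\log n)^{2p/d}$ produces exactly $n / \bigl(C (\log n)^{2 + \frac{2p}d}\bigr)$, which is the exponent claimed in~\eqref{equ:explicitconcentration}. Folding the $O(n^{-\beta})$ term and the Gaussian term under a common $O(\cdot)$ then yields the stated bound.

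There is essentially no obstacle here: the argument is a routine exponent computation, and the substitution is precisely the one announced in the sentence preceding the corollary. The only point worth double-checking is that taking $t$ a factor $\log n$ below the natural scale $n^{\stdexp}$ is exactly what converts the clean exponent $n/(\log n)^2$ into the stated $n/\bigl(C(\log n)^{2+2p/d}\bigr)$; this confirms that the chosen $t$ is small enough to remain $o(n^{\stdexp})$ while still large enough that the Gaussian term decays faster than any polynomial in $n$, which is what will make the tail summable when we pass to complete convergence.
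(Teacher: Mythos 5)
Your proposal is correct and follows exactly the route the paper intends: substitute $t = n^{\stdexp}/\log n$ into Theorem~\ref{thm:concentration} and check that the exponent collects to $n/\bigl(C(\log n)^{2+2p/d}\bigr)$, which your computation $2\cdot\frac{d-p}{d}+\frac{2p}{d}-1=1$ verifies. The paper's own one-line proof instead gestures at the fact that convergence in mean gives $\expected(\functional(n))=\Theta(n^{\stdexp})$, but as the corollary is stated (deviation from $\expected(\functional(n))$, not from $\gamma_{\functional}^{d,p}n^{\stdexp}$) that observation is not needed, and your direct substitution is the cleaner justification.
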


\begin{proof}
The proof is straightforward by exploiting that the assumption that $\functional(n)/n^\stdexp$ converges in mean to $\gamma_{\functional}^{d,p}$
implies $\expected(\functional(n)) = \Theta(n^\stdexp)$.
\end{proof}

\subsection[Complete Convergence for p>=d]{\boldmath Complete Convergence for $p \geq d$}
\label{ssec:cc}
In this section, we prove that typical smoothness (Definition~\ref{def:typsmooth}) suffices for complete convergence.
This implies complete convergence of $\mst$ and $\pa$ by Lemma~\ref{lem:typsmooth} below.

\begin{theorem}
\label{thm:cc}
Assume that $\functional$ is typically smooth and
$\functional(n)/n^{\stdexp}$ converges in mean to $\gamma_{\functional}^{d,p}$.
Then $\functional(n)/n^{\stdexp}$ converges completely to $\gamma_{\functional}^{d,p}$.
\end{theorem}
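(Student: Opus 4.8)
The goal is to verify the defining inequality for complete convergence, namely
$\sum_{n=1}^\infty \probab\bigl(|\functional(n)/n^{\stdexp} - \gamma_{\functional}^{d,p}| > \eps\bigr) < \infty$
for every $\eps > 0$. The plan is to split the deviation of $\functional(n)/n^{\stdexp}$ from $\gamma_{\functional}^{d,p}$ into a deterministic part (the gap between the normalized mean and $\gamma_{\functional}^{d,p}$) and a stochastic part (the fluctuation of $\functional(n)$ about its own mean), controlling the former by convergence in mean and the latter by the concentration bound of Theorem~\ref{thm:concentration}.

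First I would use convergence in mean. Since $\expected\bigl(|\functional(n)/n^{\stdexp} - \gamma_{\functional}^{d,p}|\bigr) \to 0$ and $|\expected(Z)| \leq \expected(|Z|)$, the normalized means converge, so $\expected(\functional(n))/n^{\stdexp} \to \gamma_{\functional}^{d,p}$. Hence there is an $n_0 = n_0(\eps)$ such that $|\expected(\functional(n))/n^{\stdexp} - \gamma_{\functional}^{d,p}| < \eps/2$ for all $n \geq n_0$. For these $n$, the triangle inequality gives the inclusion $\{|\functional(n)/n^{\stdexp} - \gamma_{\functional}^{d,p}| > \eps\} \subseteq \{|\functional(n) - \expected(\functional(n))| > (\eps/2)\, n^{\stdexp}\}$, so it suffices to bound the probability of the latter, purely stochastic, event.

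Next I would invoke Theorem~\ref{thm:concentration} with $t = (\eps/2)\, n^{\stdexp}$. The key observation is that this choice exactly cancels the normalization in the exponent: $t^2 n^{2p/d - 1} = (\eps/2)^2\, n^{2(d-p)/d + 2p/d - 1} = (\eps/2)^2\, n$, so the Gaussian factor collapses to $\exp\bigl(-(\eps/2)^2 n / (C (\log n)^{2p/d})\bigr)$. Thus, for all $n \geq n_0$,
\[
\probab\bigl(|\functional(n)/n^{\stdexp} - \gamma_{\functional}^{d,p}| > \eps\bigr) \leq O(n^{-\beta}) + \exp\Bigl(-\frac{(\eps/2)^2\, n}{C (\log n)^{2p/d}}\Bigr).
\]

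Finally I would sum over $n$. Fixing the free parameter $\beta > 1$ (say $\beta = 2$) makes $\sum_n O(n^{-\beta})$ converge; and since $n/(\log n)^{2p/d} \to \infty$ grows faster than any constant multiple of $\log n$, the exponential term is eventually smaller than $n^{-2}$ and hence summable as well. The finitely many terms with $n < n_0$ contribute a finite amount, so the full series converges, which is precisely complete convergence. The step requiring the most care is the interplay between $\eps$ and the constants: $\beta$ must be chosen independently of $\eps$ (which Theorem~\ref{thm:concentration} permits, allowing $\beta$ arbitrarily large with $C = C(\beta)$ depending only on $\beta$), while $\eps$ enters only through the numerator $(\eps/2)^2$ of the still-summable exponential term. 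Because $\beta$ and $C$ do not depend on $\eps$, no uniformity issue arises, and the argument goes through for every $\eps > 0$.
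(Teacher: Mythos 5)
Your proposal is correct and takes essentially the same route as the paper: both split the deviation into the deterministic mean gap (controlled by convergence in mean, giving an $n_0$ past which the gap is below $\eps/2$) and the stochastic fluctuation about the mean (controlled by the concentration result), and then sum a series dominated by $n^{-2}$. The only cosmetic difference is that you apply Theorem~\ref{thm:concentration} directly with $t = (\eps/2)\,n^{\stdexp}$ and verify the exponent collapses to $(\eps/2)^2 n/(C(\log n)^{2p/d})$, whereas the paper routes through Corollary~\ref{cor:tail} with the $\eps$-independent choice $t = n^{\stdexp}/\log n$; both choices yield a summable tail.
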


\begin{proof}
Fix any $\eps > 0$. Since
\[
  \lim_{n \to \infty} \expected\left(\frac{\functional(n)}{n^{\stdexp}}\right) = \gamma_{\functional}^{d,p} ,
\]
there exists an $n_0$
such that
\[
  \expected\left(\frac{\functional(n)}{n^{\stdexp}}\right) \in \left[\gamma_{\functional}^{d,p} - \frac \eps 2 ,\gamma_{\functional}^{d,p} + \frac \eps 2 \right]
\]
for all $n \geq n_0$.

Furthermore, there exists an $n_1$ such that, for
all $n \geq n_1$, the probability that $\functional(n)/n^\stdexp$
deviates by more than $\eps/2$ from its expected value is smaller
than $n^{-2}$ for all $n \geq n_1$. To see this, we use
Corollary~\ref{cor:tail} and observe that the right-hand side of
\eqref{equ:explicitconcentration} is $O(n^{-2})$ for sufficiently
large $\beta$ and that the event on the left-hand side is
equivalent to
\[
\left|\frac{\functional(n)}{n^\stdexp} - \frac{\expected(\functional(n))}{n^\stdexp}\right| > O\left(\frac 1{\log n}\right),
\]
where $O(1/\log n) < \eps/2$ for sufficiently large $n_1$ and $n \geq n_1$.
Let $n_2 = \max\{n_0, n_1\}$. Then
\[
\sum_{n=1}^\infty \probab\left(\left|\frac{\pa(X)}{n^{\stdexp}}\right| > \eps \right) \leq n_2 + \sum_{n=n_2+1}^\infty n^{-2}
 = n_2 + O(1) < \infty.
\]
\end{proof}

Although similar in flavor, smoothness in mean does not immediately imply typical smoothness or vice versa:
the latter makes only a statement about \emph{single} points at \emph{worst-case} positions. The former only makes a statement about adding and removing
\emph{several} points at \emph{random} positions. However, the proofs of smoothness in mean for $\mst$ and $\pa$ do not exploit this, and we can adapt them
to yield typical smoothness.

\begin{lemma}
\label{lem:typsmooth}
$\pa$ and $\mst$ are typically smooth.
\end{lemma}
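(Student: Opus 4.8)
The plan is to establish typical smoothness (Definition~\ref{def:typsmooth}) for both functionals by showing that, on the high-probability event that the longest edge is short (Lemma~\ref{lem:longest}) and that all balls of the relevant radius are nonempty (Lemma~\ref{lem:emptyball}), replacing a single point $U_i$ by an arbitrary $x \in [0,1]^d$ changes the functional value by at most $O\bigl((\log n/n)^{p/d}\bigr)$. The key observation is that changing one point amounts to deleting $U_i$ and then inserting $x$, so it suffices to bound the effect of a single deletion and a single insertion separately, each by $O\bigl((\log n/n)^{p/d}\bigr)$ with failure probability $n^{-\beta}$; the triangle-type estimate then gives the claim with a doubled constant. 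I would treat $\pa$ first and then indicate the (easier) modifications for $\mst$, since $\mst$ has bounded degree and is the classical case.

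For the insertion direction for $\pa$, I would argue exactly as in the upper-bound half of Lemma~\ref{lem:smoothinmean}: by Lemma~\ref{lem:emptyball} applied with a suitable $\beta$, with probability at least $1-n^{-\beta}$ there is an existing point within distance $O\bigl((\log n/n)^{1/d}\bigr)$ of $x$, so $x$ can be hung onto the optimal PA graph of the remaining points at extra cost $O\bigl((\log n/n)^{p/d}\bigr)$. For the deletion direction, I would import the redistribution argument from the proof of Lemma~\ref{lem:smooth}: removing the point $v=U_i$ and spreading its power over the $m=O(1)$ cones rooted at $v$ leaves only a constant number of connected components, by the same cone-counting bound $k \le \log_{\sqrt3}(\sqrt[p]m)+1$ coming from Lemma~\ref{lem:conefactor}\eqref{sqrtthree}. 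The crucial quantitative improvement over Lemma~\ref{lem:smooth} is that here we work on the event of Lemma~\ref{lem:longest}: every edge of the optimal PA graph has length at most $\redge = O\bigl((\log n/n)^{1/d}\bigr)$ with probability $1-n^{-\beta}$, so $\ell \le \redge$ and each of the constantly many reconnections costs only $O(\redge^p) = O\bigl((\log n/n)^{p/d}\bigr)$ rather than the worst-case $O(1)$. Summing over the constant number of components gives the required bound for a single deletion.

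The same scheme works for $\mst$, and is in fact simpler: since the maximum degree of a minimum spanning tree is bounded by a constant depending only on $d$, deleting a point $U_i$ splits the tree into only $O(1)$ pieces directly, without any power-redistribution argument, and on the event of Lemma~\ref{lem:longest} each piece can be reconnected at cost $O(\redge^p)$; insertion is handled by Lemma~\ref{lem:emptyball} as before. I would remark that Lemma~\ref{lem:longest} (and its short-longest-edge conclusion) applies to $\mst$ via the standard longest-edge bounds cited before the lemma, or one may simply invoke that $\mst \le \pa \le 2\mst$ to transfer the short-edge event.

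The main obstacle is the deletion step for $\pa$: unlike $\mst$, the degree of $v$ is unbounded (Lemma~\ref{lem:unbounded}), so one genuinely needs the cone redistribution to control the number of resulting components, and one must be careful that the power freed by removing $v$ (at least $\ell^p$) is enough to fund the increments $\ell^p/m$ handed to the $m$ cone-representatives \emph{and} that the subsequent inter-component reconnections are charged to the short-longest-edge event rather than to $v$'s power. Keeping these two accounting sources disjoint, and ensuring all the high-probability events (Lemma~\ref{lem:emptyball}, Lemma~\ref{lem:longest}) are invoked with a common $\beta$ so that a union bound still leaves failure probability $n^{-\beta}$ with $\beta$ arbitrarily large, is the delicate part; everything else follows the already-established proofs essentially verbatim.
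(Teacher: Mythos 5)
Your proposal is correct and follows essentially the same route as the paper: the paper's proof of Lemma~\ref{lem:typsmooth} reduces replacement to a deletion plus an insertion and observes that the argument for smoothness in mean (Lemma~\ref{lem:smoothinmean}) --- insertion via Lemma~\ref{lem:emptyball}, deletion via the cone redistribution of Lemma~\ref{lem:smooth} combined with the longest-edge bound of Lemma~\ref{lem:longest} --- never uses randomness of the modified point, which is exactly the argument you spell out. Your version is merely more explicit about the accounting and the union bound over the high-probability events.
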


\begin{proof}
We first consider $\pa$. Replacing a point $U_k$ by some other (worst-case) point $z$ can be modeled by removing $U_k$
and adding $z$. We observe that, in the proof of smoothness in mean (Lemma~\ref{lem:smoothinmean},
we did not exploit that the point added is at a random position, but the proof goes through for any single point at an arbitrary position.
Also the other way around, i.e., removing $z$ and replacing it by a random point $U_k$, works in the same way.
Thus, $\pa$ is typically smooth.

Closely examining Yukich's proof of smoothness in mean for
$\mst$~\cite[Lemma 4.8]{Yukich:ProbEuclidean:1998} yields the same
result for $\mst$.
\end{proof}

\begin{corollary}
\label{cor:ccexample}
For all $d$ and $p$ with $p \geq 1$,
$\mst(n)/n^\stdexp$ and $\pa(n)/n^\stdexp$
converge completely
to constants $\gamma_{\mst}^{d,p}$ and $\gamma_{\pa}^{d,p}$, respectively.
\end{corollary}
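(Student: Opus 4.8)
The final statement, Corollary~\ref{cor:ccexample}, asserts complete convergence of both $\mst(n)/n^{\stdexp}$ and $\pa(n)/n^{\stdexp}$ for all $d$ and all $p \geq 1$. The plan is to deduce this directly by combining the general convergence machinery established earlier with the specific structural properties proved for these two functionals. The key engine is Theorem~\ref{thm:cc}, which states that typical smoothness together with convergence in mean implies complete convergence. So the task reduces to verifying both hypotheses for each of the two functionals across the full parameter range.

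First I would dispatch typical smoothness: this is exactly the content of Lemma~\ref{lem:typsmooth}, which establishes that both $\pa$ and $\mst$ are typically smooth (for $\pa$ by adapting the proof of smoothness in mean from Lemma~\ref{lem:smoothinmean} to worst-case single-point replacements, and for $\mst$ by examining Yukich's analogous argument). Second, I would establish convergence in mean. For $\pa$, this is furnished by Theorem~\ref{thm:convmean}, which guarantees the existence of the limiting constant $\gamma_{\pa}^{d,p}$ for all $p \geq 1$ and $d \geq 1$. For $\mst$, convergence in mean of $\mst(n)/n^{\stdexp}$ to a constant $\gamma_{\mst}^{d,p}$ is classical Yukich framework material (subadditivity, superadditivity of the boundary functional, closeness in mean, and the growth bound all hold for the MST), so I would simply invoke the existence of $\gamma_{\mst}^{d,p}$ as established by Yukich's general results.

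With both hypotheses in hand, the proof is a two-line application: for each functional, typical smoothness plus convergence in mean feeds into Theorem~\ref{thm:cc}, yielding complete convergence to the respective constant. The only subtlety worth flagging is that the power of this argument lies precisely in covering the regime $p \geq d$, where the classical route to complete convergence (via pointwise closeness and the smoothness of Lemma~\ref{lem:smooth}, which require $p < d$ or $p \leq d$) breaks down. For $1 \leq p < d$ one already has complete convergence of $\pa$ from Theorem~\ref{thm:stdcc}, so the corollary's real novelty is the range $p \geq d$.

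The main obstacle is not in the final assembly, which is essentially automatic, but in ensuring that the hypotheses genuinely hold uniformly in $p$ — in particular that convergence in mean for $\mst$ is available for $p \geq d$ as well. This is guaranteed because the Warnke-based concentration argument underlying Corollary~\ref{cor:tail} and Theorem~\ref{thm:cc} never requires $p < d$; it only requires typical smoothness and a mean limit, both of which survive into the $p \geq d$ regime. Thus I would write the proof as: both functionals are typically smooth by Lemma~\ref{lem:typsmooth}; both converge in mean (by Theorem~\ref{thm:convmean} for $\pa$, and by Yukich's framework for $\mst$); therefore both converge completely by Theorem~\ref{thm:cc}.
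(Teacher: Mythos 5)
Your proposal matches the paper's proof: both functionals are typically smooth (Lemma~\ref{lem:typsmooth}) and converge in mean (Theorem~\ref{thm:convmean} for $\pa$, Yukich's framework for $\mst$), so Theorem~\ref{thm:cc} gives complete convergence. Your additional remarks on why the hypotheses persist for $p \geq d$ are correct but not needed beyond what the paper states.
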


\begin{proof}
Both $\mst$ and $\pa$ are typically smooth and converge in mean. Thus, the corollary follows
from Theorem~\ref{thm:cc}.
\end{proof}

\begin{remark}
Instead of Warnke's method of typical bounded differences, we could also have used Kutin's extension of McDiarmid's inequality~\cite[Chapter 3]{Kutin:PhD:2002}.
However, this inequality yields only convergence for $p \leq 2d$, which is still an improvement over the previous complete convergence of $p<d$,
but weaker than what we get with Warnke's inequality. Furthermore, Warnke's inequality is easier to apply
and a more natural extension in the following way: intuitively, one might think that we could just take
McDiarmid's inequality and add the probability that we are not in a nice situation using a simple union bound, but, in general, this is not true~\cite[Section 2.2]{Warnke}.
\end{remark}

\section{Average-Case Approximation Ratio of the MST Heuristic}
\label{sec:MST}

In this section, we show that the average-case approximation
ratio of the MST heuristic for power assignments is strictly better than its worst-case ratio of $2$. First, we prove
that the average-case bound is strictly (albeit marginally) better than $2$ for any combination of $d$ and $p$.
Second, we show a simple improved bound for the 1-dimensional case.

\subsection{The General Case}
\label{ssec:generalcase}

The idea behind showing that the MST heuristic performs better on average than in the worst case is as follows:
the weight of the PA graph obtained from the MST heuristic can not only be upper-bounded by twice the weight of an MST,
but it is in fact easy to prove that it can be upper-bounded by twice the weight of the heavier half of the edges of the MST~\cite{AveragePA}.
Thus, we only have to show that the lighter half of the edges of the MST contributes $\Omega(n^\stdexp)$ to the value of the MST in expectation.

For simplicity, we assume that the number $n=2m+1$ of points is odd.
The case of even $n$ is similar but slightly more technical.
We draw points $X=\{U_1, \ldots, U_n\}$ as described above.
Let $\pt(X)$ denote the power required in the
power assignment obtained from the MST. Furthermore, let $\heavy$ denote
the $m$ heaviest edges of the MST, and let $\light$ denote the $m$
lightest edges of the MST. We omit the parameter $X$ since it is clear from the
context.
Then we have
\begin{equation}
\label{equ:treerelation}
 \heavy + \light = \mst \leq \pa \leq \pt \leq 2 \heavy = 2 \mst
  - 2\light \leq 2\mst
\end{equation}
since the weight of the PA graph obtained from an MST can not only be upper bounded by
twice the weight of a minimum-weight spanning tree, but it is easy to show that the PA graph obtained from the MST is in fact by twice the weight of the heavier half of the edges
of a minimum-weight spanning tree~\cite{AveragePA}.

For distances raised to the power $p$, the expected value of $\mst$ is
$(\gamma_{\mst}^{d,p} \pm o(1)) \cdot n^{\stdexp}$.
If we can prove that the lightest $m$ edges of the MST are of weight
$\Omega(n^{\stdexp})$, then
it follows that the MST power assignment is strictly less than twice the optimal power
assignment.
$\light$ is lower-bounded by the weight of the lightest $m$ edges of the whole graph
without any further constraints. Let $\reallight = \reallight(X)$ denote the weight
of these $m$ lightest edges of the whole graph. Note that both $\light$ and $\reallight$
take edge lengths to the $p$-power, and we have $\reallight \leq \light$.

Let $c$ be a small constant to be specified later on.
Let $v_{d,r} = \frac{\pi^{d/2} r^d}{\Gamma(\frac n2 +1)}$ be the volume of a
$d$-dimensional ball of radius $r$. For compactness,
we abbreviate $c_d = \frac{\pi^{d/2}}{\Gamma(\frac n2 +1)}$, thus
$v_{d, r} = c_d r^d$. Note that all $c_d$'s are constants since
$d$ is constant.

The probability $P_k$ that a fixed vertex $v$ has at least $k$ other vertices within a distance
of at most $r=\ell \cdot \sqrt[d]{1/n}$ for some constant $\ell > 0$ is bounded from above
by
\[
   P_k \leq \binom{n-1}k \cdot v_{d,r}^k
   \leq \frac{n^k (c_dr^d)^k}{k!}
   =\frac{n^k (c_d \ell^d n^{-1})^k}{k!}
   = \frac{\tilde c^k}{k!}
\]
for another constant $\tilde c = \ell^d c_d$.
This follows from independence and a union bound.
The expected number of edges of a specific vertex that have a length of at most
$r$ is thus bounded from above by
\[
   \sum_{k=1}^{n-1} P_k \leq \sum_{k=1}^{n-1} \frac{\tilde c^k}{k!}
   \leq \sum_{k=1}^\infty \frac{\tilde c^k}{k!} = e^{\tilde c} -1.
\]
By choosing $\ell$ appropriately small, we can achieve that $\tilde c \leq 1/3$.
This yields $e^{\tilde c} - 1 < 1/2$.
By linearity of expectation, the total number of edges of length at most $r$
in the whole graph is bounded from above by $m/2$.
Thus, at least $m/2$ of the lightest $m$ edges of the whole graph have a length of at least $r$.
Hence, the expected value of $\reallight$ is bounded from below by
\[
  \frac m2 \cdot r^p = \frac m2 \cdot \ell^p n^{-\frac pd} \leq
\frac{\ell^p}4 \cdot n^{\stdexp} = C_{\reallight}^{d,p} \cdot n^{\stdexp} .
\]
for some constant $C_{\reallight}^{d,p} > 0$.
Then the expected value of $\pt$ is bounded from above by
\[
  \left(2\gamma_{\mst}^{d,p} - 2 C_{\reallight}^{d,p} + o(1)\right) \cdot n^{\stdexp}
\]
by \eqref{equ:treerelation}.
From this and the convergence of $\pa$, we can conclude the following theorem.

\begin{theorem}
\label{thm:mstratio}
For any $d \geq 1$ and any $p \geq 1$, we have
\[
\gamma_{\mst}^{d,p} \leq \gamma_{\pa}^{d,p} \leq
2 \gamma_{\mst}^{d,p} - 2 C_{\reallight}^{d,p} < 2 \gamma_{\mst}^{d,p}
\]
for some constant $C_{\reallight}^{d,p} > 0$ that depends only on $d$ and $p$.
\end{theorem}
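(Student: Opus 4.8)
The plan is to read off all three (in)equalities by taking expectations in the pointwise chain~\eqref{equ:treerelation}, inserting the two expectation estimates established in the paragraph above, dividing by $n^{\stdexp}$, and passing to the limit. All the analytic content is already in place; the existence of both limiting constants is secured beforehand, namely $\gamma_{\pa}^{d,p}$ by Theorem~\ref{thm:convmean} (which covers all $p \geq 1$ and $d \geq 1$, including $p \geq d$) and $\gamma_{\mst}^{d,p}$ by the recorded asymptotics $\E(\mst(n)) = (\gamma_{\mst}^{d,p} + o(1)) \cdot n^{\stdexp}$. What remains is bookkeeping.

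First I would dispatch the left inequality. Since $\mst(X) \leq \pa(X)$ holds pointwise by~\eqref{equ:treerelation}, taking expectations and dividing by $n^{\stdexp}$ gives $\E(\mst(n))/n^{\stdexp} \leq \E(\pa(n))/n^{\stdexp}$, and letting $n \to \infty$ yields $\gamma_{\mst}^{d,p} \leq \gamma_{\pa}^{d,p}$. Next I would prove the upper bound. Chaining~\eqref{equ:treerelation} together with $\reallight \leq \light$ gives the pointwise inequality $\pa(X) \leq \pt(X) \leq 2\,\mst(X) - 2\,\reallight(X)$. Taking expectations and substituting the MST asymptotics together with the lower bound $\E(\reallight(n)) \geq C_{\reallight}^{d,p} \cdot n^{\stdexp}$ derived above, I obtain
\[
\E\bigl(\pa(n)\bigr) \leq \bigl(2\gamma_{\mst}^{d,p} - 2 C_{\reallight}^{d,p} + o(1)\bigr) \cdot n^{\stdexp} .
\]
Dividing by $n^{\stdexp}$, letting $n \to \infty$, and once more using Theorem~\ref{thm:convmean} to identify the left-hand limit with $\gamma_{\pa}^{d,p}$, this gives $\gamma_{\pa}^{d,p} \leq 2\gamma_{\mst}^{d,p} - 2 C_{\reallight}^{d,p}$. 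The concluding strict inequality is then immediate from $C_{\reallight}^{d,p} > 0$, which is exactly what the computation of $\E(\reallight(n))$ above established (the choice of $\ell$ forcing $\tilde c \leq 1/3$ guarantees that at least $m/2$ of the lightest $m$ edges have length $\Omega(n^{-1/d})$, so their $p$-th powers sum to $\Omega(n^{\stdexp})$ in expectation).

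The only genuine subtlety -- and hence the main obstacle, such as it is -- is bookkeeping rather than a new idea: the lower bound on $\E(\reallight(n))$ was derived only for odd $n = 2m+1$, whereas $\gamma_{\pa}^{d,p}$ and $\gamma_{\mst}^{d,p}$ are defined as limits over all $n$. This causes no harm, because Theorem~\ref{thm:convmean} guarantees that $\E(\pa(n))/n^{\stdexp}$ actually converges as $n \to \infty$; a convergent sequence shares its limit with every subsequence, so the bound obtained along the odd subsequence is valid for the full limit $\gamma_{\pa}^{d,p}$, and symmetrically for $\gamma_{\mst}^{d,p}$. I would remark that the even case requires only the mild additional care already flagged in the text, and otherwise runs identically.
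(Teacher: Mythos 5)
Your proposal is correct and matches the paper's own argument, which likewise reads the theorem off by taking expectations in the chain \eqref{equ:treerelation} together with $\reallight \leq \light$, inserting $\expected(\mst(n)) = (\gamma_{\mst}^{d,p} + o(1))\cdot n^{\stdexp}$ and $\expected(\reallight(n)) \geq C_{\reallight}^{d,p}\cdot n^{\stdexp}$, and passing to the limit. Your explicit subsequence remark for odd $n$ is a sound way of handling a point the paper only mentions in passing.
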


By exploiting that in particular $\pa$ converges completely, we can obtain
a bound on the expected approximation ratio from the above result.

\begin{corollary}
\label{cor:mstratio}
For any $d \geq 1$ and $p \geq 1$ and sufficiently large $n$,
the expected approximation ratio of the MST heuristic for power assignments
is bounded from above by a constant strictly smaller than $2$.
\end{corollary}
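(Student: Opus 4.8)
The plan is to control the expected ratio $\expected\bigl(\pt(n)/\pa(n)\bigr)$ by marrying the deterministic worst-case guarantee $\pt/\pa \leq 2$ (which follows from \eqref{equ:treerelation}) with the sharp first-moment bound $\expected\bigl(\pt(n)\bigr) \leq \bigl(2\gamma_{\mst}^{d,p} - 2C_{\reallight}^{d,p} + o(1)\bigr) n^{\stdexp}$ established just before Theorem~\ref{thm:mstratio}, together with the concentration of the denominator $\pa(n)$ that its complete convergence (Corollary~\ref{cor:ccexample}) provides. The whole point is that $\pa$ does not merely converge in mean but concentrates, so that with overwhelming probability $\pa(n)$ is close to $\gamma_{\pa}^{d,p} n^{\stdexp}$ and can be replaced by a deterministic lower bound.

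Concretely, I would fix a small $\delta > 0$ and introduce the good event $A_n = \{\pa(n) \geq (\gamma_{\pa}^{d,p}-\delta)\, n^{\stdexp}\}$. Complete convergence of $\pa(n)/n^{\stdexp}$ to $\gamma_{\pa}^{d,p}$ yields $\probab(\neg A_n) \to 0$, and since $\gamma_{\pa}^{d,p} \geq \gamma_{\mst}^{d,p} > 0$ the threshold is positive for large $n$. I then split the expectation along $A_n$. On $A_n$ the denominator is positive and bounded below, so
\[
\expected\!\left[\frac{\pt(n)}{\pa(n)}\,\mathbf{1}_{A_n}\right]
\leq \frac{\expected\bigl(\pt(n)\bigr)}{(\gamma_{\pa}^{d,p}-\delta)\,n^{\stdexp}}
\leq \frac{2\gamma_{\mst}^{d,p} - 2C_{\reallight}^{d,p} + o(1)}{\gamma_{\pa}^{d,p}-\delta},
\]
while on the complementary event the worst-case bound gives $\expected[(\pt/\pa)\,\mathbf{1}_{\neg A_n}] \leq 2\,\probab(\neg A_n) \to 0$.

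Combining the two pieces, then letting $n\to\infty$ and afterwards $\delta \to 0$, yields
\[
\limsup_{n\to\infty}\expected\!\left[\frac{\pt(n)}{\pa(n)}\right]
\leq \frac{2\gamma_{\mst}^{d,p}-2C_{\reallight}^{d,p}}{\gamma_{\pa}^{d,p}}
\leq 2 - \frac{2C_{\reallight}^{d,p}}{\gamma_{\mst}^{d,p}} < 2,
\]
where the middle step uses $\gamma_{\pa}^{d,p} \geq \gamma_{\mst}^{d,p}$ (note the numerator is positive, since $\pt \geq \mst$ forces $2\gamma_{\mst}^{d,p}-2C_{\reallight}^{d,p} \geq \gamma_{\mst}^{d,p}$) and the final strict inequality uses $C_{\reallight}^{d,p} > 0$ and $\gamma_{\mst}^{d,p} > 0$. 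Hence for all sufficiently large $n$ the expected approximation ratio is bounded by a fixed constant below $2$.

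The main obstacle I anticipate is conceptual rather than computational: $\expected(\pt/\pa)$ is the expectation of a \emph{ratio}, which is not governed by the quotient of the separate expectations $\expected(\pt)/\expected(\pa)$. The resolution, and the reason complete convergence (not just convergence in mean) of $\pa$ is invoked, is exactly the $A_n$/$\neg A_n$ decomposition above: concentration lets me freeze the denominator at a deterministic value on the dominant event, and the universal factor-$2$ bound harmlessly absorbs the vanishing-probability tail. A minor point to verify along the way is the positivity of $\gamma_{\mst}^{d,p}$ for $p\ge 1$, which is needed both for the threshold in $A_n$ and for the final strict inequality.
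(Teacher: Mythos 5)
Your proposal is correct and follows essentially the same route as the paper: both split the expectation on the event that $\pa(n)/n^{\stdexp}$ is within $\eps$ (your $\delta$) of $\gamma_{\pa}^{d,p}$, bound the ratio there by $\bigl(2\gamma_{\mst}^{d,p}-2C_{\reallight}^{d,p}+o(1)\bigr)/\bigl(\gamma_{\pa}^{d,p}-\eps\bigr)$, and absorb the complementary event via the worst-case factor $2$ times its $o(1)$ probability. Your version is, if anything, slightly more careful in using $\expected[\pt\,\mathbf{1}_{A_n}]\leq\expected[\pt]$ rather than informally conditioning, and in noting the positivity of $\gamma_{\mst}^{d,p}$.
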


\begin{proof}
The expected approximation ratio is $\expected\bigl(\pt(n)/\pa(n)\bigr)
= \expected\bigl(\frac{\pt(n)/n^\stdexp}{\pa(n)/n^\stdexp}\bigr)$.
We know that $\pa(n)/n^\stdexp$ converges completely to $\gamma_{\pa}^{d,p}$.
This implies that the probability that $\pa(n)/n^\stdexp$ deviates by more than $\eps>0$
from $\gamma_{\pa}^{d,p}$ is $o(1)$ for any $\eps > 0$.

If $\pa(n)/n^\stdexp \in [\gamma_{\pa}^{d,p} - \eps, \gamma_{\pa}^{d,p} + \eps]$,
then the expected approximation ratio can be bounded from above by $\frac{2 \gamma_{\mst}^{d,p} - 2 C_{\reallight}^{d,p}}{\gamma_{\pa}^{d,p} - \eps}$.
This is strictly smaller than $2$ for a sufficiently small $\eps > 0$.

Otherwise, we bound the expected approximation ratio by the worst-case ratio of $2$, which contributes only $o(1)$ to its expected value.
\end{proof}

\begin{remark}
\label{rem:ptconv}
Complete convergence of the functional $\pt$ as well as smoothness and closeness in mean has been shown for the specific case $p=d$~\cite{AveragePA}.
We believe that $\pt$ converges completely for all $p$ and $d$.
Since then $\gamma_{\pt}^{d,p} \leq 2 \gamma_{\mst}^{d,p} - 2 C_{\reallight}^{d,p} <
2 \gamma_{\mst}^{d,p}$, we would obtain
a simpler proof of Corollary~\ref{cor:mstratio}.
\end{remark}

\subsection{An Improved Bound for the One-Dimensional Case}

The case $d=1$ is much simpler than the general case, because the MST is just a Hamiltonian path starting at the left-most and ending
at the right-most point. Furthermore, we also know precisely what the MST heuristic does:
assume that a point $x_i$ lies between $x_{i-1}$ and $x_{i+1}$. The MST heuristic assigns power
$\pa(x_i) = \max\{|x_{i} - x_{i-1}|, |x_{i} - x_{i+1}|\}^p$ to $x_i$. The example that proves that the MST heuristic is
no better than a worst-case 2-approximation shows that it is bad if $x_i$ is very close to either side and good if $x_i$
is approximately in the middle between $x_{i-1}$ and $x_{i+1}$.

In order to show an improved bound for the approximation ratio of the MST heuristic for $d=1$, we introduce some notation. First we remark that for $X=\{U_1,\ldots, U_n\} $ with high probability, there is no subinterval
of length $c \log n/n$ of $[0,1]$ that does not contain any of the $n$ points
$U_1, \ldots, U_n$ (see Lemma~\ref{lem:emptyball} for the precise statement).

We assume that no interval of length $c \log n/n$ is empty
for some sufficiently large constant $c$ for the rest of this section.

We proceed as follows:
Let $x_0 = 0$, $x_{n+1} = 1$, and let $x_1 \leq \ldots \leq x_n$ be the $n$ points (sorted in increasing
order) that are drawn uniformly and independently from the interval $[0,1]$.

Now we distribute the weight of the power assignment $\pt(X)$ in the
power assignment obtained from the MST, and the
weight of the MST as follows: For the power
assignment, every point $x_i$ (for $1 \leq i \leq n$) gets a
charge of $P_i = \max\{x_i - x_{i-1}, x_{i+1}-x_i\}^p$. This is
precisely the power that this point needs in the power assignment
obtained from the spanning tree. For the minimum spanning tree, we
divide the power of an edge $(x_{i-1}, x_i)$ (for $1 \leq i \leq
n+1$) evenly between $x_{i-1}$ and $x_i$. This means that the
charge of $x_i$ is $M_i = \frac 12 \cdot \bigl((x_i - x_{i-1})^p +
(x_{i+1}-x_i)^{p}\bigr)$.

The length of the minimum spanning tree is thus
\[
  \mst =\underbrace{\sum_{i=1}^n M_i}_{M^\star} + \underbrace{\frac 12 \cdot \bigl((x_1-x_0)^p + (x_{n+1} - x_n)^p\bigr)}_{=M'}.
\]
The total power for the power assignment obtained from this tree
is
\[
  \pt = \underbrace{\sum_{i=1}^n P_i}_{P^\star} + \underbrace{(x_1-x_0)^p + (x_{n+1} - x_n)^p}_{=P'}.
\]
Note the following: If the largest empty interval has a length of at most $c \log n/n$, then
the terms $P'$ and $M'$ are negligible according to the following lemma.
Thus, we ignore $P'$ and $M'$ afterwards to simplify the analysis.

\begin{lemma}
\label{lem:negligible}
Assume that the largest empty interval has a length of at most $c \log n/n$.
Then $M' = O\bigl(M^\star \cdot \frac{(\log n)^p}{n}\bigr)$
and $P' = O\bigl(P^\star \cdot \frac{(\log n)^p}{n}\bigr)$.
\end{lemma}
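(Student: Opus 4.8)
The plan is to reduce both claimed bounds to a single lower bound on $M^\star$, since the boundary terms $M'$ and $P'$ are trivial to bound from above. Write $g_i = x_i - x_{i-1}$ for the gaps, $1 \le i \le n+1$, so that $\sum_{i=1}^{n+1} g_i = 1$. The empty-interval assumption states precisely that no subinterval of $[0,1]$ of length $c\log n/n$ is free of points; applying it to $[0, c\log n/n]$ and to $[1 - c\log n/n, 1]$ forces the two boundary gaps to satisfy $g_1 = x_1 - x_0 \le c\log n/n$ and $g_{n+1} = x_{n+1} - x_n \le c\log n/n$. Consequently $M' = \frac12(g_1^p + g_{n+1}^p) = O\bigl((\log n/n)^p\bigr)$ and $P' = g_1^p + g_{n+1}^p = O\bigl((\log n/n)^p\bigr)$.

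Next I would establish the lower bound $M^\star = \Omega(n^{1-p})$, which is the heart of the argument. Expanding $M^\star = \sum_{i=1}^n M_i$ with $M_i = \frac12(g_i^p + g_{i+1}^p)$ and collecting the coefficient of each gap gives $M^\star = \frac12 g_1^p + \sum_{i=2}^n g_i^p + \frac12 g_{n+1}^p \ge \sum_{i=2}^n g_i^p$, so it suffices to bound the $p$-th powers of the $n-1$ interior gaps. Their total length is $\sum_{i=2}^n g_i = 1 - g_1 - g_{n+1} \ge 1 - 2c\log n/n \ge \frac12$ for $n$ large. Since $p \ge 1$ makes $t \mapsto t^p$ convex, Jensen's inequality (equivalently the power-mean inequality) yields $\frac{1}{n-1}\sum_{i=2}^n g_i^p \ge \bigl(\frac{1}{n-1}\sum_{i=2}^n g_i\bigr)^p \ge \bigl(\frac{1}{2(n-1)}\bigr)^p$, whence $M^\star \ge \sum_{i=2}^n g_i^p \ge 2^{-p}(n-1)^{1-p} = \Omega(n^{1-p})$, with $n^{1-p} = n^{\stdexp}$ for $d=1$.

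Combining the two estimates gives the first claim: $\frac{M'}{M^\star} = O\bigl(\frac{(\log n/n)^p}{n^{1-p}}\bigr) = O\bigl(\frac{(\log n)^p}{n}\bigr)$, i.e.\ $M' = O\bigl(M^\star \cdot \frac{(\log n)^p}{n}\bigr)$. For the second claim I would transfer the lower bound from $M^\star$ to $P^\star$ using the elementary inequality $\max\{a,b\}^p = \max\{a^p,b^p\} \ge \frac12(a^p + b^p)$, valid for any $p > 0$; applied termwise this gives $P_i \ge M_i$ and hence $P^\star \ge M^\star = \Omega(n^{1-p})$. Since $P' = O\bigl((\log n/n)^p\bigr)$ as well, the identical computation yields $\frac{P'}{P^\star} = O\bigl(\frac{(\log n)^p}{n}\bigr)$, completing the proof. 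The only nontrivial step is the lower bound on $M^\star$; once the convexity argument pins it at order $n^{\stdexp}$, everything else is bookkeeping, and I expect the boundary-gap bounds and the $P \ge M$ comparison to cause no difficulty.
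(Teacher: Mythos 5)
Your proof is correct and follows essentially the same route as the paper: bound $M'$ and $P'$ above by $O((\log n/n)^p)$ via the empty-interval assumption, and bound $M^\star$ and $P^\star$ below by $\Omega(n^{1-p})$ via convexity of $t \mapsto t^p$. Your write-up is in fact slightly more careful than the paper's (which asserts the minimum of $M^\star$ is attained at the equidistant configuration and handles $P^\star$ by "a similar calculation"), since you make the Jensen step explicit on the interior gaps and transfer the bound to $P^\star$ cleanly via $P_i \geq M_i$.
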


\begin{proof}
We have $M' \leq (c \log n/n)^p$ and $P' \leq 2 (c \log n/n)^p$ because
$x_1 \leq c \log n/n$ and $x_n \geq 1-c \log n/n$ by assumption.
Thus, $M', P' = O\bigl(\bigl(\frac{\log n}n\bigr)^p\bigr)$.
Furthermore,
\[
  M^\star = \sum_{i=1}^n \frac 12 \cdot \left(\bigl(x_i - x_{i-1}\bigr)^p + \bigl(x_{i+1}-x_i\bigr)^p\right).
\]
Since $p \geq 1$, this function becomes minimal if we place $x_1, \ldots, x_n$ equidistantly.
Thus,
\[
  M^\star \geq \sum_{i=1}^n \left(\frac 1{n+1}\right)^p = n \cdot \left(\frac 1{n+1}\right)^p
   = \Omega\bigl(n^{1-p}\bigr).
\]
With a similar calculation, we obtain $P^\star = \Omega\bigl(n^{1-p}\bigr)$
and the result follows.
\end{proof}

For simplicity, we assume from now on that $n$ is even. If $n$ is odd, the proof proceeds
in exactly the same way except for some changes in the indices.
In order to analyze $M$ and $P$, we proceed in two steps:
First, we draw all points $x_1, x_3, \ldots, x_{n-1}$ (called the \emph{odd points}). Given the locations of these
points, $x_i$ for even $i$ ($x_i$ is then called an \emph{even point})
is distributed uniformly in the interval
$[x_{i-1}, x_{i+1}]$.
Note that we do not really draw the odd points. Instead, we let an adversary fix these points.
But the adversary is not allowed to keep an interval of length $c \log n/n$ free (because
randomness would not do so either with high probability).
Then the sums
\[
 \meven = \sum_{i=1}^{n/2} M_{2i}
\]
and
\[
  \peven = \sum_{i=1}^{n/2} P_{2i}
\]
are sums of independent random variables. (Of course $M_{2i}$ and $P_{2i}$ are dependent.)
Now let $\ell_{2i} = x_{2i+1} - x_{2i-1}$ be the length of the interval for $x_{2i}$.
The expected value of $M_{2i}$ is
\[
  \expected(M_{2i}) = \frac 1{\ell_{2i}} \cdot \int_{0}^{\ell_{2i}} \frac 12 \cdot \bigl(x^p + (\ell_{2i} - x)^p\bigr)
   \,\text dx
   = \frac{\ell_{2i}^p}{p+1}.
\]
Analogously, we obtain
\begin{align*}
   \expected(P_{2i}) & = \frac 1{\ell_{2i}} \cdot \int_{0}^{\ell_{2i}} \max\{x, \ell_{2i} - x\}^p \,\text dx \\
   & = \frac 2{\ell_{2i}} \cdot \int_{0}^{\ell_{2i}/2} (\ell_{2i} - x)^p \,\text dx
   = \left(2-\frac{1}{2^p}\right) \cdot \frac{\ell_{2i}^p}{p+1} .
\end{align*}

We observe that $\expected(P_{2i})$ is a factor $2-2^{-p}$ greater than $\expected(M_{2i})$. In the same way,
the expected value of $P_{2i+1}$ is a factor of $2-2^{-p}$ greater than the expected value of $M_{2i+1}$.
This is already an indicator that the approximation ratio should be $2-2^{-p}$.

Because $\meven$ and $\peven$ are sums of independent random variables, we can
use Hoeffding's inequality to bound the probability that they deviate from the expected values
$\expected(\meven)$ and $\expected(\peven)$.

\begin{lemma}[Hoeffding's inequality~\cite{Hoeffding:SumsBounded:1963}]
\label{lem:hoeffding}
  Let $X_1, \ldots, X_m$ be independent random variables, where $X_i$ assumes
  values in the interval $[a_i, b_i]$. Let $X = \sum_{i = 1}^m X_i$. Then for
  all $t > 0$,
  \[
         \probab\bigl(X - \expected(X) \geq t\bigr)
    \leq \exp\left(-\frac{2 t^2}{\sum_{i = 1}^m (b_i-a_i)^2}\right).
  \]
  By symmetry, the same bound holds for
  $\probab\bigl(X - \expected(X) \leq -t\bigr)$.
\end{lemma}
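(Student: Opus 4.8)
The plan is to prove the one-sided bound by the exponential-moment (Chernoff) method; the two-sided statement then follows by the symmetry noted in the lemma. First I would fix a parameter $s > 0$ and apply Markov's inequality to the nonnegative random variable $e^{s(X - \expected(X))}$, obtaining
\[
\probab\bigl(X - \expected(X) \geq t\bigr) \leq e^{-st} \cdot \expected\bigl(e^{s(X - \expected(X))}\bigr).
\]
Because the $X_i$ are independent, the exponential of the centered sum factorizes, so that $\expected(e^{s(X - \expected(X))}) = \prod_{i=1}^m \expected(e^{s(X_i - \expected(X_i))})$. This reduces the whole problem to bounding the moment-generating function of a single centered, bounded summand.

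The central ingredient is Hoeffding's lemma: if $Y$ satisfies $\expected(Y) = 0$ and $Y \in [a,b]$ almost surely, then $\expected(e^{sY}) \leq \exp(s^2 (b-a)^2 / 8)$. To establish it, I would use the convexity of $y \mapsto e^{sy}$ to bound it on $[a,b]$ by the secant line through the endpoints, $e^{sy} \leq \frac{b-y}{b-a} e^{sa} + \frac{y-a}{b-a} e^{sb}$. Taking expectations and using $\expected(Y)=0$ eliminates the linear term and leaves a quantity depending only on $a$, $b$, and $s$; writing $\theta = -a/(b-a) \in [0,1]$ and $u = s(b-a)$, its logarithm equals $\phi(u) = -\theta u + \log(1 - \theta + \theta e^{u})$. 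A direct computation gives $\phi(0) = \phi'(0) = 0$, and differentiating once more shows $\phi''(u) = \rho(1-\rho)$, where $\rho = \rho(u) \in (0,1)$ is the logistic weight $\theta e^u / (1 - \theta + \theta e^u)$. Since $\rho(1-\rho) \leq 1/4$, Taylor's theorem with remainder yields $\phi(u) \leq u^2/8$, which is exactly the asserted bound.

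Applying Hoeffding's lemma to each factor, with the interval endpoints shifted to $a_i - \expected(X_i)$ and $b_i - \expected(X_i)$ so that the width $b_i - a_i$ is unchanged, gives $\expected(e^{s(X - \expected(X))}) \leq \exp\bigl(s^2 \sum_{i=1}^m (b_i - a_i)^2 / 8\bigr)$, and hence
\[
\probab\bigl(X - \expected(X) \geq t\bigr) \leq \exp\left(-st + \frac{s^2}{8} \sum_{i=1}^m (b_i - a_i)^2\right).
\]
The final step is to optimize over the free parameter: the exponent is minimized at $s = 4t / \sum_{i=1}^m (b_i - a_i)^2$, and substituting this value collapses the right-hand side to the claimed bound $\exp\bigl(-2t^2 / \sum_{i=1}^m (b_i - a_i)^2\bigr)$. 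I expect the only genuine obstacle to lie in Hoeffding's lemma, and within it in the verification that $\phi''(u) \leq 1/4$; the key observation there is that $\phi''(u)$ has the form $\rho(1-\rho)$ for a probability $\rho \in (0,1)$, so that the elementary bound $\rho(1-\rho) \leq 1/4$ finishes the argument. The remaining steps are standard calculus and bookkeeping.
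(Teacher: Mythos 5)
Your proof is correct. The paper itself does not prove this lemma at all --- it is quoted verbatim as a classical result with a citation to Hoeffding's 1963 paper --- so there is no internal proof to compare against; your argument is the standard exponential-moment (Chernoff) proof built on Hoeffding's lemma, and every step checks out: the Markov bound, the factorization by independence, the secant-line bound with the reparametrization $\theta=-a/(b-a)$, the computation $\phi''(u)=\rho(1-\rho)\le 1/4$, the Taylor bound $\phi(u)\le u^2/8$, and the optimization $s=4t/\sum_i(b_i-a_i)^2$. The only (cosmetic) caveat is the degenerate case where some $b_i=a_i$, for which $\theta$ is undefined but the factor is trivially $1$; this does not affect the result.
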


Let us start with analyzing the probability that
$\meven < (1-n^{-1/4}) \cdot \expected(\meven)$.
We have $m=n/2$ in the above.
Furthermore, we have $b_i = \ell_{2i}^p/2$
(obtained if $x_{2i} = x_{2i-1}$ or $x_{2i} = x_{2i+1}$)
and $a_i = (\ell_{2i}/2)^p$.
Thus, $(b_i - a_i)^2 = \ell_{2i}^{2p} \cdot (2^{-1} - 2^{-p})^2$.
If $p > 1$ is a constant, then this is $c_{p} \ell_{2i}^{2p}$
for some constant $c_p$.
For $p = 1$, it is $0$. However, in this case, the length of the minimum spanning tree
is exactly $1$, without any randomness. Thus, for $p = 1$, we do not have to apply Hoeffding's inequality.

For $p > 1$, we obtain
\begin{align}
\probab\left(\meven < \bigl(1-n^{-1/4}\bigr) \cdot \expected(\meven)\right)
& \leq \exp\left( - \frac{2 n^{-1/2} \expected(\meven)^2}{\sum_{i=1}^{n/2}c_{p} \ell_{2i}^{2p}}\right) \notag \\
& = \exp\left( - \frac{2 n^{-1/2} \bigl(\sum_{i=1}^{n/2} \frac{\ell_{2i}^p}{p+1}\bigr)^2}{\sum_{i=1}^{n/2}c_{p} \ell_{2i}^{2p}}\right) \notag \\
& = \exp\left( - c' n^{-1/2} \cdot \frac{\bigl(\sum_{i=1}^{n/2} \ell_{2i}^p\bigr)^2}{\sum_{i=1}^{n/2} \ell_{2i}^{2p}}\right)
\label{fraction}
\end{align}
with $c' = \frac{2}{(p+1)^2 c_p}$.
To estimate the exponent, we use the following technical lemma.

\begin{lemma}
\label{lem:tech}
Let $p \geq 1$ be a constant.
Let $s_1, \ldots, s_m \in [0, \beta]$ be positive numbers for some $\beta > 0$
with $\sum_{i=1}^m s_i = \gamma$ for some number $\gamma$. (We assume that $m\beta \geq \gamma$.)
Then
\[
  \frac{\bigl(\sum_{i=1}^m s_i^p\bigr)^2}{\sum_{i=1}^m s_i^{2p}} \geq m \cdot \left(\frac{\gamma}{m\beta}\right)^p.
\]
\end{lemma}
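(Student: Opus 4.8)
The plan is to decouple the ratio into a numerator that I bound from below and a denominator that I bound from above, using the two hypotheses ($p \geq 1$ and $s_i \leq \beta$) in separate places. The key observation is that the upper box constraint should be ``spent'' on the denominator, after which a single power-sum remains, which is exactly the quantity that convexity controls from below.

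First I would control the denominator using only the box constraint $s_i \leq \beta$. Since $p \geq 1$ gives $s_i^p \leq \beta^p$, we have $s_i^{2p} = s_i^p \cdot s_i^p \leq \beta^p s_i^p$ for every $i$, and summing yields $\sum_{i=1}^m s_i^{2p} \leq \beta^p \sum_{i=1}^m s_i^p$. Substituting this into the denominator, one factor of $\sum s_i^p$ cancels and the ratio becomes
\[
  \frac{\bigl(\sum_{i=1}^m s_i^p\bigr)^2}{\sum_{i=1}^m s_i^{2p}}
  \geq \frac{\bigl(\sum_{i=1}^m s_i^p\bigr)^2}{\beta^p \sum_{i=1}^m s_i^p}
  = \frac{1}{\beta^p}\sum_{i=1}^m s_i^p.
\]
Second, I would bound $\sum_{i=1}^m s_i^p$ from below using the sum constraint and convexity. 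Because $x \mapsto x^p$ is convex for $p \geq 1$, Jensen's inequality (equivalently the power-mean inequality) gives $\frac{1}{m}\sum_{i=1}^m s_i^p \geq \bigl(\frac{1}{m}\sum_{i=1}^m s_i\bigr)^p = (\gamma/m)^p$, so $\sum_{i=1}^m s_i^p \geq m(\gamma/m)^p$. This step uses neither the upper bound $\beta$ nor the feasibility assumption $m\beta \geq \gamma$; the latter only guarantees that configurations with the prescribed sum and box exist. Combining the two estimates finishes the proof:
\[
  \frac{\bigl(\sum_{i=1}^m s_i^p\bigr)^2}{\sum_{i=1}^m s_i^{2p}}
  \geq \frac{m(\gamma/m)^p}{\beta^p}
  = m\left(\frac{\gamma}{m\beta}\right)^p.
\]

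I do not expect a genuine obstacle here; the only point requiring care is the decoupling idea itself, namely recognizing that the box constraint belongs on the denominator via $s_i^p \leq \beta^p$ while the residual single power-sum is precisely what Jensen controls from below. A useful sanity check is that both inequalities become tight simultaneously when all $s_i = \gamma/m = \beta$ (that is, $\gamma = m\beta$), in which case the bound reads $m$ and coincides with the Cauchy--Schwarz upper bound $\bigl(\sum s_i^p\bigr)^2 \leq m\sum s_i^{2p}$; this confirms that the numerator/denominator pairing is the right one and that no slack is lost in the extreme case that the application to \eqref{fraction} ultimately relies on.
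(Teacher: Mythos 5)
Your proof is correct and follows essentially the same route as the paper's: the paper also replaces the factor $s_i^p$ in each denominator term by $\beta^p$ (phrased as comparing the ``coefficient'' of $s_i^p$ in numerator and denominator), cancels one power sum, and then invokes convexity of $x \mapsto x^p$ to bound the remaining $\sum_i s_i^p$ from below by $m(\gamma/m)^p$. Your write-up is, if anything, slightly cleaner in making the Jensen step explicit.
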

\begin{proof}
We rewrite the numerator as
\[
\sum_{i=1}^m s_i^p \sum_{j=1}^m s_j^p
\]
and the denominator as
\[
\sum_{i=1}^m s_i^p s_i^p .
\]
Now we see that the coefficient for $s_i^p$ in the numerator is
$\sum_{j=1}^m s_j^p$, and it is $s_i^p \leq \beta^p$ in the denominator.
Because of $p \geq 1$, the sum $\sum_{j=1}^m s_j^p$ is convex as a function
of the $s_j$. Thus, it becomes minimal if all $s_j$ are equal.
Thus, the numerator is bounded from below
by $m \cdot (\gamma/m)^p$.
\end{proof}

With these results, we obtain the following theorem.

\begin{theorem}
\label{thm:d1}
For all $p \geq 1$, we have $\gamma_{\mst}^{1,p} \leq \gamma_{\pa}^{1,p} \leq (2 - 2^{-p}) \cdot \gamma_{\mst}^{1,p}$.
\end{theorem}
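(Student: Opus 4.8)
The plan is to prove the two inequalities of Theorem~\ref{thm:d1} separately. The left inequality $\gamma_{\mst}^{1,p} \leq \gamma_{\pa}^{1,p}$ is immediate from the chain~\eqref{equ:treerelation}, namely $\mst \leq \pa$, which holds deterministically and therefore passes to the limit constants. The substance of the theorem is the upper bound $\gamma_{\pa}^{1,p} \leq (2-2^{-p}) \cdot \gamma_{\mst}^{1,p}$, and the strategy is to show that this comparison holds between the functionals $\pt$ and $\mst$ in expectation up to lower-order fluctuations, then invoke the established convergence of the relevant functionals to pass to the ratio of limiting constants. Concretely, I would show that $\expected(\pt(n)) \leq (2-2^{-p}+o(1)) \cdot \expected(\mst(n))$ and use $\pa \leq \pt$ together with $\pa(n)/n^\stdexp$ converging (Corollary~\ref{cor:ccexample}) to conclude.

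First I would discard the boundary terms: by Lemma~\ref{lem:negligible}, under the high-probability event that no interval of length $c\log n/n$ is empty (Lemma~\ref{lem:emptyball}), the contributions $P'$ and $M'$ are a factor $O((\log n)^p/n)$ smaller than $P^\star$ and $M^\star$, so it suffices to compare $P^\star = \sum_i P_i$ against $M^\star = \sum_i M_i$. Next I would combine the even and odd sums. The excerpt computes the key expectations conditioned on the odd points: $\expected(M_{2i}) = \ell_{2i}^p/(p+1)$ and $\expected(P_{2i}) = (2-2^{-p}) \cdot \ell_{2i}^p/(p+1)$, so that $\expected(\peven) = (2-2^{-p}) \cdot \expected(\meven)$ exactly, and the symmetric statement holds after reversing the roles of odd and even points. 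Summing the two decompositions gives $\expected(P^\star) = (2-2^{-p}) \cdot \expected(M^\star)$, i.e.\ the desired factor holds in expectation.

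To upgrade this from an expectation statement about $\peven$ versus $\meven$ to a statement about the limiting constants, I would use the Hoeffding concentration set up in the excerpt. The bound~\eqref{fraction}, combined with Lemma~\ref{lem:tech} applied to $s_i = \ell_{2i}$ (with $\gamma = \sum_i \ell_{2i} \leq 1$ and $\beta = c\log n/n$ from the no-empty-interval assumption), shows that $\meven$ is at least $(1-n^{-1/4})\expected(\meven)$ except with probability exponentially small in a positive power of $n$; the key point is that Lemma~\ref{lem:tech} forces the variance-controlling ratio $(\sum \ell_{2i}^p)^2 / \sum \ell_{2i}^{2p}$ to be at least $\Omega(m \cdot (\log n/n)^{-p}\cdot \gamma^p)$-large enough that the exponent in~\eqref{fraction} diverges. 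Analogous one-sided bounds for $\peven$, $\modd$, $\podd$ then give, with overwhelming probability, $P^\star \leq (2-2^{-p})(1+o(1)) M^\star$. Therefore $\pt(n) \leq (2-2^{-p}+o(1)) \mst(n)$ with probability $1-o(1)$, and since $\pa \leq \pt$ while $\mst(n)/n^\stdexp \to \gamma_{\mst}^{1,p}$ and $\pa(n)/n^\stdexp \to \gamma_{\pa}^{1,p}$ completely, dividing by $n^\stdexp$ and passing to the limit yields $\gamma_{\pa}^{1,p} \leq (2-2^{-p})\gamma_{\mst}^{1,p}$.

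The main obstacle I anticipate is the bookkeeping in the concentration step rather than any single hard estimate: one must apply Hoeffding conditionally (fixing the odd points, then the even points) and control four sums ($\meven,\modd,\peven,\podd$) simultaneously, ensuring that the adversarial choice of the conditioned points respects the no-empty-interval constraint so that Lemma~\ref{lem:tech} applies with the correct $\beta = \Theta(\log n/n)$. Care is also needed to verify that the exponentially small failure probabilities survive a union bound and integrate to a convergent series (or at least are $o(1)$), so that the high-probability ratio bound legitimately transfers to the deterministic limiting constants. The edge case $p=1$ is genuinely different, as noted in the excerpt: there $\mst=1$ deterministically and no concentration is required, so that case should be handled by a short separate remark rather than through Hoeffding's inequality.
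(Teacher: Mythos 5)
Your proposal is correct and follows essentially the same route as the paper: the even/odd conditioning, the conditional expectation ratio $2-2^{-p}$ for $P_{2i}$ versus $M_{2i}$, Hoeffding combined with Lemma~\ref{lem:tech} (with $\beta=\Theta(\log n/n)$ from the no-empty-interval event) to concentrate $\meven,\peven,\modd,\podd$, discarding $M'$ and $P'$ via Lemma~\ref{lem:negligible}, and the separate remark for $p=1$. The only difference is presentational -- you spell out the passage from the high-probability ratio bound to the limiting constants more explicitly than the paper does, which is a welcome clarification rather than a deviation.
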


\begin{proof}
The first inequality is immediate.
For the second inequality, we apply Lemma~\ref{lem:tech} with $\beta = \frac{4\log n}{n}$ and $\gamma = 1-o(1) \geq 1/2$ (the $o(1)$ stems
from the fact the we have to ignore the distance $x_1 - x_0$ and even $x_{n+1} - x_n$)
and $s_i = \ell_{2i}$
and $m=n/2$ to obtain a lower bound of
$\frac n2 \cdot \left(\frac{1}{4\log n}\right)^p$
for the ratio of the fraction in \eqref{fraction}.

This yields
\[
\probab\left(\meven < \bigl(1-n^{-1/4}\bigr) \cdot \expected(\meven)\right) \leq
\exp\left(-\Omega\left(\frac{\sqrt n}{(\log n)^p}\right)\right).
\]
In the same way, we can show that
\[
\probab\left(\peven > \bigl(1+n^{-1/4}\bigr) \cdot \expected(\peven)\right) \leq \exp\left( - \Omega(n^{1/4})\right).
\]
Furthermore, the same analysis can be done for $\podd$ and $\modd$.

Thus, both the power assignment obtained from the MST
and the MST are concentrated around their means, their means
are at a factor of $2- 2^{-p}$ for large $n$, and the MST provides
a lower bound for the optimal PA.
\end{proof}

The high probability bounds for the bound of $2-2^{-p}$ of the approximation ratio of the power assignment
obtained from the spanning tree together with the observation that in case of any ``failure''
event we can use the worst-case approximation ratio of $2$ yields the following corollary.

\begin{corollary}
\label{cor:d1}
The expected approximation ratio of the MST heuristic is at most $2-2^{-p} + o(1)$.
\end{corollary}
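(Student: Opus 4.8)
The plan is to bound the expected approximation ratio $\expected(\pt/\pa)$ by splitting the probability space into a ``good'' event, on which $\pt/\pa$ is at most $2-2^{-p}+o(1)$ deterministically, and its complement, on which we fall back on the worst-case guarantee $\pt \leq 2\pa$. Since $\mst \leq \pa$, we have $\pt/\pa \leq \pt/\mst$, so it suffices to control $\pt/\mst$ on the good event. If that event has probability $1-o(1)$, then
\[
\expected\left(\frac{\pt}{\pa}\right) \leq \bigl(2-2^{-p}+o(1)\bigr) + 2\cdot o(1) = 2-2^{-p}+o(1),
\]
which is the claim.

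First I would define the good event as the intersection of the event from Lemma~\ref{lem:emptyball} (no empty interval of length $c\log n/n$) with the four concentration events established in the proof of Theorem~\ref{thm:d1}. Using the decompositions $\mst = M^\star + M'$ and $\pt = P^\star + P'$ together with Lemma~\ref{lem:negligible}, the boundary terms $M'$ and $P'$ contribute only a multiplicative factor $1+o(1)$, so it remains to bound $P^\star/M^\star = (\peven+\podd)/(\meven+\modd)$. On the concentration events, $\meven$ and $\modd$ are bounded below by $(1-n^{-1/4})$ times their means and $\peven$ and $\podd$ are bounded above by $(1+n^{-1/4})$ times their means, while the pointwise identity $\expected(P_i) = (2-2^{-p})\expected(M_i)$ established before Theorem~\ref{thm:d1} gives $\expected(\peven)=(2-2^{-p})\expected(\meven)$ and $\expected(\podd)=(2-2^{-p})\expected(\modd)$.

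The main obstacle is that concentration of the even-indexed sums requires conditioning on the odd points, whereas concentration of the odd-indexed sums requires conditioning on the even points, so the relevant means $\expected(\meven)$ and $\expected(\modd)$ are themselves random and are defined with respect to different conditionings. The observation that resolves this is that the factor $2-2^{-p}$ is \emph{identical} for even and odd indices, so these random means cancel in the ratio:
\[
\frac{P^\star}{M^\star} \leq \frac{(1+n^{-1/4})(2-2^{-p})\bigl(\expected(\meven)+\expected(\modd)\bigr)}{(1-n^{-1/4})\bigl(\expected(\meven)+\expected(\modd)\bigr)} = (2-2^{-p})\cdot\frac{1+n^{-1/4}}{1-n^{-1/4}},
\]
which equals $(2-2^{-p})(1+o(1))$ regardless of the actual values of the conditional means. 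To combine the two conditionings rigorously, I would note that each of the four concentration bounds holds with probability $1-o(1)$ as an event in the full probability space (after integrating out the conditioned coordinates), so a union bound shows the good event has probability $1-o(1)$. Folding the $1+o(1)$ factor coming from $M'$ and $P'$ into the estimate then yields $\pt/\pa \leq \pt/\mst \leq 2-2^{-p}+o(1)$ on the good event, and the split above completes the argument.
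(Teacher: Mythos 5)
Your proposal is correct and follows essentially the same route as the paper: the paper's proof of Corollary~\ref{cor:d1} is exactly the split into the high-probability event on which the concentration bounds from Theorem~\ref{thm:d1} force $\pt/\mst \leq (2-2^{-p})(1+o(1))$, plus the worst-case ratio of $2$ on the failure event, which contributes only $o(1)$ to the expectation. Your additional observation that the random conditional means $\expected(\meven)$ and $\expected(\modd)$ cancel in the ratio because the factor $2-2^{-p}$ is the same for both parities is a detail the paper leaves implicit, but it is the right way to make the argument precise.
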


\section{Conclusions and Open Problems}
\label{sec:concl}
We have proved complete convergence of Euclidean functionals that are \emph{typically smooth} (Definition~\ref{def:typsmooth})
for the case that the power $p$ is larger than the dimension $d$. The case $p > d$ appears naturally in the case of transmission questions for
wireless networks.

As examples, we have obtained complete convergence for the $\mst$ (minimum-spanning tree) and the $\pa$ (power assignment)
functional.
To prove this, we have used a recent concentration of measure result by Warnke~\cite{Warnke}.
His strong concentration inequality might be of independent interest to the algorithms community.
As a technical challenge, we have had to deal with the fact that the degree of an optimal power assignment graph can be unbounded.

To conclude this paper, let us mention some problems for further research:
\begin{enumerate}
\setlength{\itemsep}{0mm}
\item Is it possible to prove complete convergence of other functionals for $p \geq d$? The most prominent
one would be the traveling salesman problem (TSP). However, we are not aware that the TSP is smooth in mean,
\item Concerning the average-case approximation ratio of the MST heuristic, we only proved that the approximation
ratio is smaller than $2$. Only for the case $d=1$, we provided an explicit upper bound for the approximation ratio.
   Is it possible to provide an improved approximation ratio as a function of $d$ and $p$ for general $d$?
\item Can Rhee's isoperimetric inequality~\cite{Rhee:Subadditive:1993} be adapted to work for $p \geq d$?
   Rhee's inequality can be used to obtain convergence for the case that the points are not identically distributed, and has for instance been
   used for a smoothed analysis of Euclidean functionals~\cite{BlaeserEA:Partitioning:2013}. (Smoothed analysis has been
   introduced by Spielman and Teng to explain the performance of the simplex method~\cite{SpielmanTeng:SmoothedAnalysisWhy:2004}.
   We refer to two surveys for an overview~\cite{SpielmanTeng:CACM:2009,MantheyRoeglin:SmoothedSurvey:2011}.)
\item Can our findings about power assignments be generalized to other settings? For instance, to get a more reliable network, we may want to
have higher connectivity. Another issue would be to take into account interference of signals or noise such as the SINR or related models.
\end{enumerate}

\section*{Acknowledgment}
We thank Samuel Kutin, Lutz Warnke, and Joseph Yukich for fruitful discussions.

% \bibliographystyle{plain}
% \bibliography{abbrev,books,papers,bodo,temp,theses,chapters}

\end{document}